\numberwithin{equation}{section}
\newtheorem{thm}{Theorem}[section]
\newtheorem{prop}[thm]{Proposition}
\newtheorem{lem}[thm]{Lemma}
\newtheorem{cor}[thm]{Corollary}
\theoremstyle{definition}
\newtheorem{defn}[thm]{Definition}
\newtheorem{rem}[thm]{Remark}
\newtheorem{ex}[thm]{Example}
\begin{document}

\title[On the eigenfunctions for the multi-species $q$-Boson system]
{On the eigenfunctions for the multi-species $q$-Boson system}
\author{Yoshihiro Takeyama}
\address{Division of Mathematics, 
Faculty of Pure and Applied Sciences, 
University of Tsukuba, Tsukuba, Ibaraki 305-8571, Japan}
\email{takeyama@math.tsukuba.ac.jp}

\begin{abstract}
In a previous paper a multi-species version of the $q$-Boson stochastic particle system is introduced and 
the eigenfunctions of its backward generator are constructed by using a representation of 
the Hecke algebra. 
In this article  we prove a formula which expresses the eigenfunctions 
by means of the $q$-deformed bosonic operators, 
which are constructed from 
the $L$-operator of higher rank found in the recent work by Garbali, de Gier and Wheeler.  
The $L$-operator is obtained from 
the universal $R$-matrix of the quantum affine algebra of type $A_{r}^{(1)}$ 
by the use of the $q$-oscillator representation. 
Thus our formula may be regarded as a bridge between two approaches to studying 
integrable stochastic systems by means of the quantum affine algebra and the affine Hecke algebra. 
\end{abstract}

\maketitle

{\small
\noindent{\it Key words.}
affine Hecke algebra, quantum group, integrable system, stochastic process. \\
\noindent{\it 2010 Math. Subj. Class.} 
17B80, 81R12, 60K35.}
\medskip 

\setcounter{section}{0}
\setcounter{equation}{0}


\section{Introduction}

In this article we prove a formula which expresses 
the eigenfunctions of the backward generator of 
a multi-species version of the $q$-Boson system  
(which is constructed in the previous paper \cite{T}) 
by means of $q$-deformed bosonic operators. 

The $q$-Boson system is a stochastic particle system 
on the one-dimensional discrete space with continuous time \cite{SW}. 
It is a zero-range process for bosonic particles, 
where one particle may hop to the left nearest-neighboring site independently for each site. 
It is integrable in the following sense. 
There exists the $L$-operator whose entries are $q$-deformed bosonic operators and 
the transfer matrix, which is the trace of a product of the $L$-operators, 
generates mutually commuting operators, one of which is the transition rate matrix 
of the $q$-Boson system with periodic boundary condition. 

A multi-species version of the $q$-Boson system is introduced in \cite{T}.   
It is also a zero-range process for bosonic particles, 
though each particle is colored with a positive integer and  
the transition rate depends on the color. 
As we will see in Section \ref{subsec:tranfer-matrix} below,  
using the $L$-operator of higher rank found in the recent work of Garbali, de Gier and Wheeler \cite{GGW},  
we can construct a family of commuting operators 
which includes the transition rate matrix of the multi-species $q$-Boson system with periodic boundary. 
In this sense the system is also integrable. 
Recently it has been clarified that it is related to a representation theory 
of the quantum affine algebra \cite{Kuan, KMMO}. 

Originally, the multi-species $q$-Boson system was constructed from 
a representation of a deformation of the affine Hecke algebra of type $GL$. 
In this algebraic setting we can easily construct as a byproduct 
the eigenfunctions of the backward generator of the system. 
They are expressed by means of an action of the Hecke algebra of type $A_{k-1}$ 
on the vector space $(\mathbb{C}^{r})^{\otimes k}$, where 
$k$ is the total number of particles and $r$ is the number of species of particles. 
When $r=1$ they are equal to the eigenfunctions found by 
Borodin, Corwin, Petrov and Sasamoto \cite{BCPS}. 
A similar formula for the configuration probability in the multi-species asymmetric exclusion process 
is obtained by Tracy and Widom \cite{TW}. 

The purpose of this paper is to prove a formula which expresses the eigenfunctions 
for the multi-species $q$-Boson system by means of $q$-deformed bosonic operators. 
The formula is of the following form 
\begin{align*}
\lim_{\begin{subarray}{c} M \to \infty \\ M' \to -\infty \end{subarray}}
\prod_{i=1}^{k}\frac{z_{i}^{M'-1}}{(1+z_{i})^{M}} 
\langle 
\prod_{1\le i \le k}^{\curvearrowright}C_{\mu_{i}}^{[M', M]}(z_{i})
\prod_{1\le i \le k}\beta_{\nu_{i}, x_{i}}^{*}
\rangle_{[M', M]}.     
\end{align*} 
It is a function on the set of configurations of $k$ particles with $r$ species on $\mathbb{Z}$, 
and $(x_{1}, \ldots , x_{k})$ and $(\nu_{1}, \ldots , \nu_{k})$ signify 
the positions and the colors of the particles, respectively. 
The operator $C_{a}^{[M', M]}(z) \, (1\le a \le r)$ is a $q$-deformed bosonic operator produced from 
an ordered product of the $L$-operators assigned to 
the sites on the interval $[M', M]\subset \mathbb{Z}$, 
and $\beta_{a, i}^{*}$ is the creation operator assigned to the $i$-th site. 
The eigenfunctions are parametrized by two tuples $(z_{1}, \ldots , z_{k}) \in \mathbb{C}^{k}$ and 
$(\mu_{1}, \ldots , \mu_{k}) \in \{1, 2, \ldots , r\}^{k}$ such that 
the number of $\mu_{i}$'s satisfying $\mu_{i}=a$ is equal to that of particles with color $a$ 
($1\le a \le r$). 
The symbol $\langle \quad \rangle_{[M', M]}$ represents the matrix element 
between the vacuum vector and its dual on the interval $[M', M]$. 
 
As shown in \cite{GGW}, the $L$-operator of higher rank, which is a source of 
the $q$-deformed bosonic operator $C_{a}^{[M', M]}(z)$, can be constructed from the universal $R$-matrix 
of the quantum affine algebra of type $A_{r}^{(1)}$ 
by the use of the $q$-oscillator representation. 
On the other hand, the eigenfunctions are constructed by using 
a representation of the deformed affine Hecke algebra in \cite{T}. 
Thus our formula above may be regarded as a bridge between two approaches to studying 
integrable stochastic systems by means of the quantum affine algebra and the affine Hecke algebra. 

The paper is organized as follows. 
In Section \ref{sec:q-Boson} we give the definition of the multi-species $q$-Boson system. 
In Section \ref{sec:eigen} the algebraic construction of the eigenfunctions is explained following \cite{T}. 
In Section \ref{sec:q-boson-alg} we introduce the algebra of $q$-deformed bosons and 
its Fock representation, and discuss the integrability of 
the multi-species $q$-Boson system with periodic boundary. 
In Section \ref{sec:main} we state and prove the main theorem.  

Throughout this paper we denote by 
$\prod_{i \in I}^{\curvearrowright}A_{i}$ or $\prod_{a \le i \le b}^{\curvearrowright}A_{i}$
the ordered product $A_{a}A_{a+1}\cdots A_{b-1}A_{b}$ of non-commutative elements $A_{i}$ labeled with 
integers in the interval $I=[a, b]$. 
Similarly we write $\prod_{i \in I}^{\curvearrowleft}A_{i}$ or $\prod_{a \le i \le b}^{\curvearrowleft}A_{i}$  
for the reverse ordered product $A_{b}A_{b-1}\cdots A_{a+1}A_{a}$. 
We often abbreviate the sequence of positive integers $(a, \ldots , a)$ of length $m$ to $a^m$. 
For example, we write $(3, 3, 2, 2, 2, 1)$ as $(3^{2}, 2^{3}, 1^{1})$.


\section{The multi-species $q$-Boson system}\label{sec:q-Boson}

Hereafter we fix a real number $0<q<1$ and a positive integer $r$. 
The multi-species $q$-Boson system is a stochastic particle system 
on one-dimensional discrete space with continuous time. 
We consider the system on the infinite lattice $\mathbb{Z}$. 
The number of particles is finite and 
each particle is colored with a positive integer which is less than or equal to $r$.  
The particles can occupy the same site simultaneously. 
One particle may move to the left nearest-neighboring site independently for each site. 
The transition rate is given by
\begin{align}
\frac{1-q^{2n_{a}}}{1-q^{2}}q^{2\sum_{p=a+1}^{r}n_{p}},  
\label{eq:rate}
\end{align}
where $a$ is the color of the moving particle and 
$n_{p} \, (1\le p \le r)$ is the number of particles with color $p$ in the cluster from which 
the moving particle leaves. 
For example the rate of the transition described in Figure \ref{fig:fig1} is equal to 
$(1-q^{2\cdot 2})q^{2(1+1)}/(1-q^2)$ because $a=2, n_{2}=2$ and $n_{3}=n_{4}=1$.  

\begin{figure}[htbp]
{\unitlength 0.1in%
\begin{picture}(30.0000,15.0000)(4.0000,-18.0000)%
%
\special{pn 8}%
\special{pa 400 1800}%
\special{pa 1600 1800}%
\special{fp}%
%
\special{pn 8}%
\special{ar 800 1600 100 100 0.0000000 6.2831853}%
\put(8.0000,-16.0000){\makebox(0,0){$\scriptsize{1}$}}%
%
\special{pn 8}%
\special{ar 1190 1600 100 100 0.0000000 6.2831853}%
\put(11.9000,-16.0000){\makebox(0,0){$\scriptsize{4}$}}%
%
\special{pn 8}%
\special{ar 1190 1300 100 100 0.0000000 6.2831853}%
\put(11.9000,-13.0000){\makebox(0,0){$\scriptsize{3}$}}%
%
\special{pn 8}%
\special{ar 1190 1000 100 100 0.0000000 6.2831853}%
\put(11.9000,-10.0000){\makebox(0,0){$\scriptsize{2}$}}%
%
\special{pn 8}%
\special{ar 1190 700 100 100 0.0000000 6.2831853}%
\put(11.9000,-7.0000){\makebox(0,0){$\scriptsize{2}$}}%
%
\special{pn 8}%
\special{ar 1190 400 100 100 0.0000000 6.2831853}%
\put(11.9000,-4.0000){\makebox(0,0){$\scriptsize{1}$}}%
%
\special{pn 8}%
\special{pa 2200 1800}%
\special{pa 3400 1800}%
\special{fp}%
%
\special{pn 8}%
\special{ar 2600 1600 100 100 0.0000000 6.2831853}%
\put(26.0000,-16.0000){\makebox(0,0){$\scriptsize{2}$}}%
%
\special{pn 8}%
\special{ar 2990 1600 100 100 0.0000000 6.2831853}%
\put(29.9000,-16.0000){\makebox(0,0){$\scriptsize{4}$}}%
%
\special{pn 8}%
\special{ar 2990 1300 100 100 0.0000000 6.2831853}%
\put(29.9000,-13.0000){\makebox(0,0){$\scriptsize{3}$}}%
%
\special{pn 8}%
\special{ar 2990 1000 100 100 0.0000000 6.2831853}%
\put(29.9000,-10.0000){\makebox(0,0){$\scriptsize{2}$}}%
%
\special{pn 8}%
\special{ar 2590 1300 100 100 0.0000000 6.2831853}%
\put(25.9000,-13.0000){\makebox(0,0){$\scriptsize{1}$}}%
%
\special{pn 8}%
\special{ar 2990 700 100 100 0.0000000 6.2831853}%
\put(29.9000,-7.0000){\makebox(0,0){$\scriptsize{1}$}}%
%
\special{pn 13}%
\special{pa 1800 1000}%
\special{pa 2000 1000}%
\special{fp}%
\special{sh 1}%
\special{pa 2000 1000}%
\special{pa 1933 980}%
\special{pa 1947 1000}%
\special{pa 1933 1020}%
\special{pa 2000 1000}%
\special{fp}%
\end{picture}}%
\caption{}
\label{fig:fig1}
\end{figure}

We denote by $k_{a} \, (1\le a \le r)$ the number of particles with color $a$, 
and by $k=\sum_{a=1}^{r}k_{a}$ the total number of particles. 
They are conserved in the process. 
We introduce two sets 
\begin{align*}
I_{k_{1}, \ldots , k_{r}}=\left\{ 
\vec{\nu}=(\nu_{1}, \ldots , \nu_{k}) \in \{1, \ldots , r\}^{k} \, | \, 
\#\{j \, | \nu_{j}=a\}=k_{a} \, (1\le \forall{a} \le r)
\right\}  
\end{align*}
and 
\begin{align*}
L_{k}^{+}=\left\{ 
\vec{x}=(x_{1}, \ldots , x_{k}) \in \mathbb{Z}^{k} \, | \, x_{1} \ge \cdots \ge x_{k}
\right\},    
\end{align*}
and define 
\begin{align*}
\mathcal{S}_{k_{1}, \ldots , k_{r}}=\{ (\vec{x}, \vec{\nu}) \in L_{k}^{+}\times I_{k_{1}, \ldots , k_{r}} \, | \, 
\hbox{For any $1\le i \le k$, if $x_{i}=x_{i+1}$ then $\nu_{i}\le \nu_{i+1}$.}
\}.  
\end{align*}
We identify $\mathcal{S}_{k_{1}, \ldots , k_{r}}$ 
with the set of configurations of $k_{a}$ bosonic particles with color $a \, (1\le a \le r)$ on $\mathbb{Z}$ 
by assigning to $(\vec{x}, \vec{\nu})$ the configuration 
such that the particles with color $\nu_{1}, \ldots , \nu_{k}$ are on the sites $x_{1}, \ldots , x_{k}$, 
respectively. 
For example, the configuration in Figure \ref{fig:fig2} corresponds to the pair of 
$\vec{x}=(2, 2, 2, 0, -1, -1)$ and $\vec{\nu}=(1, 1, 3, 5, 2, 4)$. 

\begin{figure}[htbp]
{\unitlength 0.1in%
\begin{picture}(20.1000,10.3500)(3.9000,-11.3500)%
%
\special{pn 8}%
\special{ar 790 800 100 100 0.0000000 6.2831853}%
\put(7.9000,-8.0000){\makebox(0,0){$\scriptsize{4}$}}%
%
\special{pn 8}%
\special{ar 790 500 100 100 0.0000000 6.2831853}%
\put(7.9000,-5.0000){\makebox(0,0){$\scriptsize{2}$}}%
%
\special{pn 8}%
\special{ar 1190 800 100 100 0.0000000 6.2831853}%
\put(11.9000,-8.0000){\makebox(0,0){$\scriptsize{5}$}}%
%
\special{pn 8}%
\special{ar 1990 500 100 100 0.0000000 6.2831853}%
\put(19.9000,-5.0000){\makebox(0,0){$\scriptsize{1}$}}%
%
\special{pn 8}%
\special{pa 400 1000}%
\special{pa 2400 1000}%
\special{fp}%
\put(8.0000,-12.0000){\makebox(0,0){$\scriptsize{-1}$}}%
\put(12.0000,-12.0000){\makebox(0,0){$\scriptsize{0}$}}%
\put(16.0000,-12.0000){\makebox(0,0){$\scriptsize{1}$}}%
\put(20.0000,-12.0000){\makebox(0,0){$\scriptsize{2}$}}%
%
\special{pn 8}%
\special{ar 1990 800 100 100 0.0000000 6.2831853}%
\put(19.9000,-8.0000){\makebox(0,0){$\scriptsize{3}$}}%
%
\special{pn 8}%
\special{ar 1990 200 100 100 0.0000000 6.2831853}%
\put(19.9000,-2.0000){\makebox(0,0){$\scriptsize{1}$}}%
\end{picture}}%
\caption{}
\label{fig:fig2}
\end{figure}

Denote by $F(\mathcal{S}_{k_{1}, \ldots , k_{r}})$ the set of $\mathbb{C}$-valued functions on 
$\mathcal{S}_{k_{1}, \ldots , k_{r}}$. 
The backward generator $\mathcal{H}$ of the multi-species $q$-Boson system is the linear operator 
on $F(\mathcal{S}_{k_{1}, \ldots , k_{r}})$ defined by 
\begin{align*}
(\mathcal{H}h)(\vec{x}, \vec{\nu})=
\sum_{\begin{subarray}{c} (\vec{y}, \vec{\mu}) \in \mathcal{S}_{k_{1}, \ldots , k_{r}} \\ 
(\vec{y}, \vec{\mu})\not=(\vec{x}, \vec{\nu}) \end{subarray}}
q(\vec{x}, \vec{\nu} | \vec{y}, \vec{\mu})
\left\{h(\vec{y}, \vec{\mu})-h(\vec{x}, \vec{\nu})\right\} \qquad (h \in F(\mathcal{S}_{k_{1}, \ldots , k_{r}})), 
\end{align*}
where $q(\vec{x}, \vec{\nu} | \vec{y}, \vec{\mu})$ is the transition rate \eqref{eq:rate} from 
$(\vec{x}, \vec{\nu})$ to $(\vec{y}, \vec{\mu})$. 


\section{Algebraic construction of the eigenfunctions}\label{sec:eigen}

In this section we recall the construction of the eigenfunctions 
of the backward generator $\mathcal{H}$ following \cite{T}. 
There, the eigenfunctions are expressed in terms of the root system and the Weyl group of type $A_{k-1}$. 
Instead, here we represent them in a combinatorial way.

Let $U=\oplus_{a=1}^{r}\mathbb{C}u_{a}$ be the $r$-dimensional vector space. 
We denote by $(U^{\otimes k})_{k_{1}, \ldots , k_{r}}$ the subspace of $U^{\otimes k}$ 
spanned by the monomial vectors $u_{\nu_{1}} \otimes \cdots \otimes u_{\nu_{k}}$ with 
$(\nu_{1}, \ldots , \nu_{k}) \in I_{k_{1}, \ldots , k_{r}}$. 
For a function $\gamma$ on $L_{k}^{+}$ taking values in $(U^{\otimes k})_{k_{1}, \ldots , k_{r}}$, 
we define $\mathbb{C}$-valued functions $\gamma_{\vec{\nu}} \, (\vec{\nu} \in I_{k_{1}, \ldots , k_{r}})$
on $L_{k}^{+}$ by the following relation: 
\begin{align*}
\gamma(\vec{x})=\sum_{\vec{\nu} \in I_{k_{1}, \ldots , k_{r}}}q^{t(\vec{\nu})}\gamma_{\vec{\nu}}(\vec{x}) \,  
u_{\nu_{1}} \otimes \cdots \otimes u_{\nu_{k}} \qquad (\vec{x} \in L_{k}^{+}),  
\end{align*}
where $t(\vec{\nu})$ is the inversion number of $\vec{\nu}$: 
\begin{align}
t(\vec{\nu})=\#\{ (i, j) \, | \, 1\le i<j \le k \,\, \mathrm{and} \,\, \nu_{i}>\nu_{j} \}.  
\label{eq:def-inversion-number}
\end{align}

Denote by $\mathcal{F}(L_{k}^{+}, (U^{\otimes k})_{k_{1}, \ldots , k_{r}})$ 
the space of functions $\gamma: L_{k}^{+} \to (U^{\otimes k})_{k_{1}, \ldots , k_{r}}$ 
such that $\gamma_{\ldots , \nu_{i}, \nu_{i+1}, \ldots}=\gamma_{\ldots , \nu_{i+1}, \nu_{i}, \ldots}$ 
on the diagonal set $\{\vec{x} \in L_{k}^{+} \, | \, x_{i}=x_{i+1}\}$ for any $1\le i<k$.  
Then there exists an isomorphism 
$\varphi: F(\mathcal{S}_{k_{1}, \ldots , k_{r}}) \to
\mathcal{F}(L_{k}^{+}, (U^{\otimes k})_{k_{1}, \ldots , k_{r}})$ 
whose inverse is given by 
\begin{align}
\varphi^{-1}: 
\mathcal{F}(L_{k}^{+}, (U^{\otimes k})_{k_{1}, \ldots , k_{r}}) \to 
F(\mathcal{S}_{k_{1}, \ldots , k_{r}}), \qquad 
(\varphi^{-1} \gamma)(\vec{x}, \vec{\nu})=\gamma_{\vec{\nu}}(\vec{x}). 
\label{eq:def-Phi}
\end{align}

We define the linear operator $R \in \mathrm{End}(U^{\otimes 2})$ by 
\begin{align*}
R(u_{a}\otimes u_{b})=\left\{ 
\begin{array}{ll}
q\, u_{b} \otimes u_{a} & (a>b) \\
u_{a} \otimes u_{a} & (a=b) \\ 
(1-q^{2})u_{a} \otimes u_{b}+q\, u_{b}\otimes u_{a} & (a<b). 
\end{array}
\right. 
\end{align*}
For $1\le i<k$, we denote by $R_{i}$ 
the linear operator acting on the tensor product of the $i$-th and the $(i+1)$-th components  
of $U^{\otimes k}$ as the operator $R$. 
Then it holds that 
\begin{align*}
& 
(R_{i}-1)(R_{i}+q^{2})=0 \quad (1\le i <k), \\ 
& 
R_{i}R_{i+1}R_{i}=R_{i+1}R_{i}R_{i+1} \quad (1\le i \le k-2), \\  
& 
R_{i}R_{j}=R_{j}R_{i} \quad (|i-j|\ge 2). 
\end{align*}
Hence the operators $R_{i} \, (1\le i<k)$ define an action of 
the Hecke algebra of type $A_{k-1}$ \cite{J}.  
Note that the subspace $(U^{\otimes k})_{k_{1}, \ldots , k_{r}}$ is invariant under the action. 

Hereafter we set 
\begin{align}
f(z, w)=\frac{z-q^2 w}{z-w}, \qquad 
g(z, w)=-\frac{(1-q^{2})z}{z-w}=f(w, z)-1. 
\label{eq:def-fg}
\end{align} 
For $1\le i<k$ and $z, w \in \mathbb{C}$, set 
\begin{align}
Y_{i}(z, w)=\frac{1}{f(z, w)}\left(R_{i}+g(z, w)\right) \in \mathrm{End}(U^{\otimes k}).   
\label{eq:def-Y_{i}}
\end{align}
For $\tau \in \mathfrak{S}_{k}$ and $\vec{z}=(z_{1}, \ldots , z_{k}) \in \mathbb{C}^{k}$, 
the linear operator $\phi(\tau; \vec{z})$ acting on $U^{\otimes k}$ 
is uniquely determined from the following properties: 
\begin{align*}
\phi(1; \vec{z})=1, \quad 
\phi(\sigma_{i}\tau; \vec{z})=Y_{i}(z_{\tau^{-1}(i)}, z_{\tau^{-1}(i+1)})\phi(\tau; \vec{z}),  
\end{align*} 
where $\sigma_{i}=(i, i+1) \, (1\le i<k)$ is the transposition. 

\begin{lem}\label{lem:phi-decompose}
For $\tau, \tau' \in \mathfrak{S}_{k}$ and $\vec{z}=(z_{1}, \ldots , z_{k}) \in \mathbb{C}^{k}$, 
it holds that 
\begin{align*}
\phi(\tau'\tau; \vec{z})=\phi(\tau'; z_{\tau^{-1}(1)}, \ldots , z_{\tau^{-1}(k)}) \phi(\tau; \vec{z}).  
\end{align*}
\end{lem}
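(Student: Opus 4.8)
The plan is to argue by induction on the length $\ell(\tau')$ of $\tau'$ (the number of simple transpositions in a reduced expression). Throughout I write $\vec{w}=(z_{\tau^{-1}(1)}, \ldots , z_{\tau^{-1}(k)})$, so that $w_{j}=z_{\tau^{-1}(j)}$ and the asserted identity reads $\phi(\tau'\tau; \vec{z})=\phi(\tau'; \vec{w})\,\phi(\tau; \vec{z})$ for all $\tau \in \mathfrak{S}_{k}$ and all $\vec{z}\in\mathbb{C}^{k}$. The base case $\tau'=1$ is immediate, since $\phi(1; \vec{w})=1$ forces both sides to equal $\phi(\tau; \vec{z})$.

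For the inductive step I would write $\tau'=\sigma_{i}\rho$ with $\ell(\rho)=\ell(\tau')-1$, assume the identity holds for $\rho$, and deduce it for $\tau'$. The first ingredient is that the second defining relation for $\phi$ is an identity valid for every permutation, so applying it with the permutation $\rho\tau$ gives
\begin{align*}
\phi(\sigma_{i}\rho\tau; \vec{z})
=Y_{i}\!\left(z_{(\rho\tau)^{-1}(i)},\, z_{(\rho\tau)^{-1}(i+1)}\right)\phi(\rho\tau; \vec{z}).
\end{align*}
The second ingredient is the elementary relation $(\rho\tau)^{-1}=\tau^{-1}\rho^{-1}$, which yields $z_{(\rho\tau)^{-1}(i)}=z_{\tau^{-1}(\rho^{-1}(i))}=w_{\rho^{-1}(i)}$, and likewise for $i+1$; thus the spectral parameters appearing in $Y_{i}$ above are exactly $w_{\rho^{-1}(i)}$ and $w_{\rho^{-1}(i+1)}$.

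Combining these two observations with the inductive hypothesis $\phi(\rho\tau; \vec{z})=\phi(\rho; \vec{w})\,\phi(\tau; \vec{z})$ produces
\begin{align*}
\phi(\sigma_{i}\rho\tau; \vec{z})
=Y_{i}\!\left(w_{\rho^{-1}(i)},\, w_{\rho^{-1}(i+1)}\right)\phi(\rho; \vec{w})\,\phi(\tau; \vec{z}).
\end{align*}
Finally I would recognize the factor $Y_{i}(w_{\rho^{-1}(i)}, w_{\rho^{-1}(i+1)})\,\phi(\rho; \vec{w})$ as precisely $\phi(\sigma_{i}\rho; \vec{w})$, which is the defining recursion for $\phi$ applied with base sequence $\vec{w}$ and permutation $\rho$. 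Since $\sigma_{i}\rho=\tau'$, this is the desired identity and the induction closes.

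The computation is short; the only point requiring care is the bookkeeping of spectral parameters, namely verifying that the arguments $z_{(\rho\tau)^{-1}(i)}$ handed to $Y_{i}$ by the recursion at base $\vec{z}$ coincide with the arguments $w_{\rho^{-1}(i)}$ demanded by the recursion at base $\vec{w}$. This matching is supplied exactly by $(\rho\tau)^{-1}=\tau^{-1}\rho^{-1}$, and it is the conceptual content of the statement: the lemma is a cocycle identity in which $\tau$ acts on the base point through the coordinate permutation $\vec{z}\mapsto\vec{w}$. I note that I use the defining recursion as an identity valid for \emph{all} permutations when applying it to $\rho\tau$; this is legitimate because $\phi$ is, by hypothesis, uniquely determined by those relations, so no reduced-word or length-monotonicity assumption is needed.
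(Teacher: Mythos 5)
Your proof is correct and is essentially the paper's own argument written out in full: the paper's proof is the single sentence ``It follows from the properties which define the operator $\phi(\tau;\vec{z})$,'' and your induction on $\ell(\tau')$ with the bookkeeping identity $(\rho\tau)^{-1}=\tau^{-1}\rho^{-1}$ is exactly the intended unwinding of that remark.
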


\begin{proof}
It follows from the properties which define the operator $\phi(\tau; \vec{z})$.  
\end{proof}

Now we write down the formula of the eigenfunctions. 

\begin{defn}
For $\vec{\mu}=(\mu_{1}, \ldots , \mu_{k}) \in I_{k_{1}, \ldots , k_{r}}$ and 
$\vec{z}=(z_{1}, \ldots , z_{k}) \in \mathbb{C}^{k}$, 
we define the function $h^{\vec{\mu}}_{\vec{z}}$ on $L_{k}^{+}$ taking values in $U^{\otimes k}$ by 
\begin{align*}
h_{\vec{z}}^{\vec{\mu}}(\vec{x})=\prod_{1\le i<j\le k}f(z_{i}, z_{j}) 
\sum_{\tau \in \mathfrak{S}_{k}}
\prod_{i=1}^{k}\left(\frac{z_{\tau^{-1}(i)}}{1+z_{\tau^{-1}(i)}}\right)^{x_{i}}\phi(\tau; \vec{z})
(u_{\mu_{1}} \otimes \cdots \otimes u_{\mu_{k}}). 
\end{align*} 
\end{defn}

\begin{thm}\label{thm:h-eigen}\cite{T}
The function $h^{\vec{\mu}}_{\vec{z}}$ belongs to $\mathcal{F}(L_{k}^{+}, (U^{\otimes k})_{k_{1}, \ldots , k_{r}})$ 
and $\varphi^{-1} h^{\vec{\mu}}_{\vec{z}}$ is an eigenfunction of the backward generator $\mathcal{H}$ 
with the eigenvalue $\sum_{i=1}^{k}z_{i}^{-1}$. 
\end{thm}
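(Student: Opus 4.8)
The plan is to verify the two assertions by a coordinate Bethe ansatz computation, treating the prefactor $\prod_{i<j}f(z_i,z_j)$ as an inessential overall scalar. The first observation is that, because of the weights $q^{t(\vec\nu)}$ in the definition of the components $\gamma_{\vec\nu}$, membership of $h^{\vec\mu}_{\vec z}$ in $\mathcal{F}(L_k^+,(U^{\otimes k})_{k_1,\ldots,k_r})$ is equivalent to the eigenvector condition
\begin{align*}
R_i\, h^{\vec\mu}_{\vec z}(\vec x)=h^{\vec\mu}_{\vec z}(\vec x)\qquad\text{on }\{\vec x\in L_k^+\mid x_i=x_{i+1}\},\quad 1\le i<k.
\end{align*}
Indeed, for $a\ne b$ the operator $R$ restricted to the span of $u_a\otimes u_b$ and $u_b\otimes u_a$ has a one-dimensional eigenspace of eigenvalue $1$, spanned by $u_c\otimes u_d+q\,u_d\otimes u_c$ with $c<d$; comparing the coefficients $q^{t(\vec\nu)}$ and $q^{t(\vec\nu')}$ of the two monomials $u_{\vec\nu},u_{\vec\nu'}$ related by the transposition of the entries at positions $i,i+1$ shows that the symmetry $\gamma_{\vec\nu}=\gamma_{\vec\nu'}$ holds exactly when the corresponding part of $\gamma(\vec x)$ lies in this eigenline, while the case $a=b$ is automatic since $R(u_a\otimes u_a)=u_a\otimes u_a$. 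Membership in $(U^{\otimes k})_{k_1,\ldots,k_r}$ itself is immediate, as that subspace is Hecke-invariant and contains $u_{\mu_1}\otimes\cdots\otimes u_{\mu_k}$.

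To prove the eigenvector condition I would reindex the sum over $\mathfrak{S}_k$ by pairing $\tau$ with $\sigma_i\tau$. Writing $a=\tau^{-1}(i)$ and $b=\tau^{-1}(i+1)$, the defining relation of $\phi$ together with \eqref{eq:def-Y_{i}} yields $R_i\phi(\tau;\vec z)=f(z_a,z_b)\phi(\sigma_i\tau;\vec z)-g(z_a,z_b)\phi(\tau;\vec z)$. On the diagonal $x_i=x_{i+1}$ the plane-wave factor $\prod_j(z_{\tau^{-1}(j)}/(1+z_{\tau^{-1}(j)}))^{x_j}$ is invariant under $\tau\mapsto\sigma_i\tau$, so after relabeling the first group of terms the action of $R_i$ multiplies the $\tau$-term by $f(z_b,z_a)-g(z_a,z_b)$. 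The identity $g(z,w)=f(w,z)-1$ from \eqref{eq:def-fg} makes this factor equal to $1$, which gives $R_i h^{\vec\mu}_{\vec z}=h^{\vec\mu}_{\vec z}$ on the diagonal and hence establishes the first assertion.

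For the eigenvalue equation I would first record the telescoping identity $\sum_{a=1}^{r}\frac{1-q^{2n_a}}{1-q^2}q^{2\sum_{p=a+1}^{r}n_p}=\frac{1-q^{2m}}{1-q^2}$ for a cluster of $m=\sum_a n_a$ particles, so that the total rate at which some particle leaves a site is color-independent. Away from the diagonals all $x_i$ are distinct, every particle is isolated and hops left with rate $1$, and since $\xi^{-1}-1=z^{-1}$ for $\xi=z/(1+z)$, each plane wave $\prod_j\xi_{\tau^{-1}(j)}^{x_j}$ is an eigenfunction of this free hopping with eigenvalue $\sum_j z_j^{-1}$, independently of $\tau$. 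It then remains to treat the configurations where a hop changes the cluster structure, where the color-dependent rates \eqref{eq:rate} and the reinsertion into the canonical non-decreasing color order of $\mathcal{S}_{k_1,\ldots,k_r}$ produce correction terms beyond the free answer. The claim is that these cancel: after transporting $\mathcal{H}$ through $\varphi$, the corrections are expressed via the operators $R_i$ acting on $h^{\vec\mu}_{\vec z}$ along the diagonals, where the relation $R_i h^{\vec\mu}_{\vec z}=h^{\vec\mu}_{\vec z}$ together with the quadratic relation $(R_i-1)(R_i+q^2)=0$ forces the cancellation. By Lemma \ref{lem:phi-decompose} and the braid relations $R_iR_{i+1}R_i=R_{i+1}R_iR_{i+1}$, it suffices to verify this for a single interacting pair, i.e. for $k=2$, clusters of larger size being handled consistently by Yang--Baxter.

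The main obstacle is precisely this last cancellation. One must match the arithmetic of the rate \eqref{eq:rate} --- the $q$-integer $(1-q^{2n_a})/(1-q^2)$ together with the color weight $q^{2\sum_{p=a+1}^{r}n_p}$ --- against the entries of $R$ and the inversion weights $q^{t(\vec\nu)}$, while keeping careful track of how the departing particle is reinserted into the neighboring cluster in non-decreasing color order. Organizing this bookkeeping so that the two-body computation closes, and confirming that the braid relations legitimately reduce clusters of arbitrary size to the case $k=2$, is the technical heart of the proof.
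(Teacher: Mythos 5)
You should first note that this paper does not actually prove Theorem \ref{thm:h-eigen}: it is imported from \cite{T}, where (as the introduction explains) the eigenfunctions arise as a byproduct of a representation of a deformed affine Hecke algebra, i.e.\ by an algebraic argument rather than by the coordinate Bethe ansatz you propose. So your route is necessarily different from the source's. The first half of your proposal is correct and complete: membership in $\mathcal{F}(L_{k}^{+},(U^{\otimes k})_{k_{1},\ldots,k_{r}})$ is indeed equivalent to $R_{i}h^{\vec{\mu}}_{\vec{z}}(\vec{x})=h^{\vec{\mu}}_{\vec{z}}(\vec{x})$ on $\{x_{i}=x_{i+1}\}$, since the $+1$-eigenline of $R$ on $\mathrm{span}\{u_{a}\otimes u_{b},u_{b}\otimes u_{a}\}$ with $a<b$ is spanned by $u_{a}\otimes u_{b}+q\,u_{b}\otimes u_{a}$, matching the ratio $q^{t(\vec{\nu}')}/q^{t(\vec{\nu})}=q$ of the inversion weights; and the pairing $\tau\leftrightarrow\sigma_{i}\tau$ together with $f(z_{b},z_{a})-g(z_{a},z_{b})=1$ (from \eqref{eq:def-fg}) closes that computation cleanly.

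The genuine gap is in the eigenvalue equation. You reduce everything to the assertion that the cluster corrections cancel, claim that the braid relations reduce the verification to a single interacting pair ($k=2$), and then explicitly defer this matching as ``the technical heart of the proof'' without carrying it out. That deferred step is the entire content of the theorem. For a zero-range process the generator is not a sum of two-body terms: the rate \eqref{eq:rate} depends on the full occupation vector $(n_{1},\ldots,n_{r})$ of the cluster the particle leaves, so the reduction to $k=2$ is not a formal consequence of $R_{i}R_{i+1}R_{i}=R_{i+1}R_{i}R_{i+1}$. One must prove (typically by induction on cluster size, using identities such as your telescoping sum and the explicit form of $R$) that on a cluster of arbitrary composition the action of $\mathcal{H}$, transported through $\varphi$ with its $q^{t(\vec{\nu})}$ weights and the reinsertion into non-decreasing color order, agrees with the free left-hopping generator supplemented by the boundary conditions $R_{i}h=h$. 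Until that computation is done, the claim that $\varphi^{-1}h^{\vec{\mu}}_{\vec{z}}$ is an eigenfunction with eigenvalue $\sum_{i=1}^{k}z_{i}^{-1}$ is not established; as written your argument proves only the first assertion of the theorem.
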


\begin{rem}
In \cite{T} the eigenfunctions are parametrized by $m \in U^{\otimes k}$ and 
$\lambda \in \left(\oplus_{i=1}^{k}\mathbb{R}v_{i}\right)^{*}$. 
They correspond to $\vec{\mu}$ and $\vec{z}$ respectively via the relations 
$m=u_{\mu_{1}} \otimes \cdots \otimes u_{\mu_{k}}$ and $e^{\lambda(v_{i})}=z_{i}/(1+z_{i})$.  
We also put the additional factor $\prod_{1\le i<j \le k}f(z_{i}, z_{j})$ to make 
the function $h_{\vec{z}}^{\vec{\mu}}$ have symmetry (see Proposition \ref{prop:symmetry} below). 
\end{rem}

\begin{rem}\label{rem:h-when-r=1}
When $r=1$, the vector space $U^{\otimes k}$ is one-dimensional and it holds that 
\begin{align*}
h_{\vec{z}}^{1^{k}}(\vec{x})=\sum_{\sigma \in \mathfrak{S}_{k}} 
\prod_{1\le i<j \le k}\!
\frac{z_{\sigma(i)}-q^2 z_{\sigma(j)}}{z_{\sigma(i)}-z_{\sigma(j)}}\,
\prod_{i=1}^{k}\left(\frac{z_{\sigma(i)}}{1+z_{\sigma(i)}}\right)^{x_{i}} \, 
(u_{1} \otimes \cdots \otimes u_{1}). 
\end{align*}
Thus we recover the eigenfunctions for the $q$-Boson system 
constructed in \cite{BCPS} by changing the parameters $z_{i} \mapsto -z_{i}^{-1}$. 
\end{rem}


\section{The algebra of $q$-deformed bosons and integrability}\label{sec:q-boson-alg}

\subsection{The $L$-operator}

We define the algebra $\mathcal{B}$ of multi-component $q$-deformed bosons to be 
the unital associative $\mathbb{C}$-algebra 
with the generators $\beta_{a}, \beta_{a}^{*}, q^{\pm N_{a}}  \, (1\le a \le r)$ satisfying 
\begin{align}
& \label{eq:commrel-q-boson}
q^{N_{a}}\beta_{a}=q^{-1}\beta_{a} q^{N_{a}}, \quad 
q^{N_{a}}\beta^{*}_{a}=q\,\beta_{a}^{*} q^{N_{a}}, \\ 
& 
\beta_{a}\beta_{a}^{*}=1-q^{2}q^{2N_{a}}, \quad 
\beta_{a}^{*}\beta_{a}=1-q^{2 N_{a}}
\nonumber 
\end{align}
and such that the elements $\beta_{a}, \beta_{a}^{*}, q^{\pm N_{a}}$ and 
$\beta_{b}, \beta_{b}^{*}, q^{\pm N_{b}}$ commute unless $a=b$. 
We abbreviate $\prod_{a=1}^{r}(q^{N_{a}})^{n_{a}}$ to $q^{\sum_{a=1}^{r}n_{a} N_{a}}$ for 
$n_{1}, \ldots , n_{r} \in \mathbb{Z}$. 

In \cite{GGW} a generalized $L$-operator of higher rank is introduced. 
We make use of it with parameters fixed (see Remark \ref{rem:L-op} below). 
The $L$-operator $L(z)=(L(z)_{ab})_{a, b=0}^{r} \in \mathcal{B} \otimes \mathrm{End}(\mathbb{C}^{r+1})$ 
is defined by 
\begin{align}
& \label{eq:def-L-operator}
L(z)_{00}=1+zq^{2\sum_{p=1}^{r}N_{p}}, \\ 
& 
L(z)_{0a}=\beta_{a}^{*}, \quad 
L(z)_{a0}=z\beta_{a}q^{2\sum_{p=a+1}^{r}N_{p}} \qquad (1 \le a \le r), 
\nonumber \\ 
& 
L(z)_{ab}=\left\{ 
\begin{array}{ll}
0 & (1 \le a<b \le r) \\ 
zq^{2\sum_{p=a+1}^{r}N_{p}} & (1 \le a=b \le r) \\ 
z\beta_{a}\beta_{b}^{*}q^{2\sum_{p=a+1}^{r}N_{p}} & (1\le b<a \le r). 
\end{array}
\right. 
\nonumber
\end{align}

We denote the matrix unit by  $E_{ab}=(\delta_{ai}\delta_{jb})_{i,j=0}^{r}$ and define  
$L(z)\otimes L(w) \in \mathcal{B}\otimes\mathrm{End}(\mathbb{C}^{r+1}\otimes\mathbb{C}^{r+1})$ by 
\begin{align*}
L(z)\otimes L(w)=\sum_{a,b=0}^{r}\sum_{c, d=0}^{r} 
\left(L(z)_{ab}L(w)_{cd}\right)\otimes E_{ab}\otimes E_{cd}.  
\end{align*}
Then the following relation holds. 

\begin{prop}\cite{GGW}
The $L$-operator satisfies the Yang-Baxter equation 
\begin{align}
\check{R}(z/w)[L(z)\otimes L(w)]=[L(w) \otimes L(z)] \check{R}(z/w), 
\label{eq:YBE-L}
\end{align}
where $\check{R}(z) \in \mathrm{End}(\mathbb{C}^{r+1}\otimes \mathbb{C}^{r+1})$ is given by 
\begin{align*}
\check{R}(z)&=(z-q^{2})\sum_{a=1}^{r}E_{aa}\otimes E_{aa}+
(1-q^{2})\sum_{1\le a<b \le r}\left(z E_{aa}\otimes E_{bb}+E_{bb}\otimes E_{aa}\right) \\ 
&+(z-1)\sum_{1\le a<b \le r}\left(E_{ab}\otimes E_{ba}+q^{2}E_{ba}\otimes E_{ab}\right). 
\end{align*}
\end{prop}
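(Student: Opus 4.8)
The plan is to verify the Yang--Baxter equation \eqref{eq:YBE-L} by a direct, entry-by-entry comparison of the two matrix-valued expressions in $\mathcal{B}\otimes\mathrm{End}(\mathbb{C}^{r+1}\otimes\mathbb{C}^{r+1})$. Writing out both sides in the basis $E_{ab}\otimes E_{cd}$, the equation \eqref{eq:YBE-L} becomes a finite collection of identities in the algebra $\mathcal{B}$, one for each choice of the four external indices. The left-hand side $\check{R}(z/w)[L(z)\otimes L(w)]$ contributes, for fixed output indices, a sum over intermediate indices coming from the matrix product of $\check{R}$ with $L(z)\otimes L(w)$; similarly for the right-hand side $[L(w)\otimes L(z)]\check{R}(z/w)$. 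I would first record the explicit action of $\check{R}(z)$ on basis vectors $e_{a}\otimes e_{b}$ (diagonal $a=b$, and the two off-diagonal sectors $a<b$, $a>b$), so that the matrix entries of $\check{R}(z/w)$ are available as scalar coefficients.

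The key computational input is the commutation relations \eqref{eq:commrel-q-boson}: all the nontrivial algebraic identities reduce to the normal-ordering rules $\beta_{a}\beta_{a}^{*}=1-q^{2}q^{2N_{a}}$, $\beta_{a}^{*}\beta_{a}=1-q^{2N_{a}}$, together with $q^{N_{a}}\beta_{a}=q^{-1}\beta_{a}q^{N_{a}}$ and $q^{N_{a}}\beta_{a}^{*}=q\,\beta_{a}^{*}q^{N_{a}}$, and the fact that operators with distinct color labels commute. First I would organize the index quadruples into cases according to how many of the four indices equal $0$ and how the nonzero indices are ordered, since the structure of $L(z)_{ab}$ in \eqref{eq:def-L-operator} is triangular (vanishing for $1\le a<b$) and the factors $q^{2\sum_{p>a}N_{p}}$ must be moved past the bosonic generators using the relations above. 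Because the dependence on the spectral parameters enters only through the scalar ratio $z/w$ in $\check{R}$ and through the explicit $z,w$ prefactors in $L$, matching powers of $z$ and $w$ on both sides gives a useful bookkeeping check that splits each entrywise identity into separately verifiable pieces.

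The main obstacle I expect is the genuinely mixed sector, where the output indices involve two distinct nonzero colors $a<b$ in the two tensor slots (the terms multiplying $E_{ab}\otimes E_{ba}$ and $E_{ba}\otimes E_{ab}$ in $\check{R}$). There the combination of the off-diagonal $\check{R}$ entries with the products $L(z)_{a0}L(w)_{0b}$, $L(z)_{ab}L(w)$-type factors, and the color-graded weight operators $q^{2\sum_{p>a}N_{p}}$ and $q^{2\sum_{p>b}N_{p}}$ produces several competing terms that must cancel after normal ordering; keeping track of which $q$-powers are generated when $\beta_{a},\beta_{a}^{*}$ are commuted past $q^{N_{a}}$, and ensuring the $z$- and $w$-dependent coefficients coming from $\check{R}(z/w)=(z/w-q^{2})(\cdots)+\cdots$ conspire correctly, is the delicate step. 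In practice I would reduce this to checking that each such identity holds after clearing the denominator $z-w$ implicit in the scalar prefactors, so that everything becomes a polynomial identity in $z,w$ with coefficients in $\mathcal{B}$, and then verify term-by-term cancellation using only \eqref{eq:commrel-q-boson}. The diagonal and single-color cases are routine once the normal-ordering substitutions are made, and can be dispatched quickly.
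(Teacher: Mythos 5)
Your proposed direct, entry-by-entry verification is a legitimate route, but it is not the paper's: the paper offers no proof at all and simply cites \cite{GGW}, where the $L$-operator is obtained from the universal $R$-matrix of $U_q(A_r^{(1)})$ evaluated in the $q$-oscillator representation, so the Yang--Baxter equation holds automatically by general theory (see also Remark \ref{rem:L-op}, which records the specialization of parameters). Your approach buys a self-contained, elementary check using only the relations \eqref{eq:commrel-q-boson} and the triangular structure of \eqref{eq:def-L-operator}, at the cost of a substantial case analysis in the mixed-color sector; the representation-theoretic derivation buys the result for free but imports the quantum-group machinery. Your organizational devices (grading by how many external indices equal $0$, matching powers of $z$ and $w$, clearing the $z-w$ denominators to reduce to polynomial identities) are sound, and the computation does close using only \eqref{eq:commrel-q-boson}. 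One practical caveat if you carry this out: as printed, the index ranges in the displayed $\check R(z)$ start at $a=1$, but for the equation to be nontrivial in the sector involving the auxiliary index $0$ --- which is exactly the sector that later yields the relations \eqref{eq:commrel-AC} and \eqref{eq:commrel-CC} of Proposition \ref{prop:commrel-AC} --- the sums must run over $0\le a\le r$ and $0\le a<b\le r$. You should verify the entrywise identities against that corrected $\check R$, since with the literal index ranges the $0$-sector entries of $\check R$ vanish and your bookkeeping would produce either vacuous or false identities there.
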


\begin{rem}\label{rem:L-op}
Our $L$-operator $L(z)$ is obtained from that defined in \cite{GGW} by setting 
$t=q^{-2}, \phi_{a}=\beta_{a}^{*}, \phi_{a}^{\dagger}=\beta_{a}, k_{a}=q^{2N_{a}}, u=-1$ and $v=0$. 
\end{rem}

\subsection{The monodromy matrix}

We introduce commutative copies $\mathcal{B}^{(i)} \, (i \in \mathbb{Z})$ of
the algebra $\mathcal{B}$. 
Denote the generators of $\mathcal{B}^{(i)}$ by 
$\beta_{a, i}, \beta^{*}_{a, i}$ and $q^{\pm N_{a, i}} \, (1\le a \le r)$. 
The elements $\beta_{a, i}, \beta^{*}_{a, i}, q^{\pm N_{a, i}}$ satisfy the relations \eqref{eq:commrel-q-boson} 
and commute with $\beta_{b, j}, \beta^{*}_{b, j}, q^{\pm N_{b, j}}$ such that $(a,i)\not=(b, j)$

We assign $\mathcal{B}^{(i)}$ to the $i$-th site on the one-dimensional lattice $\mathbb{Z}$, 
and define the $L$-operator $L^{(i)}(z)$ by \eqref{eq:def-L-operator} with 
$\beta_{a}, \beta^{*}_{a}, q^{\pm N_{a}}$ replaced by 
$\beta_{a, i}, \beta^{*}_{a, i}, q^{\pm N_{a, i}} \, (1\le a \le r)$.  

Let $M$ and $M'$ be integers such that $M'\le M$. 
Denote by $\mathcal{B}^{[M', M]}$ the algebra generated by 
$\mathcal{B}^{(i)} \, (M'\le i \le M)$. 
We define the monodromy matrix $\mathbb{T}^{[M', M]}(z)$ by 
\begin{align*}
\mathbb{T}^{[M', M]}(z)=L^{(M')}(z)L^{(M'+1)}(z)\cdots L^{(M-1)}(z)L^{(M)}(z),  
\end{align*}
which belongs to $\mathcal{B}^{[M', M]} \otimes \mathrm{End}(\mathbb{C}^{r+1})$. 
Denote its matrix element by $\mathbb{T}^{[M', M]}(z)_{ab} \, (0\le a, b \le r)$. 
{}From the Yang-Baxter equation \eqref{eq:YBE-L} we have 
\begin{align}
\check{R}(z/w)[\mathbb{T}^{[M', M]}(z)\otimes \mathbb{T}^{[M', M]}(w)]=
[\mathbb{T}^{[M', M]}(w) \otimes \mathbb{T}^{[M', M]}(z)] \check{R}(z/w),    
\label{eq:YBE-T}
\end{align}
where $\mathbb{T}^{[M', M]}(z)\otimes \mathbb{T}^{[M', M]}(w) \in 
\mathcal{B}^{[M', M]}\otimes\mathrm{End}(\mathbb{C}^{r+1}\otimes\mathbb{C}^{r+1})$ is defined by 
\begin{align*}
\mathbb{T}^{[M', M]}(z)\otimes \mathbb{T}^{[M', M]}(w)=\sum_{a,b=0}^{r}\sum_{c, d=0}^{r} 
\left(\mathbb{T}^{[M', M]}(z)_{ab}\mathbb{T}^{[M', M]}(w)_{cd}\right)\otimes E_{ab}\otimes E_{cd}.  
\end{align*}

Now set 
\begin{align*}
A^{[M', M]}(z)=\mathbb{T}^{[M', M]}(z)_{00}, \quad 
C_{a}^{[M', M]}(z)=\mathbb{T}^{[M', M]}(z)_{a0} \,\, (1\le a \le r). 
\end{align*}
The equality \eqref{eq:YBE-T} implies the following commutation relations: 

\begin{prop}\label{prop:commrel-AC}
We fix $M'$ and $M$, and abbreviate $X^{[M', M]}(z) \, (X=A, C_{a})$ to $X(z)$. 
It holds that 
\begin{align}
& 
A(z)A(w)=A(w)A(z), \quad C_{a}(z)C_{a}(w)=C_{a}(w)C_{a}(z) \quad (1\le a \le r), 
\nonumber \\ 
& 
C_{a}(z)A(w)=f(z, w)A(w)C_{a}(z)+g(z, w)A(z)C_{a}(w) \quad (1\le a \le r), 
\label{eq:commrel-AC}
\\ 
& 
q^{2}C_{b}(z)C_{a}(w)=f(z, w)C_{a}(w)C_{b}(z)-g(w, z)C_{a}(z)C_{b}(w) \quad (1\le b<a \le r), 
\label{eq:commrel-CC}
\end{align}
where $f(z, w)$ and $g(z, w)$ are defined by \eqref{eq:def-fg}. 
\end{prop}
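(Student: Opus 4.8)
The plan is to read off all three families of relations from the single identity \eqref{eq:YBE-T} by extracting suitable matrix coefficients in the auxiliary space $\mathbb{C}^{r+1}\otimes\mathbb{C}^{r+1}$. Since $A(z)=\mathbb{T}(z)_{00}$ and $C_{a}(z)=\mathbb{T}(z)_{a0}$ are precisely the entries of the $0$-th column of the (fixed-interval) monodromy matrix, the natural device is to evaluate both sides of \eqref{eq:YBE-T} on the basis vector $e_{0}\otimes e_{0}$ of $\mathbb{C}^{r+1}\otimes\mathbb{C}^{r+1}$, where $e_{0},\ldots,e_{r}$ is the standard basis. Writing $X_{0}=A$ and $X_{a}=C_{a}\,(1\le a\le r)$, one has $[\mathbb{T}(z)\otimes\mathbb{T}(w)](e_{0}\otimes e_{0})=\sum_{p,q=0}^{r}X_{p}(z)X_{q}(w)\,e_{p}\otimes e_{q}$, so only the desired operators appear on the left. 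The crucial structural point is that $e_{0}\otimes e_{0}$ is an eigenvector of $\check R(\zeta)$, with $\zeta=z/w$ and eigenvalue equal to the diagonal coefficient $\zeta-q^{2}$; hence the right-hand side of \eqref{eq:YBE-T} collapses to $(\zeta-q^{2})[\mathbb{T}(w)\otimes\mathbb{T}(z)](e_{0}\otimes e_{0})$ and again involves only column-$0$ entries. Thus the whole computation stays inside the closed system generated by $A$ and the $C_{a}$, never activating the remaining entries of $\mathbb{T}$.

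First I would dispose of the commutativity relations. Comparing the coefficient of $e_{0}\otimes e_{0}$ on the two sides gives $(\zeta-q^{2})A(z)A(w)=(\zeta-q^{2})A(w)A(z)$, and comparing the coefficient of $e_{a}\otimes e_{a}$ gives $(\zeta-q^{2})C_{a}(z)C_{a}(w)=(\zeta-q^{2})C_{a}(w)C_{a}(z)$, since $\check R(\zeta)$ acts on both $e_{0}\otimes e_{0}$ and $e_{a}\otimes e_{a}$ by the same scalar $\zeta-q^{2}$. Dividing by the nonzero rational function $\zeta-q^{2}$ yields $A(z)A(w)=A(w)A(z)$ and $C_{a}(z)C_{a}(w)=C_{a}(w)C_{a}(z)$.

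Next, for the mixed relation \eqref{eq:commrel-AC} and the braiding relation \eqref{eq:commrel-CC}, I would restrict attention to the two-dimensional $\check R$-invariant subspaces spanned by the pairs $\{e_{a}\otimes e_{0},\,e_{0}\otimes e_{a}\}$ (for \eqref{eq:commrel-AC}) and $\{e_{a}\otimes e_{b},\,e_{b}\otimes e_{a}\}$ with $b<a$ (for \eqref{eq:commrel-CC}). On each such pair $\check R(\zeta)$ acts by an explicit $2\times 2$ matrix whose entries are the values $1-q^{2}$, $(1-q^{2})\zeta$, $\zeta-1$ and $q^{2}(\zeta-1)$ read directly off the expression for $\check R(\zeta)$. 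Equating the coefficient of $e_{0}\otimes e_{a}$ on the two sides of \eqref{eq:YBE-T} produces one linear identity among $C_{a}(z)A(w)$, $A(w)C_{a}(z)$ and $A(z)C_{a}(w)$; solving for $C_{a}(z)A(w)$ and rewriting with $\zeta=z/w$ via $\tfrac{\zeta-q^{2}}{\zeta-1}=f(z,w)$ and $\tfrac{(1-q^{2})\zeta}{\zeta-1}=-g(z,w)$ gives exactly \eqref{eq:commrel-AC}. The identical procedure applied to the coefficient of $e_{a}\otimes e_{b}$ yields a relation among $C_{b}(z)C_{a}(w)$, $C_{a}(w)C_{b}(z)$ and $C_{a}(z)C_{b}(w)$; clearing the factor $\zeta-1$ and using $\tfrac{1-q^{2}}{\zeta-1}=g(w,z)$ reproduces \eqref{eq:commrel-CC}.

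The computation is entirely mechanical once the correct components are chosen; the only points requiring care are the confirmation that $e_{0}\otimes e_{0}$ is a $\check R$-eigenvector, which is what keeps the right-hand side inside the algebra generated by $A$ and the $C_{a}$, and the bookkeeping that converts the rational coefficients in $\zeta=z/w$ into the functions $f$ and $g$ of \eqref{eq:def-fg}. I do not anticipate any genuine obstacle beyond this bookkeeping, since \eqref{eq:YBE-T} is precisely the $RTT$ relation and the three stated identities are just its column-$0$ projections.
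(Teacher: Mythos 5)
Your proposal is correct and matches the paper's (essentially unwritten) argument: the paper simply asserts that \eqref{eq:YBE-T} implies these relations, and your extraction of the $e_{0}\otimes e_{0}$ component together with the weight-$(0,a)$ and $(b,a)$ components, plus the identifications $\tfrac{\zeta-q^{2}}{\zeta-1}=f(z,w)$, $\tfrac{(1-q^{2})\zeta}{\zeta-1}=-g(z,w)$ and $\tfrac{1-q^{2}}{\zeta-1}=g(w,z)$ with $\zeta=z/w$, is exactly the intended computation. The only point worth noting is that for $e_{0}\otimes e_{0}$ to be a $\check{R}$-eigenvector with eigenvalue $\zeta-q^{2}$ the sums in the displayed formula for $\check{R}(z)$ must be read as running over $0\le a\le r$ and $0\le a<b\le r$ (as the later use of $\check{R}(z/w)^{-1}$ also requires), which is what you have implicitly and correctly done.
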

Moreover the following recurrence relations hold. 
\begin{prop}\label{prop:recursion-AC}
Suppose that $M \ge 2$. Then it holds that 
\begin{align}
A^{[1, M]}(z)&=
(1+zq^{2\sum_{p=1}^{r}N_{p, 1}})A^{[2, M]}(z)+\sum_{p=1}^{r}\beta_{p, 1}^{*}C_{p}^{[2, M]}(z), 
\label{eq:recursion-A} \\ 
C_{a}^{[1, M]}(z)&=zq^{2\sum_{p=a+1}^{r}N_{p, 1}}\left\{ 
\beta_{a, 1}\left(A^{[2, M]}(z)+\sum_{p=1}^{a-1}\beta_{p, 1}^{*}C_{p}^{[2, M]}(z)\right)+C_{a}^{[2, M]}(z)
\right\}, 
\label{eq:recursion-C1} \\ 
C_{a}^{[1, M]}(z)&=C_{a}^{[1, M-1]}(z)\left(1+zq^{2\sum_{p=1}^{r}N_{p, M}}\right)+
z\sum_{b=1}^{r}\mathbb{T}_{a, b}^{[1, M-1]}(z)\beta_{b, M}q^{2\sum_{p=b+1}^{r}N_{p, M}}
\label{eq:recursion-C2}
\end{align}
for $1\le a \le r$. 
\end{prop}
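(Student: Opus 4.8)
The plan is to obtain all three identities by a straightforward expansion of the matrix product defining the monodromy matrix, using the explicit entries of $L(z)$ recorded in \eqref{eq:def-L-operator}. The two recursions for $C_a^{[1,M]}(z)$ run in opposite directions: \eqref{eq:recursion-A} and \eqref{eq:recursion-C1} come from splitting off the leftmost factor $L^{(1)}(z)$, whereas \eqref{eq:recursion-C2} comes from splitting off the rightmost factor $L^{(M)}(z)$. The only structural input is that entries attached to distinct sites lie in commuting copies $\mathcal{B}^{(i)}$, so that a factor from $\mathcal{B}^{(1)}$ (resp. $\mathcal{B}^{(M)}$) and one from $\mathcal{B}^{[2,M]}$ (resp. $\mathcal{B}^{[1,M-1]}$) may be reordered freely.

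For the first two identities I would write $\mathbb{T}^{[1,M]}(z)=L^{(1)}(z)\,\mathbb{T}^{[2,M]}(z)$ and read off the $(a,0)$-entry as
\[
\mathbb{T}^{[1,M]}(z)_{a0}=\sum_{c=0}^{r}L^{(1)}(z)_{ac}\,\mathbb{T}^{[2,M]}(z)_{c0}.
\]
Taking $a=0$ and inserting $L^{(1)}(z)_{00}=1+zq^{2\sum_{p=1}^{r}N_{p,1}}$ together with $L^{(1)}(z)_{0p}=\beta_{p,1}^{*}$ yields \eqref{eq:recursion-A} at once. For $1\le a\le r$ the triangular shape of the block indexed by $1\le a,b\le r$ is what matters: since $L^{(1)}(z)_{ab}=0$ for $b>a$, the sum collapses to the terms $c=0$, $c=a$, and $c=b<a$. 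Substituting $L^{(1)}(z)_{a0}=z\beta_{a,1}q^{2\sum_{p=a+1}^{r}N_{p,1}}$, $L^{(1)}(z)_{aa}=zq^{2\sum_{p=a+1}^{r}N_{p,1}}$, and $L^{(1)}(z)_{ab}=z\beta_{a,1}\beta_{b,1}^{*}q^{2\sum_{p=a+1}^{r}N_{p,1}}$ for $b<a$ gives \eqref{eq:recursion-C1} after factoring out the common $zq^{2\sum_{p=a+1}^{r}N_{p,1}}$ and collecting the $b<a$ terms as $\beta_{a,1}\sum_{p=1}^{a-1}\beta_{p,1}^{*}C_{p}^{[2,M]}(z)$.

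For \eqref{eq:recursion-C2} I would instead use $\mathbb{T}^{[1,M]}(z)=\mathbb{T}^{[1,M-1]}(z)\,L^{(M)}(z)$ and take the $(a,0)$-entry, so that
\[
C_{a}^{[1,M]}(z)=\sum_{c=0}^{r}\mathbb{T}^{[1,M-1]}(z)_{ac}\,L^{(M)}(z)_{c0}.
\]
Here only the first column of $L^{(M)}(z)$ contributes; inserting $L^{(M)}(z)_{00}=1+zq^{2\sum_{p=1}^{r}N_{p,M}}$ and $L^{(M)}(z)_{b0}=z\beta_{b,M}q^{2\sum_{p=b+1}^{r}N_{p,M}}$ for $1\le b\le r$ produces \eqref{eq:recursion-C2} directly, with $\mathbb{T}_{a,b}^{[1,M-1]}(z)$ denoting the matrix element in question.

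I expect no serious obstacle: the proposition is a direct computation once the matrix product is unpacked. The single point requiring care is the factoring in \eqref{eq:recursion-C1}, where one must verify that the common factor $q^{2\sum_{p=a+1}^{r}N_{p,1}}$ can be pulled to the left past $\beta_{a,1}$ and past each $\beta_{p,1}^{*}$ with $p<a$. This is exactly where the commutation relations \eqref{eq:commrel-q-boson} enter: the exponent involves only colors $p>a$, while the strictly-lower terms carry indices $p<a$, so the relevant color indices never coincide with $a$ or with $p$, and all the elements concerned commute. Keeping straight which entries vanish and in what order the $\beta$'s and the $q^{N}$'s appear is essentially the whole content of the argument.
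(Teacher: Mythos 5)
Your proof is correct and follows exactly the paper's approach: the paper likewise obtains all three identities by reading off the zeroth-column entries of $\mathbb{T}^{[1,M]}(z)=L^{(1)}(z)\mathbb{T}^{[2,M]}(z)=\mathbb{T}^{[1,M-1]}(z)L^{(M)}(z)$, and you have simply written out the details (including the harmless commutations needed to factor out $zq^{2\sum_{p=a+1}^{r}N_{p,1}}$ in \eqref{eq:recursion-C1}) that the paper leaves implicit.
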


\begin{proof}
Write down the matrix elements in the zeroth column of 
\begin{align*}
\mathbb{T}^{[1, M]}(z)=L^{(1)}(z)\mathbb{T}^{[2, M]}(z)=\mathbb{T}^{[1, M-1]}(z)L^{(M)}(z), 
\end{align*}
and we get the relations above. 
\end{proof}

\subsection{The integrability of the multi-species $q$-Boson model with periodic boundary}
\label{subsec:tranfer-matrix}

The Fock representation $\mathcal{F}$ of the algebra $\mathcal{B}$ is the vector space 
\begin{align*}
\mathcal{F}=\bigoplus_{m_{1}, \ldots m_{r} \in \mathbb{Z}_{\ge 0}}
\mathbb{C} | m_{1}, \ldots , m_{r} \rangle
\end{align*}
equipped with the left $\mathcal{B}$-module structure defined by 
\begin{align}
& \label{eq:action-Fock}
q^{N_{a}} | m_{1}, \ldots , m_{r} \rangle =q^{m_{a}}| m_{1}, \ldots , m_{r} \rangle, \\
& 
\beta_{a}^{*} | m_{1}, \ldots , m_{r} \rangle =| m_{1}, \ldots , m_{a}+1 , \ldots , m_{r} \rangle, 
\nonumber \\ 
&
\beta_{a} | m_{1}, \ldots , m_{r} \rangle =(1-q^{2 m_{a}})| m_{1}, \ldots , m_{a}-1 , \ldots , m_{r} \rangle. 
\nonumber 
\end{align}
The right action on its dual module 
\begin{align*}
\mathcal{F}^{*}=\bigoplus_{m_{1}, \ldots m _{r} \in \mathbb{Z}_{\ge 0}}
\mathbb{C} \langle m_{1}, \ldots , m_{r} | 
\end{align*}
is given by 
\begin{align*}
& 
\langle m_{1}, \ldots , m_{r} | q^{N_{a}}=q^{m_{a}}\langle m_{1}, \ldots , m_{r} |, \\ 
& 
\langle m_{1}, \ldots , m_{r} | \beta_{a}=(1-q^{2(m_{a}+1)})\langle m_{1}, \ldots , m_{a}+1, \ldots, m_{r} |, \\
&
\langle m_{1}, \ldots , m_{r} | \beta_{a}^{*}=
\left\{ 
\begin{array}{ll}
0 & (m_{a}=0) \\ \langle m_{1}, \ldots , m_{a}-1, \ldots , m_{r} | & (m_{a}\ge 1). 
\end{array}
\right.
\end{align*}

Let $M$ be a positive integer. 
We introduce $M$ copies $\mathcal{F}^{(i)} \, (1\le i \le M)$ of the vector space $\mathcal{F}$ and set 
$\mathcal{F}^{[1, M]}=\mathcal{F}^{(1)} \otimes \cdots \otimes \mathcal{F}^{(M)}$. 
The algebra $\mathcal{B}^{[1, M]}$ acts on $\mathcal{F}^{[1, M]}$ in such a way that 
$\mathcal{B}^{(i)}$ acts on the $i$-th component $\mathcal{F}^{(i)}$ as given by \eqref{eq:action-Fock}. 
Then we define the transfer matrix $\tau(z) \in \mathrm{End}(\mathcal{F}^{[1,M]})$ by 
\begin{align*}
\tau(z)=\mathrm{tr}_{\mathbb{C}^{r+1}}(\mathbb{T}^{[1, M]}(z))=
\sum_{a=0}^{r}\mathbb{T}^{[1, M]}(z)_{aa}. 
\end{align*} 
Note that $\tau(z)$ is a polynomial in $z$ of degree $M$. 
Now define the operators $H_{n} \, (1\le n \le M)$ by the expansion $\tau(z)=\sum_{n=0}^{M}H_{n}z^{n}$. 
The relation \eqref{eq:YBE-T} implies that
\begin{align*}
\check{R}(z/w)[\mathbb{T}^{[1, M]}(z)\otimes \mathbb{T}^{[1, M]}(w)]\check{R}(z/w)^{-1}=
\mathbb{T}^{[1, M]}(w) \otimes \mathbb{T}^{[1, M]}(z).     
\end{align*}
Taking the trace over $\mathbb{C}^{r+1}\otimes \mathbb{C}^{r+1}$, we get 
\begin{align*}
\tau(z)\tau(w)=\tau(w)\tau(z).  
\end{align*}
Hence $H_{n} \, (0\le n \le M)$ commutes with each other. 

By direct calculation we see that $H_{0}=1$ and 
\begin{align*}
H_{1}=M+\sum_{i=1}^{M}h_{i-1, i}, \quad 
h_{i-1, i}=\sum_{a=1}^{r}
(\beta_{a, i-1}^{*}-\beta_{a, i}^{*})\beta_{a, i}q^{2\sum_{p=a+1}^{r}N_{a, i}},   
\end{align*} 
where the index $i$ should read modulo $M$. 
Note that 
\begin{align*}
& 
h_{12}(|m_{1}, \ldots , m_{r}\rangle \otimes |n_{1}, \ldots , n_{r}\rangle)=
\sum_{a=1}^{r}q^{2\sum_{p=a+1}^{r}n_{p}}(1-q^{2n_{a}}) \\ 
&\qquad {}\times\left( 
|\ldots , m_{a}+1, \ldots \rangle \otimes |\ldots, n_{a}-1, \ldots \rangle-
|\ldots , m_{a}, \ldots \rangle \otimes |\ldots, n_{a}, \ldots \rangle
\right). 
\end{align*}

Let $\mathcal{S}_{M}^{\mathrm{per}}$ be the set of configurations of bosonic particles with $r$ species on 
the one-dimensional lattice of length $M$ with periodic boundary.   
We assign each configuration $\mathbf{s}$ to the vector 
$|\mathbf{s}\rangle=\otimes_{i=1}^{M}(|n_{1}^{(i)}, \ldots , n_{r}^{(i)}\rangle)$ of $\mathcal{F}^{[1, M]}$, 
where $n_{a}^{(i)}$ is 
the number of particles with color $a$ at the $i$-th site in the configuration $\mathbf{s}$. 
Then we see that 
\begin{align*}
(H_{1}-M)|\mathbf{s}\rangle=(1-q^{2})
\sum_{\mathbf{s}' \in \mathcal{S}_{M}^{\mathrm{per}}\setminus\{\mathbf{s}\}}
q(\mathbf{s}|\mathbf{s}')(|\mathbf{s}'\rangle-|\mathbf{s}\rangle), 
\end{align*}
where $q(\mathbf{s}|\mathbf{s}')$ is the transition rate of the multi-species $q$-Boson system from 
$\mathbf{s}$ to $\mathbf{s}'$ given by \eqref{eq:rate}. 
Therefore $H_{1}$ gives the transition rate matrix of the system with periodic boundary. 
The operators $H_{n} \, (2\le n \le M)$ generated from the transfer matrix commute with it. 
Thus we may regard the multi-species $q$-Boson system as integrable. 


\section{Main theorem and its proof}\label{sec:main}

\subsection{Main theorem}

We now turn back to the system on the infinite lattice $\mathbb{Z}$. 
We introduce countably many copies $\mathcal{B}^{(i)}$ and $\mathcal{F}^{(i)} \, (i \in \mathbb{Z})$ 
of the algebra $\mathcal{B}$ and its Fock representation $\mathcal{F}$, respectively, 
and assign  $\mathcal{B}^{(i)}$ and $\mathcal{F}^{(i)}$ to the $i$-th site of the lattice $\mathbb{Z}$. 
For two integers $M'$ and $M$ such that $M'\le M$, 
we denote by $\mathcal{B}^{[M', M]}$ the algebra generated by $\mathcal{B}^{(i)} \, (M'\le i\le M)$ 
and set 
$\mathcal{F}^{[M', M]}=\mathcal{F}^{(M')} \otimes \mathcal{F}^{(M'+1)} \otimes \cdots \otimes \mathcal{F}^{(M)}$. 
Set $|\mathbf{0}\rangle=|0, \ldots , 0 \rangle  \in \mathcal{F}$ and 
$\langle \mathbf{0} |=\langle 0, \ldots , 0 | \in \mathcal{F}^{*}$. 
We denote the vacuum vector and its dual on the interval $[M', M]$ by
\begin{align*}
& 
|\mathrm{vac}\rangle_{[M', M]}=|\mathbf{0} \rangle \otimes 
|\mathbf{0} \rangle \otimes \cdots \otimes |\mathbf{0} \rangle \in \mathcal{F}^{[M', M]},  \\ 
& 
{}_{[M', M]}\langle \mathrm{vac} |=\langle \mathbf{0} | \otimes 
\langle \mathbf{0} | \otimes \cdots \otimes \langle \mathbf{0} | 
\in (\mathcal{F^{*}})^{[M', M]}. 
\end{align*}
For simplicity we write 
\begin{align*}
\langle X \rangle_{[M', M]}={}_{[M, M']}\langle \mathrm{vac} | X 
|\mathrm{vac}\rangle_{[M', M]} \qquad (X \in \mathcal{B}^{[M', M]}). 
\end{align*}

\begin{prop}
Suppose that $(x_{1}, \ldots , x_{k}) \in L_{k}^{+}$ 
and $(\mu_{1}, \ldots , \mu_{k}), (\nu_{1}, \ldots , \nu_{k}) \in I_{k_{1}, \ldots , k_{r}}$. 
Then the following quantity does not depend on the choice of $M$ and $M'$ such that $M \ge x_{1}$ and 
$x_{k} \ge M'$: 
\begin{align}
\prod_{i=1}^{k}\frac{z_{i}^{M'-1}}{(1+z_{i})^{M}} 
\langle 
\prod_{1\le i \le k}^{\curvearrowright}C_{\mu_{i}}^{[M, M']}(z_{i})
\prod_{1\le i \le k}\beta_{\nu_{i}, x_{i}}^{*}
\rangle_{[M', M]}.  
\label{eq:expectation-shift}
\end{align} 
\end{prop}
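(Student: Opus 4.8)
The plan is to show that the quantity \eqref{eq:expectation-shift} is unchanged when we enlarge the interval $[M', M]$ at either end, i.e.\ when we pass from $[M'+1, M]$ to $[M', M]$ (extending on the left) and from $[M', M]$ to $[M', M+1]$ (extending on the right). Since any two admissible intervals are connected by a chain of such one-site extensions, invariance under each elementary extension yields the full claim. Throughout, the essential input will be the recurrence relations of Proposition \ref{prop:recursion-AC}, together with the explicit action of the creation operators $\beta_{\nu_i, x_i}^{*}$ and the behaviour of the vacuum and its dual under the $q$-boson operators.

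For the \emph{left extension}, adding the site $M'$ (with $M' < x_k$, so no creation operator sits there), I would insert the recurrence \eqref{eq:recursion-C1} for $C_{a}^{[M', M]}(z)$ in terms of the operators on $[M'+1, M]$ and the $q$-boson generators at site $M'$. The key observation is that the vacuum $|\mathbf{0}\rangle$ at site $M'$ is annihilated by $\beta_{a, M'}$ (since $\beta_a|\mathbf{0}\rangle = 0$ from \eqref{eq:action-Fock}) and that $q^{2\sum N_{p, M'}}$ acts as the identity on it; meanwhile the dual vacuum at site $M'$ kills $\beta_{a, M'}^{*}$. So when the matrix element is taken, the terms in the recurrence involving $\beta_{a, M'}$ or $\beta_{a, M'}^{*}$ drop out, and the site $M'$ contributes only a factor coming from $C_{a}^{[M', M]}(z) = z_i\,C_{a}^{[M'+1, M]}(z) + (\text{vanishing terms})$. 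The factor of $z_i$ produced at site $M'$ for each of the $k$ operators is exactly compensated by the normalization $\prod_i z_i^{M'-1}$, since lowering $M'$ by one multiplies this prefactor by $\prod_i z_i^{-1}$.

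For the \emph{right extension}, adding the site $M$ (with $M > x_1$, so again no creation operator is present there), I would use instead the recurrence \eqref{eq:recursion-C2}, which expresses $C_{a}^{[M', M]}(z)$ through $C_{a}^{[M', M-1]}(z)$ and $\mathbb{T}_{a,b}^{[M', M-1]}(z)$ times $q$-boson operators at site $M$. Pairing against the vacuum at site $M$ and using $\langle \mathbf{0}|\beta_{b,M} \propto \langle \ldots, 1, \ldots|$, which is orthogonal to $|\mathbf{0}\rangle$ at that site, the second term in \eqref{eq:recursion-C2} contributes nothing to the matrix element; the surviving factor is $(1 + z\,q^{2\sum N_{p, M}})$, which on the vacuum equals $1 + z$. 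This produces a factor $\prod_i(1+z_i)$, exactly cancelled by the change $\prod_i(1+z_i)^{-M}$ when $M$ is raised by one.

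The \textbf{main obstacle} is bookkeeping in the ordered product $\prod^{\curvearrowright}C_{\mu_i}^{[M', M]}(z_i)$: the $C$-operators at different sites do not commute through one another freely, and applying a recurrence to one factor introduces the mixed operator $\mathbb{T}_{a,b}$ or extra $\beta, \beta^{*}$ that must be commuted past the remaining factors before the vacuum annihilation can be invoked. I expect to handle this by peeling off the contributions site-by-site: since the added site $M'$ (resp.\ $M$) is disjoint in its algebra $\mathcal{B}^{(M')}$ (resp.\ $\mathcal{B}^{(M)}$) from the others, the generators attached to the new site commute with all operators carrying different site indices, so they can be moved directly against the corresponding vacuum factor without disturbing the rest of the ordered product. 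Verifying carefully that every $\beta$- or $\beta^{*}$-type term indeed lands on a vacuum it annihilates — and that the only surviving scalar is the claimed power of $z_i$ or $(1+z_i)$ — is the delicate computational heart of the argument.
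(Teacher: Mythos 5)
Your proposal is correct and follows essentially the same route as the paper: the paper likewise reduces to the two one-site extensions, applies the recurrences \eqref{eq:recursion-C1} and \eqref{eq:recursion-C2} to a vector of the form $X|\mathrm{vac}\rangle$ with $X$ supported away from the added site, observes that the $\beta$-terms annihilate the vacuum there while $q^{2N}$ acts as the identity, and cancels the resulting factors $z$ and $(1+z)$ against the prefactor. Your site-by-site peeling, justified by the commutativity of $\mathcal{B}^{(i)}$ for distinct sites, is exactly the mechanism implicit in the paper's statement for general $X\in\mathcal{B}^{[1,M]}$.
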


\begin{proof}
Let $M \ge 2$ and $X \in \mathcal{B}^{[1, M]}$.  
{}From Proposition \ref{prop:recursion-AC} it holds that 
\begin{align*}
& 
C_{a}^{[1, M+1]}(z)X|\mathrm{vac}\rangle_{[1, M+1]}=(1+z)
C_{a}^{[1, M]}(z)X|\mathrm{vac}\rangle_{[1, M]}, \\ 
& 
C_{a}^{[0, M]}(z)X|\mathrm{vac}\rangle_{[0, M]}=z\,
C_{a}^{[1, M]}(z)X|\mathrm{vac}\rangle_{[1, M]}. 
\end{align*}
for $1\le a \le r$. 
The properties above imply that the value \eqref{eq:expectation-shift} is equal to that with 
$M'$ and $M$ replaced by $x_{k}$ and $x_{1}$, respectively. 
\end{proof}

Now we introduce the following function. 
\begin{defn}
For $\vec{\mu}=(\mu_{1}, \ldots , \mu_{k}) \in I_{k_{1}, \ldots , k_{r}}$  and 
$\vec{z}=(z_{1}, \ldots , z_{k}) \in \mathbb{C}^{k}$, 
we define the function $\psi_{\vec{z}}^{\vec{\mu}}$ on $L_{k}^{+}$ taking values in 
$(U^{\otimes k})_{k_{1}, \ldots , k_{r}}$ by 
\begin{align*}
\psi_{\vec{z}}^{\vec{\mu}}(\vec{x})=(1-q^{2})^{-k}
\sum_{\vec{\nu} \in I_{k_{1}, \ldots , k_{r}}}& 
\prod_{i=1}^{k}\frac{z_{i}^{M'-1}}{(1+z_{i})^{M}} 
\langle 
\prod_{1\le i \le k}^{\curvearrowright}C_{\mu_{i}}^{[M', M]}(z_{i})
\prod_{1\le i \le k}\beta_{\nu_{i}, x_{i}}^{*}
\rangle_{[M', M]} \\ 
&\times 
q^{t(\vec{\nu})-t(\vec{\mu})} u_{\nu_{1}} \otimes \cdots \otimes u_{\nu_{k}},    
\end{align*} 
where $M$ and $M'$ are integers such that $M \ge x_{1}$ and $x_{k} \ge M'$, 
and $t(\cdot)$ is the inversion number \eqref{eq:def-inversion-number}. 
\end{defn}

\begin{ex}\label{ex:psi-when-r=1}
Let us consider the case of $r=1$. 
In the following we drop 
the subscript $1$ on $\beta_{1, i}, \beta_{1, i}^{*}, q^{N_{1, i}}$ and $C_{1}^{[M', M]}(z)$ for brevity. 
The $L$-operator is in the form of $2\times 2$ matrix: 
\begin{align}
L^{(i)}(z)=\begin{pmatrix} 1+zq^{2N_{i}} & \beta_{i}^{*} \\ z\beta_{i} & z \end{pmatrix}.   
\label{eq:L-op-r=1}
\end{align} 
Then the operator $C^{[M', M]}(z)$ is defined by 
\begin{align}
C^{[M', M]}(z)=\begin{pmatrix} 0 & 1 \end{pmatrix}
L^{(M')}(z)L^{(M'+1)}(z)\cdots L^{(M)}(z)
\begin{pmatrix} 1 \\ 0 \end{pmatrix}.  
\label{eq:def-C-r=1}
\end{align}
The function $\psi^{\vec{\mu}}_{\vec{z}}$ is defined only for $\vec{\mu}=(1, \ldots , 1)$ and 
we see that 
\begin{align*}
\psi_{\vec{z}}^{1^{k}}(\vec{x})=(1-q^{2})^{-k}\prod_{i=1}^{k}\frac{z_{i}^{M'-1}}{(1+z_{i})^{M}} 
\langle \prod_{1\le i \le k}C^{[M', M]}(z_{i}) 
\prod_{1\le i \le k}\beta_{x_{i}}^{*} \rangle_{[M', M]}\, 
(u_{1}\otimes \cdots \otimes u_{1}). 
\end{align*}
Note that $[C^{[M', M]}(z), C^{[M', M]}(w)]=0$ because of Proposition \ref{prop:commrel-AC}.  
Hence the function $\psi^{1^{k}}_{\vec{z}}$ is symmetric in $\vec{z}=(z_{1}, \ldots , z_{k})$. 
\end{ex}

Our main theorem is as follows. 

\begin{thm}\label{thm:main}
It holds that $h^{\vec{\mu}}_{\vec{z}}=\psi^{\vec{\mu}}_{\vec{z}}$ 
for any $\vec{\mu} \in I_{k_{1}, \ldots , k_{r}}$ and $\vec{z} \in \mathbb{C}^{k}$. 
\end{thm}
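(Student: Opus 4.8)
The plan is to exploit that both $h^{\vec\mu}_{\vec z}$ and $\psi^{\vec\mu}_{\vec z}$ depend linearly on the seed vector $u_{\mu_1}\otimes\cdots\otimes u_{\mu_k}$ and take values in the same space $(U^{\otimes k})_{k_1,\dots,k_r}$, so that the theorem is an identity between a coordinate--Bethe--ansatz sum over $\mathfrak{S}_k$ (the object $h$) and an algebraic--Bethe--ansatz matrix element of an ordered product of $C$--operators (the object $\psi$). Rather than computing either side in closed form, I would show that the two families obey the \emph{same} exchange relation under the symmetric group and agree in a single base case, and then conclude by invertibility. Throughout I would fix $M=x_1$ and $M'=x_k$, which is legitimate because the value \eqref{eq:expectation-shift} is independent of $M,M'$.

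First I would establish an exchange relation for $\psi$. Using the relation \eqref{eq:commrel-CC} together with $C_a(z)C_a(w)=C_a(w)C_a(z)$ from Proposition \ref{prop:commrel-AC}, one reorders the adjacent factors $C_{\mu_j}(z_j)C_{\mu_{j+1}}(z_{j+1})$ inside the ordered product $\prod^{\curvearrowright}_{1\le i\le k}C_{\mu_i}(z_i)$; note that \eqref{eq:commrel-CC} produces precisely two terms, one with both color and parameter swapped and one with only the colors swapped, mirroring the two outputs of the operator $R$. Collecting the color indices into the tensor slot of $U^{\otimes k}$, the effect of interchanging slots $j$ and $j+1$ should be exactly the action of $R_j$ (equivalently of the operator $Y_j$), so that $\psi$ transforms under $(\vec\mu,\vec z)\mapsto(\sigma_j\vec\mu,\sigma_j\vec z)$ in the same way that $h$ does by its construction from the $Y_j$ and Lemma \ref{lem:phi-decompose} (this is the symmetry recorded in Proposition \ref{prop:symmetry}). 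Since $I_{k_1,\dots,k_r}$ is a single $\mathfrak{S}_k$--orbit and each $Y_j$ is invertible, matching these exchange relations reduces the theorem to one reference ordering $\vec\mu_0$ (say $\mu_1\le\cdots\le\mu_k$) and all $\vec z$.

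Next I would settle this base case by induction on the particle number $k$. Peeling off the rightmost site via the recursion \eqref{eq:recursion-C2} (or the leftmost one via \eqref{eq:recursion-C1}) and acting on the vacuum, the matrix element defining $\psi^{\vec\mu_0}_{\vec z}$ reduces to one for $k-1$ particles, while isolating in $h$ the contribution of the last coordinate $x_k$ yields the companion recursion for the permutation sum. The single-species identity of Remark \ref{rem:h-when-r=1} and Example \ref{ex:psi-when-r=1} (the result of \cite{BCPS}) anchors the induction, and one checks that the prefactors $\prod_{i}z_i^{M'-1}/(1+z_i)^M$, the normalization $(1-q^2)^{-k}$, and the weights $q^{t(\vec\nu)-t(\vec\mu)}$ combine to reproduce the factor $\prod_{i<j}f(z_i,z_j)$ and the amplitudes $\phi(\tau;\vec z)$.

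The main obstacle is the first step: extracting the exchange relation for $\psi$ from \eqref{eq:commrel-CC}. That relation mixes the spectral parameters and the colors through the asymmetric coefficients $f(z,w)$, $g(w,z)$ and the factor $q^2$, and the delicate point will be to verify that, after passing to the vacuum matrix element and accounting for the inversion-number weights $q^{t(\vec\nu)}$ built into the definitions of $\psi$ and of $\gamma_{\vec\nu}$, these coefficients assemble exactly into the entries of $R$ (including the $q$ and $1-q^2$ factors) rather than a twist thereof. Getting this bookkeeping right, so that the swap on the $C$--side is literally $R_j$ and not merely conjugate to it, is what makes the reduction to a single ordering valid.
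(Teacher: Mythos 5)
Your proposal follows essentially the same route as the paper: the exchange relations of Proposition \ref{prop:symmetry} (proved exactly as you suggest, from \eqref{eq:commrel-CC} on the $\psi$ side and from Lemma \ref{lem:phi-decompose} on the $h$ side), together with the shift invariance of Proposition \ref{prop:shift}, reduce the theorem to the single ordering $\vec{\mu}=(r^{k_{r}},\ldots,1^{k_{1}})$ with $x_{k}=1$, after which the paper shows that both functions satisfy one and the same recurrence (Theorem \ref{thm:recursion}), obtained on the $\psi$ side by peeling off site $1$ via \eqref{eq:recursion-C1} and moving $\beta^{*}_{a,1}$ past the $C$-operators. The one point to correct is your base case: the recurrence reduces $k$ multi-species particles to $k-1$ multi-species particles, so the induction is anchored at $k=1$, where $h^{a}_{z}(1)=\frac{z}{1+z}\,u_{a}=\psi^{a}_{z}(1)$ is immediate --- not at the single-species identity of \cite{BCPS}, which is the $r=1$ case for all $k$ and is never reached by the recursion (in the paper it appears only in a remark, as an independent cross-check via Borodin's formula). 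A second small remark: the relation in Proposition \ref{prop:symmetry}~(ii) naturally rewrites an increasing adjacent pair $\mu_{i}<\mu_{i+1}$ in terms of the decreasing one, so the natural reference ordering is the weakly decreasing $(r^{k_{r}},\ldots,1^{k_{1}})$ rather than $\mu_{1}\le\cdots\le\mu_{k}$; since the $Y_{j}$ are invertible this is only a matter of convenience.
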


We prove the theorem in the following subsections. 
Here we note that it gives us an expression of the eigenfunctions for the multi-species 
$q$-Boson system in terms of the $q$-deformed bosonic operators: 

\begin{cor}
For $\vec{\mu}=(\mu_{1}, \ldots , \mu_{k}) \in I_{k_{1}, \ldots , k_{r}}$  and 
$\vec{z}=(z_{1}, \ldots , z_{k}) \in \mathbb{C}^{k}$, we define the function 
$E_{\vec{z}}^{\vec{\mu}}$ on $\mathcal{S}_{k_{1}, \ldots , k_{r}}$ by 
\begin{align}
E_{\vec{z}}^{\vec{\mu}}(\vec{x}, \vec{\nu})=
\lim_{\begin{subarray}{c} M \to \infty \\ M' \to -\infty \end{subarray}}
\prod_{i=1}^{k}\frac{z_{i}^{M'-1}}{(1+z_{i})^{M}} 
\langle 
\prod_{1\le i \le k}^{\curvearrowright}C_{\mu_{i}}^{[M', M]}(z_{i})
\prod_{1\le i \le k}\beta_{\nu_{i}, x_{i}}^{*}
\rangle_{[M', M]}.   
\label{eq:cor-main}
\end{align} 
Then $E_{\vec{z}}^{\vec{\mu}}$ is an eigenfunction of the backward generator $\mathcal{H}$ 
with the eigenvalue $\sum_{i=1}^{k}z_{i}^{-1}$. 
\end{cor}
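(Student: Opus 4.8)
The plan is to read off the corollary from Theorem \ref{thm:main} with only a small amount of bookkeeping. First I would observe that the limit in \eqref{eq:cor-main} is in fact stationary: by the Proposition immediately preceding the definition of $\psi^{\vec{\mu}}_{\vec{z}}$, the quantity \eqref{eq:expectation-shift} does not depend on $M$ and $M'$ once $M \ge x_{1}$ and $x_{k} \ge M'$. Hence the limit defining $E_{\vec{z}}^{\vec{\mu}}(\vec{x}, \vec{\nu})$ exists and equals this stable value, and in particular the bracket appearing in the definition of $\psi_{\vec{z}}^{\vec{\mu}}(\vec{x})$ is exactly $E_{\vec{z}}^{\vec{\mu}}(\vec{x}, \vec{\nu})$ for each $\vec{\nu} \in I_{k_{1}, \ldots , k_{r}}$.

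Next I would compare two expansions of $\psi_{\vec{z}}^{\vec{\mu}}(\vec{x})$ in the basis $\{u_{\nu_{1}} \otimes \cdots \otimes u_{\nu_{k}}\}_{\vec{\nu} \in I_{k_{1}, \ldots , k_{r}}}$ of $(U^{\otimes k})_{k_{1}, \ldots , k_{r}}$. Substituting the stable value into the definition of $\psi$ gives
\[
\psi_{\vec{z}}^{\vec{\mu}}(\vec{x})
= (1-q^{2})^{-k} \sum_{\vec{\nu} \in I_{k_{1}, \ldots , k_{r}}}
q^{t(\vec{\nu})-t(\vec{\mu})} \, E_{\vec{z}}^{\vec{\mu}}(\vec{x}, \vec{\nu}) \,
u_{\nu_{1}} \otimes \cdots \otimes u_{\nu_{k}},
\]
while the relation defining the components $\gamma_{\vec{\nu}}$ of a function $\gamma$ (stated at the beginning of Section \ref{sec:eigen}) reads $\psi_{\vec{z}}^{\vec{\mu}}(\vec{x}) = \sum_{\vec{\nu}} q^{t(\vec{\nu})} (\psi_{\vec{z}}^{\vec{\mu}})_{\vec{\nu}}(\vec{x}) \, u_{\nu_{1}} \otimes \cdots \otimes u_{\nu_{k}}$. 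Matching the coefficients of the basis vectors yields
\[
(\psi_{\vec{z}}^{\vec{\mu}})_{\vec{\nu}}(\vec{x}) = (1-q^{2})^{-k}\, q^{-t(\vec{\mu})}\, E_{\vec{z}}^{\vec{\mu}}(\vec{x}, \vec{\nu}).
\]

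Finally, invoking the description of $\varphi^{-1}$ in \eqref{eq:def-Phi}, namely $(\varphi^{-1}\gamma)(\vec{x}, \vec{\nu}) = \gamma_{\vec{\nu}}(\vec{x})$, the last display states precisely that $E_{\vec{z}}^{\vec{\mu}} = (1-q^{2})^{k}\, q^{t(\vec{\mu})}\, \varphi^{-1}\psi_{\vec{z}}^{\vec{\mu}}$. By Theorem \ref{thm:main} I may replace $\psi_{\vec{z}}^{\vec{\mu}}$ by $h_{\vec{z}}^{\vec{\mu}}$, and Theorem \ref{thm:h-eigen} asserts that $\varphi^{-1}h_{\vec{z}}^{\vec{\mu}}$ is an eigenfunction of $\mathcal{H}$ with eigenvalue $\sum_{i=1}^{k}z_{i}^{-1}$. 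Since $(1-q^{2})^{k}q^{t(\vec{\mu})}$ is a nonzero scalar (as $0<q<1$) and $\mathcal{H}$ is linear, $E_{\vec{z}}^{\vec{\mu}}$ is again an eigenfunction with the same eigenvalue, which is the assertion. I do not expect a genuine obstacle here: the only content beyond Theorem \ref{thm:main} is the identification of the limit with the stable value of \eqref{eq:expectation-shift} and the correct tracking of the prefactors $q^{t(\vec{\nu})}$, $q^{t(\vec{\mu})}$ and $(1-q^{2})^{-k}$, the ``limit'' being stationary rather than a true limiting process.
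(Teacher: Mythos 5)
Your proposal is correct and follows exactly the route of the paper's own (one-line) proof: the paper also deduces the corollary from Theorem \ref{thm:h-eigen}, Theorem \ref{thm:main} and the identity $E_{\vec{z}}^{\vec{\mu}}=(1-q^{2})^{k}q^{t(\vec{\mu})}(\varphi^{-1}\psi_{\vec{z}}^{\vec{\mu}})$, which you verify by the same coefficient matching. You have merely spelled out the bookkeeping (stationarity of the limit and the tracking of $q^{t(\vec{\nu})}$, $q^{t(\vec{\mu})}$, $(1-q^{2})^{-k}$) that the paper leaves implicit.
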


\begin{proof}
It follows from Theorem \ref{thm:h-eigen}, Theorem \ref{thm:main} and 
the equality $E_{\vec{z}}^{\vec{\mu}}=(1-q^{2})^{k}q^{t(\vec{\mu})}(\varphi^{-1} \psi_{\vec{z}}^{\vec{\mu}})$, 
where $\varphi^{-1}$ is the isomorphism defined by \eqref{eq:def-Phi}.  
\end{proof}

\begin{rem}
We proved above that $E_{\vec{z}}^{\vec{\mu}}$ is an eigenfunction by 
showing that it is equal to $\varphi^{-1} h_{\vec{z}}^{\vec{\mu}}$ up to a constant factor. 
The statement might be proved directly from the expression \eqref{eq:cor-main}. 
Here we do not discuss this point further. 
\end{rem}

\begin{rem}
If $r=1$, then the equality $h_{\vec{z}}^{\vec{\mu}}=\psi_{\vec{z}}^{\vec{\mu}}$ in Theorem \ref{thm:main} 
can be derived from the combinatorial formula for a family of symmetric rational functions 
due to Borodin \cite{Bor} as follows. 

Note that $h_{\vec{z}}^{\vec{\mu}}$ and $\psi_{\vec{z}}^{\vec{\mu}}$ are defined only for 
$\vec{\mu}=1^{k}$ if $r=1$. 
Hence it suffices to prove the following equality for any 
$\vec{z} \in \mathbb{C}^{k}$ and $\vec{x} \in L_{k}^{+}$: 
\begin{align}
& \label{eq:main-r=1}
\sum_{\sigma \in \mathfrak{S}_{k}} 
\prod_{1\le i<j \le k}\!
\frac{z_{\sigma(i)}-q^2 z_{\sigma(j)}}{z_{\sigma(i)}-z_{\sigma(j)}}\,
\prod_{i=1}^{k}\left(\frac{z_{\sigma(i)}}{1+z_{\sigma(i)}}\right)^{x_{i}} \\ 
&=(1-q^{2})^{-k}\prod_{i=1}^{k}\left(z_{i}^{M'-1}(1+z_{i})^{-M} \right)
\langle \prod_{1\le i \le k}C^{[M', M]}(z_{i}) 
\prod_{1\le i \le k}\beta_{x_{i}}^{*} \rangle_{[M', M]} 
\nonumber 
\end{align}
(see Remark \ref{rem:h-when-r=1} and Example \ref{ex:psi-when-r=1}). 
Moreover we can assume that $x_{k} \ge 0$ and $M'=0$ 
because of Proposition \ref{prop:shift} which will be proved 
in the next subsection. 

Following \cite{Bor} we introduce the function\footnote{
Here we replaced the parameter $q$ in \cite{Bor} with $q^{2}$.}   
\begin{align*}
F_{\vec{x}}(u_{1}, \ldots , u_{k})=\frac{(1-q^2)^{k}}{\prod_{i=1}^{k}(1-su_{i})}
\sum_{\sigma \in \mathcal{S}_{k}}
\prod_{1\le i<j \le k}\frac{u_{\sigma(i)}-q^{2}u_{\sigma(j)}}{u_{\sigma(i)}-u_{\sigma(j)}} 
\prod_{i=1}^{k}\left(\frac{u_{i}-s}{1-su_{i}}\right)^{x_{i}}  
\end{align*}
for $\vec{x} \in L_{k}^{+}$ such that $x_{k}\ge 0$,  
where $s$ is a parameter.   
The combinatorial formula proved in \cite{Bor} represents the function $F_{\vec{x}}$
as a partition function for path ensembles in the square grid. 
It can be rewritten in terms of the $q$-deformed bosons as follows. 

Define the operator 
$\tilde{L}(u; s)=(\tilde{L}(u; s)_{ab})_{a, b=0}^{1} \in \mathcal{B}\otimes \mathrm{End}(\mathbb{C}^{2})$ by  
\begin{align*}
\tilde{L}(u; s)=\frac{1}{1-su}
\begin{pmatrix} 1-suq^{2N} & u\beta^{*}(1-s^2 q^{2N}) \\ \beta & u-sq^{2N} \end{pmatrix}    
\end{align*}
and $\tilde{L}^{(i)}(u; s) \, (i \in \mathbb{Z})$ 
in the same way as before. 
Then the vertex weight $w_{u}(i_{1}, j_{i}; i_{2}, j_{2})$ defined in \cite{Bor} is realized as 
$w_{u}(i_{1}, j_{1}; i_{2}, j_{2})=\langle i_{1} | \tilde{L}(u; s)_{j_{1}j_{2}} | i_{2} \rangle$
for $i_{1}, i_{2} \in \mathbb{Z}_{\ge 0}$ and $j_{1}, j_{2} \in \{0, 1\}$. 
This correspondence implies that the combinatorial formula proved in \cite{Bor} is rewritten as 
\begin{align}
F_{\vec{x}}(u_{1}, \ldots , u_{k})=\langle \prod_{1\le i \le k}\tilde{C}^{[0, M]}(u_{i}; s) 
\prod_{1\le i \le k}\beta_{x_{i}}^{*} \rangle_{[0, M]},  
\label{eq:borodin}
\end{align}
where $M$ is an arbitrary integer such that $M \ge x_{1}$ and 
$\tilde{C}^{[0, M]}(u; s)$ is the operator acting on $\mathcal{F}^{[0, M]}$ 
defined by the right hand side of \eqref{eq:def-C-r=1} 
with $M'=0$ and $L^{(i)}(z)$ replaced by $\tilde{L}^{(i)}(u; s)$ $(0\le i \le M)$. 

Now we introduce the $s$-deformed $L$-operator 
\begin{align*}
L^{(i)}(z; s)=\begin{pmatrix} 1+zq^{2N_{i}} & \beta_{i}^{*}(1-s^{2}q^{2N_{i}}) \\ 
z\beta_{i} & z+s^{2}q^{2N_{i}} \end{pmatrix} 
\quad (0\le i \le M).    
\end{align*} 
Note that it is equal to the $L$-operator \eqref{eq:L-op-r=1} at $s=0$. 
For a parameter $\alpha \in \mathbb{C}$ we define the linear operator $K(\alpha)$ 
acting on the Fock space $\mathcal{F}$ by 
$K(\alpha)|m\rangle =\alpha^{m}|m\rangle \, (m \in \mathbb{Z}_{\ge 0})$. 
We denote by $K^{(i)}(\alpha)$ the linear operator acting on 
the $i$-component $\mathcal{F}^{(i)}$ of $\mathcal{F}^{[0, M]}$ as $K(\alpha)$. 
Then it holds that 
\begin{align*}
\left[L^{(i)}(z; s), \, \begin{pmatrix} 1 & 0 \\ 0 & \alpha\end{pmatrix}\!K^{(i)}(\alpha)\right]=0
\end{align*}
in $\mathcal{B}^{[0, M]}\otimes\mathrm{End}(\mathbb{C}^{2})$. 
Using this property and the relation 
\begin{align*}
\tilde{L}^{(i)}(u; s)=\frac{1}{1-su}
\begin{pmatrix} 1 & 0 \\ 0 & (-su)^{-1} \end{pmatrix} L^{(i)}(-su; s)
\begin{pmatrix} 1 & 0 \\ 0 & u \end{pmatrix}, 
\end{align*}
we find that 
\begin{align*}
\tilde{C}^{[0, M]}(u; s)&=(-su)^{-1}(-s(1-su))^{-(M+1)} 
\prod_{1\le i \le k}K^{(i)}((-s)^{-(M+1-i)}) \\ 
&\times 
C^{[0, M]}(-su; s)
\prod_{1\le i \le k}K^{(i)}((-s)^{M+1-i}),  
\end{align*}
where $C^{[0, M]}(z; s)$ is the operator defined by the right hand side of \eqref{eq:def-C-r=1} 
with $M'=0$ and $L^{(i)}(z)$ replaced by ${L}^{(i)}(z; s)$ $(0\le i \le M)$. 
Substitute this expression of $\tilde{C}^{[0, M]}(u; s)$ into \eqref{eq:borodin} 
and change the variables $u_{i}$ to $-z_{i}/s$ $(0 \le i \le k)$. 
Then we see that 
\begin{align}
F_{\vec{x}}(-z_{1}/s, \ldots , -z_{k}/s)=
\frac{(-s)^{\sum_{i=1}^{k}x_{i}}}{\prod_{i=1}^{k}z_{i}(1+z_{i})^{M+1}} 
\langle \prod_{1\le i \le k}C^{[0, M]}(u_{i}; s) 
\prod_{1\le i \le k}\beta_{x_{i}}^{*} \rangle_{[0, M]}.  
\label{eq:compare-borodin}
\end{align}

Now we can obtain the formula \eqref{eq:main-r=1} as follows. 
{}From the definition of $F_{\vec{x}}$,  
the left hand side of \eqref{eq:main-r=1} is equal to the limit 
\begin{align*}
\lim_{s \to 0}\left\{
(1-q^{2})^{-k}(-s)^{-\sum_{i=1}^{k}x_{i}}
\prod_{i=1}^{k}(1+z_{i}) 
\, F_{\vec{x}}(-z_{1}/s, \ldots , -z_{k}/s)
\right\}. 
\end{align*}
Using the formula \eqref{eq:compare-borodin} and 
$C^{[0, M]}(z; 0)=C^{[0, M]}(z)$, 
we find that the limit is equal to the right hand side of \eqref{eq:main-r=1}.  
\end{rem}

\begin{rem}
In \cite{MS} Motegi and Sakai prove a formula 
which represents the $\beta$-Grothendieck polynomial \cite{FK}  
as the wavefunction of the non-Hermitian phase model\footnote{
Here $\beta$ is a parameter and not the $q$-deformed boson.}. 
At the $K$-theoretical point $\beta=-1$, 
their formula is equivalent to the equality \eqref{eq:main-r=1} with $q=0$.  
\end{rem}

\subsection{Common properties of $h_{\vec{z}}^{\vec{\mu}}$ and $\psi_{\vec{z}}^{\vec{\mu}}$}

The proof of Theorem \ref{thm:main} will consist of establishing that 
the functions $h_{\vec{z}}^{\vec{\mu}}$ and $\psi_{\vec{z}}^{\vec{\mu}}$ share three common properties. 
First we show two of them. 

\begin{prop}\label{prop:shift}
Both $h^{\vec{\mu}}_{\vec{z}}$ and $\psi^{\vec{\mu}}_{\vec{z}}$ have 
the shift invariance 
\begin{align*}
F_{\vec{z}}^{\vec{\mu}}(x_{1}+c, \ldots , x_{k}+c)=
\prod_{i=1}^{k}\left(\frac{z_{i}}{1+z_{i}}\right)^{c} 
F_{\vec{z}}^{\vec{\mu}}(x_{1}, \ldots , x_{k}) \qquad 
(\vec{x} \in L_{+}, c \in \mathbb{Z}). 
\end{align*}
\end{prop}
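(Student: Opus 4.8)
The plan is to prove the shift invariance separately for the two functions, since they are defined by genuinely different mechanisms; for each one the shift by $c$ factors through multiplying every spectral parameter's "plane-wave" contribution $(z_i/(1+z_i))^{x_i}$ by the common factor $(z_i/(1+z_i))^c$.

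For $h_{\vec{z}}^{\vec{\mu}}$ the claim is essentially immediate from the defining formula. When we replace each $x_i$ by $x_i+c$ in
\begin{align*}
h_{\vec{z}}^{\vec{\mu}}(\vec{x})=\prod_{1\le i<j\le k}f(z_{i}, z_{j})
\sum_{\tau \in \mathfrak{S}_{k}}
\prod_{i=1}^{k}\left(\frac{z_{\tau^{-1}(i)}}{1+z_{\tau^{-1}(i)}}\right)^{x_{i}}\phi(\tau; \vec{z})
(u_{\mu_{1}} \otimes \cdots \otimes u_{\mu_{k}}),
\end{align*}
the exponent $x_i+c$ splits the $i$-th factor as $\bigl(z_{\tau^{-1}(i)}/(1+z_{\tau^{-1}(i)})\bigr)^{x_i}\cdot\bigl(z_{\tau^{-1}(i)}/(1+z_{\tau^{-1}(i)})\bigr)^{c}$. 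The point is that the product over $i$ of the $c$-powers is $\prod_{i=1}^{k}\bigl(z_{\tau^{-1}(i)}/(1+z_{\tau^{-1}(i)})\bigr)^{c}$, and since $\tau^{-1}$ is a bijection of $\{1,\dots,k\}$ this reindexes to $\prod_{i=1}^{k}\bigl(z_i/(1+z_i)\bigr)^{c}$, which is independent of $\tau$. Hence it pulls out of the sum over $\mathfrak{S}_k$ as the claimed prefactor, and the remaining sum is exactly $h_{\vec{z}}^{\vec{\mu}}(\vec{x})$. The only thing to check carefully is that this $\tau$-independence of the pulled-out factor is what lets it escape the summation; everything else is formal.

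For $\psi_{\vec{z}}^{\vec{\mu}}$ the shift must instead be read off from the bosonic matrix element, and this is where I expect the real work. The quantity $\psi_{\vec{z}}^{\vec{\mu}}(\vec{x})$ is built from
$\langle \prod^{\curvearrowright}C_{\mu_i}^{[M',M]}(z_i)\prod\beta_{\nu_i,x_i}^{*}\rangle_{[M',M]}$,
and shifting $\vec{x}\mapsto\vec{x}+c$ just relocates all creation operators by $c$ sites, $\beta_{\nu_i,x_i}^{*}\mapsto\beta_{\nu_i,x_i+c}^{*}$. The strategy is to realize this site-translation as an automorphism of the lattice together with a compensating shift of the interval $[M',M]\mapsto[M'+c,M+c]$; using the preceding Proposition (independence of the value from the choice of $M,M'$, once $M\ge x_1$ and $x_k\ge M'$) one can translate both the window and the operators uniformly. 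The translation by $c$ identifies the operators $C_{\mu_i}^{[M'+c,M+c]}(z_i)$ with $C_{\mu_i}^{[M',M]}(z_i)$ and $\beta_{\nu_i,x_i+c}^{*}$ with $\beta_{\nu_i,x_i}^{*}$ under relabeling of sites, leaving the matrix element unchanged; the $c$-dependent factor must then arise entirely from the normalizing prefactor $\prod_i z_i^{M'-1}(1+z_i)^{-M}$, which becomes $\prod_i z_i^{M'+c-1}(1+z_i)^{-(M+c)}$ after the window shift. The ratio of the new prefactor to the old is precisely $\prod_{i=1}^{k}z_i^{c}(1+z_i)^{-c}=\prod_{i=1}^{k}\bigl(z_i/(1+z_i)\bigr)^{c}$, yielding the stated identity.

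The main obstacle is the bookkeeping in the $\psi$-case: one must confirm that the ordered product $\prod^{\curvearrowright}$, the vacuum and dual vacuum, and the recursion-driven site structure are all genuinely translation-covariant, so that sliding the whole configuration together with the interval really does leave the bare matrix element invariant and isolates the $c$-power entirely in the prefactor. Concretely I would phrase this as: the assignment $i\mapsto i+c$ induces an algebra isomorphism $\mathcal{B}^{[M',M]}\xrightarrow{\sim}\mathcal{B}^{[M'+c,M+c]}$ intertwining the $L$-operators (hence the $C_a$'s, by their construction from the monodromy matrix) and matching vacuum to vacuum, so the matrix element is preserved exactly. Once this covariance is made precise the computation of the prefactor ratio is routine, and combining it with the window-independence from the preceding Proposition closes the argument.
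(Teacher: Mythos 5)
Your proposal is correct and follows essentially the same route as the paper, whose proof consists of the single sentence that the claim follows from the definitions: for $h$ the $\tau$-independent factor $\prod_i(z_i/(1+z_i))^c$ pulls out of the sum, and for $\psi$ the translation $i\mapsto i+c$ of sites and of the window $[M',M]$ preserves the matrix element while the prefactor contributes exactly $\prod_i(z_i/(1+z_i))^c$. You have simply supplied the details the author left implicit, including the correct use of the window-independence proposition in the $\psi$ case.
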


\begin{proof}
It follows from the definition of $h^{\vec{\mu}}_{\vec{z}}$ and $\psi^{\vec{\mu}}_{\vec{z}}$. 
\end{proof}

\begin{prop}\label{prop:symmetry}
Both $h^{\vec{\mu}}_{\vec{z}}$ and $\psi^{\vec{\mu}}_{\vec{z}}$ satisfy 
the following symmetry conditions. 
\begin{enumerate}
 \item If $\vec{\mu} \in I_{k_{1}, \ldots , k_{r}}$ satisfies $\mu_{i}=\mu_{i+1}$, then 
$F^{\ldots , \mu_{i}, \mu_{i+1}, \ldots}_{\ldots , z_{i}, z_{i+1}, \ldots}=
F^{\ldots , \mu_{i}, \mu_{i+1}, \ldots}_{\ldots , z_{i+1}, z_{i}, \ldots}$. 
 \item If $\vec{\mu} \in I_{k_{1}, \ldots , k_{r}}$ satisfies $\mu_{i}<\mu_{i+1}$, then 
\begin{align*}
q \, F_{\ldots , z_{i}, z_{i+1}, \ldots}^{\ldots , \mu_{i}, \mu_{i+1}, \ldots}=
f(z_{i}, z_{i+1})F_{\ldots , z_{i+1}, z_{i}, \ldots}^{\ldots , \mu_{i+1}, \mu_{i}, \ldots}-
g(z_{i+1}, z_{i})F_{\ldots , z_{i}, z_{i+1}, \ldots}^{\ldots , \mu_{i+1}, \mu_{i}, \ldots},  
\end{align*}
where $f(z, w)$ and $g(z, w)$ are defined by \eqref{eq:def-fg}. 
\end{enumerate} 
\end{prop}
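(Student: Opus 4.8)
The plan is to prove each of the two relations for each of the two functions separately, drawing in each case on the algebraic mechanism appropriate to that function. For $\psi_{\vec z}^{\vec\mu}$ the relevant input is the set of commutation relations among the operators $C_a^{[M',M]}(z)$ in Proposition \ref{prop:commrel-AC}; for $h_{\vec z}^{\vec\mu}$ it is the intertwining property of the operators $\phi(\tau;\vec z)$ recorded in Lemma \ref{lem:phi-decompose} together with the explicit action of the Hecke generator $R_i$ on monomial vectors. Throughout, write $\sigma_i\vec z$ for the tuple obtained from $\vec z$ by interchanging $z_i$ and $z_{i+1}$, and similarly $\sigma_i\vec\mu$.

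For $\psi_{\vec z}^{\vec\mu}$ I would first note that the only factor in its definition which is not manifestly symmetric in the relevant pair of variables is the ordered product $\prod_{1\le j\le k}^{\curvearrowright}C_{\mu_j}^{[M',M]}(z_j)$, since the scalar prefactor is symmetric and the vectors $u_{\nu_1}\otimes\cdots\otimes u_{\nu_k}$ are independent of $\vec z$. When $\mu_i=\mu_{i+1}$, the adjacent factors $C_{\mu_i}(z_i)C_{\mu_i}(z_{i+1})$ commute by Proposition \ref{prop:commrel-AC}, so interchanging $z_i$ and $z_{i+1}$ fixes the product and hence $\psi$, giving (i). When $\mu_i=a<b=\mu_{i+1}$, I would apply the relation \eqref{eq:commrel-CC} to the adjacent pair and multiply by the common prefix and suffix of $C$-operators; the three resulting products are exactly those defining $\psi_{\vec z}^{\vec\mu}$, $\psi_{\sigma_i\vec z}^{\sigma_i\vec\mu}$ and $\psi_{\vec z}^{\sigma_i\vec\mu}$. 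The factor $q^2$ appearing in \eqref{eq:commrel-CC} is then converted to the $q$ of the statement by the inversion-number weight $q^{t(\vec\nu)-t(\vec\mu)}$ in the definition of $\psi$, because $\sigma_i\vec\mu$ has inversion number $t(\vec\mu)+1$.

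For $h_{\vec z}^{\vec\mu}$ I would evaluate $h_{\sigma_i\vec z}^{\vec\lambda}$ for an arbitrary color word $\vec\lambda$ by reindexing the sum over $\mathfrak S_k$. Substituting $\tau=\rho\sigma_i$ and using Lemma \ref{lem:phi-decompose} with the identity $(\sigma_i\vec z)_{\sigma_i^{-1}(j)}=z_j$ gives $\phi(\rho\sigma_i;\sigma_i\vec z)=\phi(\rho;\vec z)\,Y_i(z_{i+1},z_i)$, while the monomial weight becomes $\prod_j\bigl(z_{\rho^{-1}(j)}/(1+z_{\rho^{-1}(j)})\bigr)^{x_j}$; thus $h_{\sigma_i\vec z}^{\vec\lambda}$ is the same sum as $h_{\vec z}$ but with $Y_i(z_{i+1},z_i)$ inserted before $u_{\lambda_1}\otimes\cdots\otimes u_{\lambda_k}$. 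Writing $Y_i(z_{i+1},z_i)=f(z_{i+1},z_i)^{-1}\bigl(R_i+g(z_{i+1},z_i)\bigr)$ and using the explicit action of $R$ on $u_{\lambda_i}\otimes u_{\lambda_{i+1}}$ together with $g(w,z)=f(z,w)-1$, one rewrites the result in terms of $h_{\vec z}^{\vec\lambda}$ and $h_{\vec z}^{\sigma_i\vec\lambda}$. Folding in the prefactor $\prod_{l<m}f(z_l,z_m)$, which under the interchange changes only by the factor $f(z_{i+1},z_i)/f(z_i,z_{i+1})$, cancels the scalar $f(z_{i+1},z_i)^{-1}$ from $Y_i$; specializing $\vec\lambda$ to a word with $\lambda_i=\lambda_{i+1}$ yields (i) (where $R_i$ acts as the identity on the monomial), and to one with $\lambda_i>\lambda_{i+1}$ yields (ii) (where $R_i$ produces $q\,u_{\lambda_{i+1}}\otimes u_{\lambda_i}$).

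The computations are essentially mechanical once set up, and I expect the only delicate point to be the reconciliation of the two normalizations. For $h$ this is the cancellation between the transformation of $\prod_{l<m}f(z_l,z_m)$ and the scalar prefactor of $Y_i$; for $\psi$ it is the conversion of $q^2$ into $q$ through the inversion-number weight. Once these two normalization checks are in place, both functions satisfy the identical pair of relations (i) and (ii), which is exactly the structure that the later comparison of $h_{\vec z}^{\vec\mu}$ and $\psi_{\vec z}^{\vec\mu}$ will exploit.
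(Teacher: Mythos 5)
Your proposal is correct and follows exactly the route the paper takes: its proof of this proposition is a one-line citation of Lemma \ref{lem:phi-decompose} (for $h^{\vec{\mu}}_{\vec{z}}$, via the reindexing $\tau=\rho\sigma_{i}$ and the action of $R_{i}$) and Proposition \ref{prop:commrel-AC} (for $\psi^{\vec{\mu}}_{\vec{z}}$, with the inversion-number weight converting $q^{2}$ to $q$). You have simply written out the details that the paper leaves implicit, and both normalization checks you flag are handled correctly.
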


\begin{proof}
It follows from Lemma \ref{lem:phi-decompose} and Proposition \ref{prop:commrel-AC}. 
\end{proof}

Proposition \ref{prop:shift} and Proposition \ref{prop:symmetry} imply that, 
in order to prove Theorem \ref{thm:main},  
it suffices to show $h^{\vec{\mu}}_{\vec{z}}(x)=\psi^{\vec{\mu}}_{\vec{z}}(x)$ for 
$\vec{\mu}=(r^{k_{r}}, \ldots , 1^{k_{1}})$ and 
$x \in L_{k}^{+}$ such that $x_{k}=1$.  
Note that 
\begin{align*}
h_{z}^{a}(1)=\frac{z}{1+z} u_{a}=\psi_{z}^{a}(1)  
\end{align*}
for $z \in \mathbb{C}$ and $1\le a \le r$. 
Hence Theorem \ref{thm:main} follows from the recurrence relation below. 

\begin{thm}\label{thm:recursion}
Suppose that $x_{1}\ge \cdots \ge x_{k-1} \ge 1$ and $\vec{z} \in \mathbb{C}^{k}$. 
Both $h^{\vec{\mu}}_{\vec{z}}$ and $\psi^{\vec{\mu}}_{\vec{z}}$ satisfy the following relation. 
\begin{align}
& \label{eq:recursion} 
F_{\vec{z}}^{r^{k_{r}}, \ldots , 1^{k_{1}}}(x_{1}, \ldots , x_{k-1}, 1)=
\sum_{a=1}^{r}q^{\sum_{p=1}^{a-1}k_{p}} \\ 
& {}\times 
\sum_{t=0}^{r-a}\sum_{r\ge p(t)>\cdots >p(1)>p(0)=a} 
\sum_{\ell(t) \in J_{p(t)}} \cdots \sum_{\ell(0) \in J_{p(0)}} 
\frac{z_{\ell(t)}}{1+z_{\ell(t)}} 
\prod_{s=1}^{t}g(z_{\ell(s)}, z_{\ell(s-1)}) 
\nonumber \\ 
& \quad {}\times 
\prod_{s=0}^{t}(
\prod_{\begin{subarray}{c} i \in J_{p(s+1)-1} \cup \cdots \cup J_{p(s)} \\ i\not=\ell(s) \end{subarray}}
f(z_{i}, z_{\ell(s)})) \, 
F_{\vec{z}(\ell(t), \ldots , \ell(0))}^{r^{k_{r}}, \ldots , a^{k_{a}-1}, \ldots , 1^{k_{1}}}
(x_{1}, \ldots , x_{k-1}) \otimes u_{a},  
\nonumber
\end{align}
where we set $J_{p(t+1)-1}=J_{r}$ by agreement in the rightmost product, $J_{p} \, (1\le p \le r)$ is given by 
\begin{align}
J_{p}=\{ i \in \mathbb{Z} \, | \, 
k_{r}+\cdots +k_{p+1}<i\le k_{r}+\cdots +k_{p} \}, 
\label{eq:def-J_{p}}
\end{align}
and $\vec{z}(\ell(t), \ldots , \ell(0)) \in \mathbb{C}^{k-1}$ is defined by 
\begin{align}
\vec{z}(\ell(t),  \ldots , \ell(0))_{i}=\left\{
\begin{array}{ll}
z_{i} & (1\le i <\ell(0), i\not=\ell(t), \ldots , \ell(1)) \\ 
z_{\ell(s-1)} & (i=\ell(s), 1\le s \le t) \\ 
z_{i+1} & (\ell(0)\le i \le k-1). 
\end{array}
\right.  
\label{eq:def-z-ell}
\end{align}
\end{thm}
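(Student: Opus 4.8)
The plan is to establish the recurrence \eqref{eq:recursion} for $h^{\vec\mu}_{\vec z}$ and for $\psi^{\vec\mu}_{\vec z}$ by two independent computations, each proceeding from the definition of the respective function; since both functions agree at $k=1$ via $h^a_z(1)=\frac{z}{1+z}u_a=\psi^a_z(1)$ and since \eqref{eq:recursion} reduces the number of particles from $k$ to $k-1$, this completes the induction yielding Theorem~\ref{thm:main}. Throughout I would exploit that, by Proposition~\ref{prop:shift}, one may normalize $M'=1$ so that the prefactor $z_i^{M'-1}$ disappears and the rightmost particle sits at the boundary site $x_k=1$; the right-hand side of \eqref{eq:recursion} should then be read as the effect of removing exactly this boundary particle, whose color is recorded by the extracted last tensor slot $u_a$.

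For $\psi^{\vec\mu}_{\vec z}$ I would work directly with the matrix element of $\prod^{\curvearrowright}_i C_{\mu_i}^{[1,M]}(z_i)$. With $M'=1$ the last creation operator $\beta^*_{\nu_k,1}$ is anchored at the boundary site, and to isolate $u_a$ in the last slot I would fix $\nu_k=a$. The idea is to bring a suitable color operator towards the boundary using Proposition~\ref{prop:commrel-AC}: each passage past a neighbor contributes, through the swap term of \eqref{eq:commrel-CC}, a factor built from $f$, while the exchange term of \eqref{eq:commrel-CC} produces a transposition of spectral parameters together with a factor $g$. Iterating this normal ordering, together with the boundary recursions \eqref{eq:recursion-C1} and \eqref{eq:recursion-A} that resolve the action on site $1$, should generate precisely the nested sum over chains $a=p(0)<\cdots<p(t)\le r$ and indices $\ell(s)\in J_{p(s)}$, the cascade of exchanges accounting for the reshuffled parameters $\vec z(\ell(t),\ldots,\ell(0))$ of \eqref{eq:def-z-ell}; the residual operator product, now supported away from the extracted excitation, reassembles $\psi$ for $k-1$ particles.

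For $h^{\vec\mu}_{\vec z}$ I would instead expand the sum over $\mathfrak{S}_k$ and group the permutations by the value $\ell(t)=\tau^{-1}(k)$, the index of the spectral parameter placed in the last slot; since $x_k=1$ this contributes exactly the weight $z_{\ell(t)}/(1+z_{\ell(t)})$. Writing $\tau$ as a permutation of the first $k-1$ strands composed with the cycle carrying $\ell(t)$ to the end, and invoking Lemma~\ref{lem:phi-decompose}, the final step is governed by the explicit operator $Y_i=f(z,w)^{-1}(R_i+g(z,w))$: the last-slot color is produced from the color $p(t)=\mu_{\ell(t)}$ of the moving strand, and converting it down to the target color $a$ forces exchanges with intermediate strands $\ell(t-1),\ldots,\ell(0)$ of strictly decreasing colors $p(t-1)>\cdots>p(0)=a$, each exchange contributing a factor $g$ from $Y$ and swapping the attached parameter exactly as in \eqref{eq:def-z-ell}; the diagonal passages through same-color blocks, together with the prefactor $\prod_{i<j}f(z_i,z_j)$, should account for the $f$-products over $J_{p(s)}\cup\cdots\cup J_{p(s+1)-1}$ and for the weight $q^{\sum_{p<a}k_p}$, while the surviving factor $\phi$ on the first $k-1$ slots reproduces $h$ for $k-1$ particles.

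The hard part will be the combinatorial bookkeeping showing that these two very different derivations terminate on the identical right-hand side of \eqref{eq:recursion}. On each side one must verify, beyond the mere appearance of the chains $p(s)$ and indices $\ell(s)$, the exact coefficients: the prefactor $q^{\sum_{p<a}k_p}$, the grouping of the $f$-products over the color blocks $J_{p(s+1)-1}\cup\cdots\cup J_{p(s)}$ with the convention $J_{p(t+1)-1}=J_r$, and the precise parameter reshuffling \eqref{eq:def-z-ell}. Reconciling the natural direction of the cascade in each picture—driven by the commutation relations of Proposition~\ref{prop:commrel-AC} on the bosonic side and by the $R$-matrix action on the Hecke side—and confirming that the leftover operator product, respectively the leftover $\phi$, is exactly the $(k-1)$-particle object with reduced color content $(r^{k_r},\ldots,a^{k_a-1},\ldots,1^{k_1})$, is where the main effort lies.
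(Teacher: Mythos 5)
Your strategy coincides with the paper's: for $h^{\vec\mu}_{\vec z}$ the sum over $\mathfrak{S}_k$ is split according to $\ell=\tau^{-1}(k)$, the factor $z_\ell/(1+z_\ell)$ is peeled off, and Lemma \ref{lem:phi-decompose} reduces everything to the action of $\prod^{\curvearrowleft}Y_{i-1}(z_\ell,z_i)$ on $u(r^{k_r},\ldots,1^{k_1})$; for $\psi^{\vec\mu}_{\vec z}$ one normalizes $M'=1$, splits over $\nu_k=a$, and normal-orders $\beta^*_{a,1}$ through the string of $C$-operators using \eqref{eq:recursion-C1}, \eqref{eq:recursion-A} and Proposition \ref{prop:commrel-AC}. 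So the route is right. But as written the proposal stops exactly where the proof begins: the entire content of the theorem is the closed form of the cascade, and you have deferred it to ``combinatorial bookkeeping'' without supplying the identities that make it close.

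Concretely, on the Hecke side you need the exact statement of how $\prod^{\curvearrowleft}(f(w,z_i)Y_{i-1}(w,z_i))$ acts on $u(b,a^m)$ (it produces $q^m u(a^m,b)$ plus a single $g$-weighted sum, the paper's Lemma \ref{lem:Yba}), and then its iteration over color blocks. The iteration is not automatic: two separate cascades appear (one from commuting the extracted strand past each block, one from the induction hypothesis), and they only merge into the single chain $r\ge p(t)>\cdots>p(0)=a$ of \eqref{eq:recursion} via the rational identity
\begin{align*}
(-1)^{t-1}\prod_{s=1}^{t-1}g(z_{\ell(s)}, z_{\ell(s+1)})+\frac{g(w, z_{\ell(t)})}{g(z_{\ell(t)}, z_{\ell(1)})}\prod_{s=1}^{t-1}g(z_{\ell(s+1)}, z_{\ell(s)})
=\frac{g(w, z_{\ell(t)})}{g(w, z_{\ell(1)})}\prod_{s=1}^{t-1}g(z_{\ell(s+1)}, z_{\ell(s)}),
\end{align*}
which your sketch does not anticipate. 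On the bosonic side the normal ordering is only tractable modulo the right ideal $\sum_{p}\beta^*_{p,1}\mathcal{B}^{[1,M]}$ (killed by the dual vacuum), a device you do not mention, and each step of moving $\tilde A_a(z_\ell)=q^{2\sum_{p\ge a}N_{p,1}}A^{[2,M]}(z_\ell)$ back to the left requires telescoping identities such as
\begin{align*}
\prod_{i=1}^{\ell-1}f(z_{i}, z_{\ell})+\sum_{t=\ell+1}^{m}g(z_{\ell}, z_{t})
\prod_{\begin{subarray}{c} i=1 \\ i\not=\ell \end{subarray}}^{t-1}f(z_{i}, z_{\ell})
=\prod_{\begin{subarray}{c} i=1 \\ i\not=\ell \end{subarray}}^{m}f(z_{i}, z_{\ell}).
\end{align*}
Until these lemmas are stated and proved (by induction on the block length and on the number of colors), the claim that both computations ``terminate on the identical right-hand side'' is an announcement, not an argument.
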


\begin{ex}
The $r=1$ case of the recurrence relation \eqref{eq:recursion} reads as 
\begin{align*}
F^{1^{k}}_{\vec{z}}(x_{1}, \ldots , x_{k-1}, 1)=\sum_{\ell=1}^{k}\frac{z_{\ell}}{1+z_{\ell}}
\prod_{0<i\le k \atop i\not=\ell}f(z_{i}, z_{\ell}) 
F^{1^{k-1}}_{\vec{z}(\ell)}(x_{1}, \ldots , x_{k-1}) \otimes u_{1},  
\end{align*}  
where 
$\vec{z}(\ell)=(z_{1}, \ldots , z_{\ell-1}, z_{\ell+1}, \ldots , z_{k})$. 
In the case of $r=2$, it reads as follows: 
\begin{align*}
&
F^{2^{k_{2}}, 1^{k_{1}}}_{\vec{z}}(x_{1}, \ldots , x_{k-1}, 1) \\ 
&=\sum_{k_{2}<\ell\le k}\frac{z_{\ell}}{1+z_{\ell}} 
\prod_{0<i \le k \atop i\not=\ell}f(z_{i}, z_{\ell}) 
F^{2^{k_{2}}, 1^{k_{1}-1}}_{\vec{z}(\ell)}(x_{1}, \ldots , x_{k-1}) \otimes u_{1} \\ 
&+\sum_{0<\ell_{1}\le k_{2} \atop k_{2}<\ell_{0}\le k}
\frac{z_{\ell_{1}}}{1+z_{\ell_{1}}}g(z_{\ell_{1}}, z_{\ell_{0}}) 
\prod_{0< i \le k_{2} \atop i\not=\ell_{1}}f(z_{i}, z_{\ell_{1}}) 
\prod_{k_{2}< i \le k \atop i\not=\ell_{0}}f(z_{i}, z_{\ell_{0}}) 
F^{2^{k_{2}}, 1^{k_{1}-1}}_{\vec{z}(\ell_{1}, \ell_{0})}(x_{1}, \ldots , x_{k-1}) \otimes u_{1}  \\ 
&+q^{k_{1}}\sum_{0<\ell\le k_{2}}\frac{z_{\ell}}{1+z_{\ell}} 
\prod_{0<i \le k_{2} \atop i\not=\ell}f(z_{i}, z_{\ell}) 
F^{2^{k_{2}-1}, 1^{k_{1}}}_{\vec{z}(\ell)}(x_{1}, \ldots , x_{k-1}) \otimes u_{2},  
\end{align*}
where 
\begin{align*}
\vec{z}(\ell_{1}, \ell_{0})=(z_{1}, \ldots , \underset{\hbox{\scriptsize $\ell_{1}$-th}}{z_{\ell_{0}}}, 
\ldots , z_{\ell_{0}-1}, z_{\ell_{0}+1}, \ldots , z_{k}).  
\end{align*}
\end{ex}

In the rest of this paper we prove the recurrence relation \eqref{eq:recursion} for 
$h_{\vec{z}}^{\vec{\mu}}$ and $\psi_{\vec{z}}^{\vec{\mu}}$ successively.

\subsection{Proof for $h_{\vec{z}}^{\vec{\mu}}$}

First we prove the relation \eqref{eq:recursion} for 
$h^{\vec{\mu}}_{\vec{z}}$. 
For that purpose we will show some relations on the space $U_{0} \otimes (U^{\otimes m})$, 
where $m$ is a positive integer and $U_{0}$ is a copy of $U$ which we regard as the zeroth component.  
Then we set $R_{i}=(1^{\otimes i}) \otimes R \otimes (1^{\otimes (m-i-1)}) \in 
\mathrm{End}(U_{0} \otimes (U^{\otimes m}))$ 
and define $Y_{i}(z, w)$ by \eqref{eq:def-Y_{i}} for $0\le i<m$. 

Let $m$ be a positive integer. 
For an interval $J$ included in $[1, m]$ (resp. $[2, m]$) and $\ell \in J$, 
we define the operator $Z_{\ell}^{J}(\vec{z}) \, (\vec{z}\in\mathbb{C}^{m})$ 
acting on $U_{0}\otimes (U^{\otimes m})$ (resp. $U^{\otimes m}$) by 
\begin{align*}
Z_{\ell}^{J}(\vec{z})=\prod_{\begin{subarray}{c} i \in J \\ i<\ell \end{subarray}}f(z_{\ell}, z_{i})
\prod_{\begin{subarray}{c} i \in J \\ i>\ell \end{subarray}}f(z_{i}, z_{\ell})
\prod_{\begin{subarray}{c} i \in J \\ i<\ell \end{subarray}}^{\curvearrowleft}Y_{i-1}(z_{\ell}, z_{i}).  
\end{align*}

Hereafter we set 
\begin{align*}
u(\nu_{1}, \ldots , \nu_{m})=u_{\nu_{1}} \otimes \cdots \otimes u_{\nu_{m}} 
\end{align*}
for simplicity. 

\begin{lem}\label{lem:Yba}
Suppose that $r \ge b>a \ge 1$ and $m \ge 1$. 
Then it holds that 
\begin{align*}
\prod_{1\le i \le m}^{\curvearrowleft}(f(w, z_{i})Y_{i-1}(w, z_{i}))u(b, a^{m})=q^{m}u(a^{m}, b)+
\sum_{\ell=1}^{m}g(w, z_{\ell})Z_{\ell}^{[1, m]}(\vec{z}) u(b, a^{m})
\end{align*} 
on $U_{0}\otimes (U^{\otimes m})$. 
\end{lem}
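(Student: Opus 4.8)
The plan is to prove Lemma \ref{lem:Yba} by induction on $m$, peeling off one factor at a time from the ordered product on the left. Writing $P_m = \prod_{1\le i \le m}^{\curvearrowleft}(f(w, z_{i})Y_{i-1}(w, z_{i}))$, I would first establish the base case $m=1$ by direct computation: applying $f(w,z_1)Y_0(w,z_1)=R_0+g(w,z_1)$ to $u(b,a)$ and using the definition of $R$ on $U^{\otimes 2}$ (with $b>a$, so $R(u_b\otimes u_a)=q\,u_a\otimes u_b$) gives $q\,u(a,b)+g(w,z_1)u(b,a)$, which matches since $Z_1^{[1,1]}(\vec z)=1$. For the inductive step I would write $P_m = (f(w,z_m)Y_{m-1}(w,z_m))\,P_{m-1}$, apply the inductive hypothesis to $P_{m-1}u(b,a^{m-1})$, and then push the outermost operator $R_{m-1}+g(w,z_m)$ through the resulting expression, tracking the two pieces (the $q^{m-1}u(a^{m-1},b)$ term and the sum of $Z_\ell^{[1,m-1]}$ terms) separately.

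The computation splits into handling how $R_{m-1}$ acts on each piece. For the leading term I would need $R_{m-1}$ applied to $q^{m-1}u(a^{m-1},b)\otimes u_a$-type vectors; the key is that $R_{m-1}$ acts on the last two tensor slots, where the entries are $b>a$, producing the factor $q$ and reshuffling $b$ past $a$, which is exactly what advances $q^{m-1}u(a^{m-1},b)$ to $q^{m}u(a^m,b)$ and also contributes the missing $\ell=m$ term $g(w,z_m)Z_m^{[1,m]}(\vec z)u(b,a^m)$ once the $g(w,z_m)$ part of $f(w,z_m)Y_{m-1}$ is included. For the summation terms I would verify that conjugating $Z_\ell^{[1,m-1]}(\vec z)$ by $f(w,z_m)Y_{m-1}(w,z_m)$ reproduces $Z_\ell^{[1,m]}(\vec z)$: this amounts to checking that the extra factor $f(z_m,z_\ell)$ in the scalar prefactor (from the new index $i=m>\ell$) together with the operator $Y_{m-1}(w,z_m)$ correctly extends the reverse-ordered product $\prod^{\curvearrowleft}Y_{i-1}$ defining $Z_\ell^{[1,m]}$.

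The main obstacle I expect is the bookkeeping in the inductive step, specifically showing that the operators $f(w,z_m)Y_{m-1}(w,z_m)$ and $Z_\ell^{[1,m-1]}(\vec z)$ interact correctly. Since $Z_\ell^{[1,m-1]}$ is built from $Y_{i-1}$ with $i<\ell\le m-1$, these commute with $R_{m-1}$ when $\ell<m-1$ by the locality relation $R_iR_j=R_jR_i$ for $|i-j|\ge 2$, but the braid relation $R_iR_{i+1}R_i=R_{i+1}R_iR_{i+1}$ governs the boundary case and is where the Yang--Baxter-type identity $Y_i(z,w)Y_{i+1}(z,u)Y_i(w,u)=Y_{i+1}(w,u)Y_i(z,u)Y_{i+1}(z,w)$ for the operators $Y$ (a consequence of the Hecke relations stated in the excerpt) must be invoked to realign indices. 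I would isolate this as a separate combinatorial claim about how $Z_\ell^{J}$ transforms under multiplication by a single $Y$-factor, prove that claim using the Hecke relations, and then assemble the pieces; the scalar factors $f$ and $g$ should then match by the defining identity $g(z,w)=f(w,z)-1$ from \eqref{eq:def-fg}.
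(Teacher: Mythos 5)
Your induction peels off the \emph{outermost} factor $f(w,z_m)Y_{m-1}(w,z_m)$, whereas the paper peels off the innermost one $f(w,z_1)Y_0(w,z_1)$ and applies the induction hypothesis to the shifted vector $u(a,b,a^m)$; your route is workable, but the mechanism you describe for the inductive step breaks at the decisive point. For $\ell\le m-1$ the vector $Z_\ell^{[1,m-1]}(\vec{z})\,u(b,a^m)$ carries $u_a$ in both slots $m-1$ and $m$ (the $Y$'s inside $Z_\ell^{[1,m-1]}$ only touch slots $0,\dots,\ell-1\le m-2$, so they commute with $R_{m-1}$ by locality alone --- no braid relation enters anywhere). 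Hence $R_{m-1}$ acts on that vector as the identity and
\begin{align*}
\bigl(f(w,z_m)Y_{m-1}(w,z_m)\bigr)\,Z_\ell^{[1,m-1]}(\vec{z})\,u(b,a^m)
=f(z_m,w)\,Z_\ell^{[1,m-1]}(\vec{z})\,u(b,a^m),
\end{align*}
whereas what the target formula requires is $Z_\ell^{[1,m]}(\vec{z})\,u(b,a^m)=f(z_m,z_\ell)\,Z_\ell^{[1,m-1]}(\vec{z})\,u(b,a^m)$ (for $\ell<m$ no new $Y$-factor appears in $Z_\ell^{[1,m]}$, only the extra scalar $f(z_m,z_\ell)$). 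So ``conjugating $Z_\ell^{[1,m-1]}$ by $f(w,z_m)Y_{m-1}(w,z_m)$'' does \emph{not} reproduce $Z_\ell^{[1,m]}$, and the leading term alone yields only the single summand $q^{m-1}g(w,z_m)u(a^{m-1},b,a)$ of the $\ell=m$ contribution, not all of $g(w,z_m)Z_m^{[1,m]}(\vec{z})u(b,a^m)$.

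The missing idea is that the per-summand discrepancy $g(w,z_\ell)\bigl(f(z_m,w)-f(z_m,z_\ell)\bigr)$ is exactly what assembles the rest of the $\ell=m$ term. One needs the rational identity $g(w,z_\ell)\bigl(f(z_m,w)-f(z_m,z_\ell)\bigr)=g(w,z_m)\,g(z_m,z_\ell)$ --- the identity $g(z,w)=f(w,z)-1$ that you invoke is not sufficient --- together with a second application of the induction hypothesis, with $w$ replaced by $z_m$, to recognize
\begin{align*}
Z_m^{[1,m]}(\vec{z})\,u(b,a^m)=q^{m-1}u(a^{m-1},b,a)+\sum_{\ell=1}^{m-1}g(z_m,z_\ell)\,Z_\ell^{[1,m-1]}(\vec{z})\,u(b,a^m).
\end{align*}
With these two ingredients your outer-peeling induction closes (and is, if anything, slightly cleaner than the paper's, which instead needs the operator identity rewriting $u(a,b,a^m)$ via $Y_0(z_\ell,z_1)$ and the analogous partial-fraction identity $g(w,z_1)\prod_{i\ge 2}f(z_i,w)-\sum_{\ell\ge 2}g(w,z_\ell)g(z_\ell,z_1)\prod_{i\ne\ell}f(z_i,z_\ell)=g(w,z_1)\prod_{i\ge 2}f(z_i,z_1)$); without them the summation terms simply fail to match and the induction does not close as you have set it up.
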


\begin{proof}
The proof is by induction on $m$. 
When $m=1$ it follows from the definition of $Y_{0}(w, z)$. 
Suppose that it holds for a positive integer $m$. We have 
\begin{align}
& \label{eq:Yba}
\prod_{1\le i \le m+1}^{\curvearrowleft}(f(w, z_{i})Y_{i-1}(w, z_{i}))u(b, a^{m+1}) \\ 
&=\prod_{2\le i \le m+1}^{\curvearrowleft}(f(w, z_{i})Y_{i-1}(w, z_{i}))
\left\{q\,u(a, b, a^{m})+g(w, z_{1}) u(b, a^{m+1}) \right\} 
\nonumber \\ 
&=q \, \prod_{2\le i \le m+1}^{\curvearrowleft}(f(w, z_{i})Y_{i-1}(w, z_{i}))u(a, b, a^{m})+
g(w, z_{1})\prod_{i=2}^{m+1}f(z_{i}, w)  u(b, a^{m+1}). 
\nonumber 
\end{align}
By the induction hypothesis the first term in the right hand side is equal to 
\begin{align}
q\left\{q^{m}\, u(a^{m}, b)+
\sum_{\ell=2}^{m+1}g(w, z_{\ell})Z_{\ell}^{[2, m+1]}(\vec{z}) u(a, b, a^{m})  
\right\}.
\label{eq:Yba-2}
\end{align}
Now use the equality 
\begin{align*}
u(a, b, a^{m})=q^{-1}\left\{ 
f(z_{\ell}, z_{1})Y_{0}(z_{\ell}, z_{1})-g(z_{\ell}, z_{1})\right\}u(b, a^{m+1})    
\end{align*}
to substitute for the second term of \eqref{eq:Yba-2},  
and we see that the right hand side of \eqref{eq:Yba} is equal to  
\begin{align*}
& 
q^{m+1}\, u(a^{m}, b)+
\sum_{\ell=2}^{m+1}g(w, z_{\ell})Z_{\ell}^{[1, m+1]}(\vec{z}) u(b, a^{m+1})  \\ 
&+\left\{ 
g(w, z_{1})\prod_{i=2}^{m+1}f(z_{i}, w)-\sum_{\ell=2}^{m+1}
g(w, z_{\ell})g(z_{\ell}, z_{1})\prod_{\begin{subarray}{c} i=2 \\ i\not=\ell \end{subarray}}^{m+1}
f(z_{i}, z_{\ell})  
\right\}u(b, a^{m+1}). 
\end{align*}
The coefficient of the third term above is equal to 
$g(w, z_{1})\prod_{i=2}^{m+1}f(z_{i}, z_{1})$. 
Thus we obtain the desired equality for $m+1$. 
\end{proof}

\begin{lem}\label{lem:back-b}
Suppose that $r \ge b>a \ge 1$ and 
$k_{a}, k_{a-1}, \ldots , k_{1}$ are non-negative integers. 
Define the intervals $J_{p} \, (1\le p\le a)$ by 
\begin{align}
J_{p}=\{ i \in \mathbb{Z} \, | \, 
k_{a}+\cdots +k_{p+1}<i\le k_{a}+\cdots +k_{p} \}.  
\label{eq:def-J_{p}-from-a}
\end{align}
Then it holds that 
\begin{align*}
&
q^{\sum_{p=2}^{a}k_{p}}\sum_{\ell(1)\in J_{1}}g(w, z_{\ell(1)})Z_{\ell(1)}^{J_{1}}(\vec{z}) 
u(a^{k_{a}}, \ldots , 2^{k_{2}}, b, 1^{k_{1}}) \\ 
&=\sum_{t=1}^{a}
\sum_{a\ge p(t)>\cdots >p(2)>p(1)=1}
\sum_{\ell(t) \in J_{p(t)}} \cdots \sum_{\ell(1) \in J_{p(1)}} 
(-1)^{t-1}\prod_{s=1}^{t}g(z_{\ell(s-1)}, z_{\ell(s)}) \\ 
& \quad {}\times 
\prod_{1\le s \le t}^{\curvearrowleft}
Z_{\ell(s)}^{J_{p(s+1)-1}\cup \cdots \cup J_{p(s)}}(\vec{z})
u(b, a^{k_{a}}, \ldots , 1^{k_{1}})
\end{align*}
on $U_{0}\otimes (U^{\otimes (k_{a}+\cdots +k_{1})})$, 
where we set $z_{\ell(0)}=w$ and $J_{p(t+1)-1}=J_{a}$ in the right hand side.  
\end{lem}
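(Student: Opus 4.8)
The plan is to prove the identity by induction on $a$. The base case $a=1$ is immediate: only the chain $t=1$, $p(1)=1$ occurs, and with the conventions $z_{\ell(0)}=w$ and $J_{p(2)-1}=J_a=J_1$ the right-hand side reduces to $\sum_{\ell(1)\in J_1}g(w,z_{\ell(1)})Z_{\ell(1)}^{J_1}(\vec{z})\,u(b,1^{k_1})$, while the prefactor $q^{\sum_{p=2}^{1}k_p}$ on the left is empty; so the two sides agree verbatim.

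For the inductive step I would peel off the top block $J_a=\{1,\dots,k_a\}$, which carries the label $a$. Two structural facts drive the argument. First, in every chain only the topmost operator $Z_{\ell(t)}^{J_{p(t+1)-1}\cup\cdots\cup J_{p(t)}}(\vec{z})$ has support reaching $J_a$: the lower operators ($s<t$) have top interval $J_{p(s+1)-1}$ with $p(s+1)-1\le a-2$, hence act only on positions strictly greater than $k_a$ and leave $u_b\otimes u_a^{\otimes k_a}$ on the first $k_a+1$ tensor slots intact. Second, when $\ell=\ell(t)$ lies in a block below $J_a$ one has the factorization
\[
Z_{\ell}^{J_a\cup J'}(\vec{z})=Z_{\ell}^{J'}(\vec{z})\,\prod_{1\le i\le k_a}^{\curvearrowleft}\big(f(z_\ell,z_i)\,Y_{i-1}(z_\ell,z_i)\big),
\]
whose rightmost factor is exactly the operator on the left-hand side of Lemma~\ref{lem:Yba} with $w=z_\ell$ and $m=k_a$.

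Since $b>a$, applying Lemma~\ref{lem:Yba} to this factor splits the topmost factor of every chain with $p(t)<a$ into a pass-through contribution proportional to $q^{k_a}\,Z_{\ell}^{J'}(\vec{z})\,u(a^{k_a},b,\dots)$, in which $b$ has crossed the whole $a$-block, and correction contributions proportional to $\sum_{\ell'\in J_a}g(z_\ell,z_{\ell'})\,Z_{\ell}^{J'}(\vec{z})\,Z_{\ell'}^{J_a}(\vec{z})\,u(b,a^{k_a},\dots)$ (each carrying the parent chain's lower operators and scalar factors). Because $Z_{\ell}^{J'}$ and $Z_{\ell'}^{J_a}$ have disjoint supports—one acting on positions $\ge k_a$, the other on positions $\le k_a-1$—they commute, so each correction is precisely a chain obtained by appending $p(t+1)=a$, $\ell(t+1)=\ell'$. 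That appended chain of length $t+1$ already appears in the right-hand side with sign $(-1)^{t}$, whereas the correction it comes from carries its parent's sign $(-1)^{t-1}$ times the $+g$ produced by Lemma~\ref{lem:Yba}; the two therefore cancel. What survives is the sum of pass-through terms, which after the shift of the $z$-labels by $k_a$ is exactly $q^{k_a}$ times the right-hand side of the statement for $a-1$, with the $a$-block left inert to the left of $b$. The induction hypothesis, together with the observation that $Z^{J_1}$ never touches that inert block, identifies this with the left-hand side for $a$.

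The hard part will be the exact coefficient bookkeeping in this cancellation. One must verify that Lemma~\ref{lem:Yba} consumes precisely the factors $\prod_{i\in J_a}f(z_\ell,z_i)$, so that the surviving $Z_{\ell}^{J'}$ carries exactly the $f$-prefactors prescribed for the reduced top interval $J_{a-1}\cup\cdots\cup J_{p(t)}$, and that the sign flip $(-1)^{t-1}\mapsto(-1)^{t}$ combined with the extra factor $g(z_{\ell(t)},z_{\ell'})$ lines the corrections up term by term with the appended chains. This is a structured elaboration of the single-level coefficient cancellation carried out at the end of the proof of Lemma~\ref{lem:Yba}, now invoked once at each level of the chain; keeping track of the relabelling \eqref{eq:def-z-ell} of the $z$-variables and of the commutation of the disjoint $Z$-operators is where the bulk of the work lies.
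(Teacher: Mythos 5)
Your proposal is correct and is essentially the paper's own argument run in the opposite direction: the paper inducts upward from $a$ to $a+1$, applying the induction hypothesis to the left-hand side first and then invoking Lemma \ref{lem:Yba} to convert $q^{k_{a+1}}u((a+1)^{k_{a+1}},b,\ldots)$ into the pass-through-minus-correction bracket, whereas you start from the right-hand side, split off the $J_a$-factor of the topmost $Z$ via Lemma \ref{lem:Yba}, cancel the corrections against the chains with $p(t)=a$, and finish with the induction hypothesis. The key lemma, the induction on $a$, the disjoint-support commutations and the sign bookkeeping coincide, so this is the same proof.
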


\begin{proof}
We proceed by induction on $a$. 
If $a=1$, it is trivial. 
Suppose that it holds for an integer $a$ such that $b-2\ge a \ge 1$. 
By the induction hypothesis we have 
\begin{align*}
& 
q^{\sum_{p=2}^{a+1}k_{p}}\sum_{\ell(1)\in J_{1}}g(w, z_{\ell(1)})Z_{\ell(1)}^{J_{1}}(\vec{z}) 
u((a+1)^{k_{a+1}}, a^{k_{a}}, \ldots , 2^{k_{2}}, b, 1^{k_{1}}) \\ 
&=q^{k_{a+1}}\sum_{t=1}^{a}
\sum_{a\ge p(t)>\cdots >p(2)>p(1)=1}
\sum_{\ell(t) \in J_{p(t)}} \cdots \sum_{\ell(1) \in J_{p(1)}} 
(-1)^{t-1}\prod_{s=1}^{t}g(z_{\ell(s-1)}, z_{\ell(s)}) \\ 
& \quad {}\times 
\prod_{1\le s \le t}^{\curvearrowleft}
Z_{\ell(s)}^{J_{p(s+1)-1}\cup \cdots \cup J_{p(s)}}(\vec{z})
u((a+1)^{k_{a+1}}, b, a^{k_{a}}, \ldots , 1^{k_{1}}). 
\end{align*} 
The vector 
$q^{k_{a+1}}u((a+1)^{k_{a+1}}, b, a^{k_{a}}, \ldots , 1^{k_{1}})$ in the right hand side is equal to 
\begin{align*}
& 
\left\{ 
\prod_{i \in J_{a+1}}^{\curvearrowleft}(f(z_{\ell(t)}, z_{i})Y_{i-1}(z_{\ell(t)}, z_{i}))-
\sum_{\ell(t+1) \in J_{a+1}}g(z_{\ell(t)}, z_{\ell(t+1)})Z_{\ell(t+1)}^{J_{a+1}}(\vec{z})
\right\} \\ 
& \quad {}\times 
u(b, (a+1)^{k_{a+1}}, a^{k_{a}}, \ldots , 1^{k_{1}})  
\end{align*}
because of Lemma \ref{lem:Yba}. 
Substituting it we see that the equality to prove is also true for $a+1$. 
\end{proof}

Combining Lemma \ref{lem:Yba} and Lemma \ref{lem:back-b}, 
we obtain the following formula. 

\begin{cor}\label{cor:back-b}
Suppose that $r>a \ge 1$ and 
$k_{a}, k_{a-1}, \ldots , k_{1}$ are non-negative integers. 
Define the intervals $J_{p} \, (1\le p\le a)$ by \eqref{eq:def-J_{p}-from-a}. 
Then it holds that 
\begin{align*}
& 
q^{\sum_{p=2}^{a}k_{p}}\prod_{i \in J_{1}}^{\curvearrowleft}
(f_{i}(w, z_{i})Y_{i-1}(w, z_{i})) 
u(a^{k_{a}}, \ldots , 2^{k_{2}}, a+1, 1^{k_{1}}) \\ 
&=q^{\sum_{p=1}^{a}k_{p}}u(a^{k_{a}}, \ldots , 1^{k_{1}}, a+1)\\ 
&+\sum_{t=1}^{a} 
\sum_{a\ge p(t)>\cdots >p(2)>p(1)=1}
\sum_{\ell(t) \in J_{p(t)}} \cdots \sum_{\ell(1) \in J_{p(1)}} 
(-1)^{t-1}\prod_{s=1}^{t}g(z_{\ell(s-1)}, z_{\ell(s)}) \\ 
& \quad {}\times 
\prod_{1\le s \le t}^{\curvearrowleft}
Z_{\ell(s)}^{J_{p(s+1)-1}\cup \cdots \cup J_{p(s)}}(\vec{z})
u(a+1, a^{k_{a}}, \ldots , 1^{k_{1}}). 
\end{align*} 
on $U_{0}\otimes (U^{\otimes (k_{a}+\cdots +k_{1})})$, 
where we set $z_{\ell(0)}=w$ and $J_{p(t+1)-1}=J_{a}$ in the right hand side.  
\end{cor}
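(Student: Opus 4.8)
The plan is to derive Corollary~\ref{cor:back-b} by combining the two preceding lemmas in a single algebraic step. The left-hand side involves the operator
\begin{align*}
\prod_{i \in J_{1}}^{\curvearrowleft}(f(w, z_{i})Y_{i-1}(w, z_{i}))
\end{align*}
applied to the vector $u(a^{k_{a}}, \ldots , 2^{k_{2}}, a+1, 1^{k_{1}})$. The key observation is that the symbol $a+1$ plays exactly the role of the ``$b$'' in Lemma~\ref{lem:Yba} and Lemma~\ref{lem:back-b}, since $a+1>a\ge\cdots\ge 1$, so the hypothesis $b>a$ is satisfied with $b=a+1$. First I would isolate the action on the block $J_{1}$ (the positions carrying the color $1$, together with the inserted symbol $a+1$), where the relevant ordered product acts nontrivially, while the preceding colors $a^{k_{a}},\ldots,2^{k_{2}}$ are spectators as far as the innermost commutation is concerned.

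The main step is to apply Lemma~\ref{lem:Yba} to move $a+1$ rightward past the block of $1$'s. Writing the length of that block as $m=k_{1}$ and identifying the moving symbol as $b=a+1$, Lemma~\ref{lem:Yba} gives
\begin{align*}
\prod_{1\le i \le m}^{\curvearrowleft}(f(w, z_{i})Y_{i-1}(w, z_{i}))\,u(a+1, 1^{m})
=q^{m}u(1^{m}, a+1)+\sum_{\ell=1}^{m}g(w, z_{\ell})Z_{\ell}^{[1, m]}(\vec{z})\,u(a+1, 1^{m}),
\end{align*}
after appropriate relabeling of indices into the interval $J_{1}$. The leading term $q^{m}u(1^{m},a+1)$ produces, once multiplied by the prefactor $q^{\sum_{p=2}^{a}k_{p}}$, exactly the term $q^{\sum_{p=1}^{a}k_{p}}u(a^{k_{a}},\ldots,1^{k_{1}},a+1)$ on the right-hand side of the corollary (this is the $t=0$ contribution). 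The sum over $\ell$, carrying a factor $g(w,z_\ell)Z^{J_1}_\ell(\vec z)$, is precisely the quantity that Lemma~\ref{lem:back-b} evaluates.

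Next I would feed this surviving sum into Lemma~\ref{lem:back-b}, which expands
\begin{align*}
q^{\sum_{p=2}^{a}k_{p}}\sum_{\ell(1)\in J_{1}}g(w, z_{\ell(1)})Z_{\ell(1)}^{J_{1}}(\vec{z})\,
u(a^{k_{a}}, \ldots , 2^{k_{2}}, a+1, 1^{k_{1}})
\end{align*}
into the full alternating sum over chains $a\ge p(t)>\cdots>p(1)=1$ with the nested $Z_{\ell(s)}$ operators. Setting $z_{\ell(0)}=w$ and $J_{p(t+1)-1}=J_{a}$ as prescribed, this is exactly the $t\ge 1$ part of the right-hand side of the corollary. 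Thus the corollary follows by simply adding the $t=0$ term extracted from Lemma~\ref{lem:Yba} to the $t\ge 1$ terms supplied by Lemma~\ref{lem:back-b}.

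The step I expect to require the most care is the bookkeeping of indices: Lemma~\ref{lem:Yba} is stated for a standalone block $u(b,a^m)$ on $U_0\otimes U^{\otimes m}$ with indices $1,\ldots,m$, whereas in the corollary the block of $1$'s sits at positions $J_{1}=\{k_a+\cdots+k_2<i\le k_a+\cdots+k_1\}$ inside a longer tensor, preceded by the spectator colors $a^{k_a},\ldots,2^{k_2}$. One must check that the operators $Y_{i-1}$ and the factors $f,g$ in Lemma~\ref{lem:Yba} translate correctly under this shift of indices into $J_1$, and that the spectator colors to the left genuinely commute through (i.e.\ are unaffected by the $Y_{i-1}$ acting only on the $J_1$-block and the neighboring $a+1$). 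Once this identification is made precise, both lemmas apply verbatim and the two pieces assemble into the claimed identity with no further computation.
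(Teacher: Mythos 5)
Your proposal is correct and follows exactly the paper's route: the paper itself derives this corollary by "combining Lemma \ref{lem:Yba} and Lemma \ref{lem:back-b}", i.e.\ applying Lemma \ref{lem:Yba} with $b=a+1$ (suitably re-indexed onto the block $J_{1}$) to extract the leading $q^{k_{1}}$ term and then feeding the residual $\sum_{\ell\in J_{1}}g(w,z_{\ell})Z_{\ell}^{J_{1}}$ sum into Lemma \ref{lem:back-b}. Your identification of the index-shift bookkeeping as the only point requiring care is also accurate.
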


\begin{prop}\label{prop:recursion-h-prepare} 
Suppose that $r>a \ge 1$ and 
$k_{a}, k_{a-1}, \ldots , k_{1}$ are non-negative integers. 
Define the intervals $J_{p} \, (1\le p \le a)$ by \eqref{eq:def-J_{p}-from-a}. 
Then it holds that 
\begin{align}
& \label{eq:recursion-h-prepare-0} 
\prod_{1 \le i \le k_{a}+\cdots +k_{1}}^{\curvearrowleft}\!\!\!\!(f(w, z_{i})Y_{i-1}(w, z)) \, 
u(a+1, a^{k_{a}}, \ldots , 1^{k_{1}}) \\ 
&=q^{\sum_{p=1}^{a}k_{p}} u(a^{k_{a}}, \ldots , 1^{k_{1}}, a+1) 
\nonumber \\ 
&+\sum_{t=1}^{a}\sum_{a\ge p(t)>\cdots >p(1)\ge 1}q^{\sum_{b=1}^{p(1)-1}k_{b}}
\sum_{\ell(t) \in J_{p(t)}} \cdots \sum_{\ell(1) \in J_{p(1)}} 
\prod_{s=1}^{t}g(z_{\ell(s+1)}, z_{\ell(s)}) 
\nonumber \\ 
& \quad {}\times 
\prod_{1\le s \le t}^{\curvearrowleft}
Z_{\ell(s)}^{J_{p(s+1)-1} \cup \cdots \cup J_{p(s)}}(\vec{z})
u(a+1, a^{k_{a}}, \ldots , p(1)^{k_{p(1)}-1}, \ldots , 1^{k_{1}}, p(1)). 
\nonumber
\end{align} 
on $U_{0}\otimes (U^{\otimes (k_{a}+\cdots +k_{1})})$, 
where we set $z_{\ell(0)}=w$ and $J_{p(t+1)-1}=J_{a}$ in the right hand side.  
\end{prop}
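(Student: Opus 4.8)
The plan is to prove Proposition~\ref{prop:recursion-h-prepare} by induction on $a$, using Corollary~\ref{cor:back-b} as the engine that pushes the ``large'' color past a block of smaller colors. The two statements are structurally parallel: both describe what happens when the operator $\prod^{\curvearrowleft}(f\,Y)$ is applied to a vector in which the color $a{+}1$ sits at the front, but Corollary~\ref{cor:back-b} keeps $a{+}1$ inside the product of bosonic factors while Proposition~\ref{prop:recursion-h-prepare} tracks where the newly created index $p(1)$ lands. The essential difference is bookkeeping: in the Corollary the signs appear as $(-1)^{t-1}$ with the argument order $g(z_{\ell(s-1)},z_{\ell(s)})$, whereas in the Proposition the signs have been absorbed and the arguments reversed to $g(z_{\ell(s+1)},z_{\ell(s)})$, together with a relabeling of the summation index so that $p(1)$ now ranges over $1\le p(1)\le a$ rather than being pinned to $p(1)=1$.

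\textbf{First} I would treat the base case $a=1$ directly from Lemma~\ref{lem:Yba}: there $J_{1}=[1,k_{1}]$, the interval is a single block, and Lemma~\ref{lem:Yba} applied with $b=a{+}1=2$ gives exactly the two terms predicted by \eqref{eq:recursion-h-prepare-0} (the $q^{k_{1}}u(a^{k_{1}},\dots,1)$ leading term and the single sum over $\ell(1)\in J_{1}$ with $t=1$). \textbf{For the inductive step}, I would split the full product over $[1,k_{a}+\cdots+k_{1}]$ into the block $J_{a}$ followed by $J_{a-1}\cup\cdots\cup J_{1}$. Applying Corollary~\ref{cor:back-b} to move $a{+}1$ through the lower blocks produces two families of terms: the leading term $q^{\sum_{p=1}^{a}k_{p}}u(a^{k_{a}},\dots,1^{k_{1}},a{+}1)$, which is precisely the $t=0$ contribution to be matched, and a sum of terms each carrying a product $\prod^{\curvearrowleft}_{1\le s\le t}Z_{\ell(s)}^{\cdots}(\vec z)$ with sign $(-1)^{t-1}$ and a trailing vector $u(a{+}1,a^{k_{a}},\dots,1^{k_{1}})$. \textbf{Then} I would apply the remaining factors $\prod_{i\in J_{a}}^{\curvearrowleft}(f(w,z_{i})Y_{i-1}(w,z_{i}))$ to each such vector; since the color $a{+}1$ again sits at the head, Lemma~\ref{lem:Yba} (with $b=a{+}1$ and $m=k_{a}$) fires once more, extending each chain $(p(t),\dots,p(1))$ by a new top index in $J_{a}$ and converting the leading $q^{k_{a}}$ into the correct power.

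\textbf{The main obstacle} I anticipate is purely combinatorial rather than algebraic: reconciling the two different sign-and-argument conventions so that the telescoping produces \emph{positive} coefficients $\prod_{s=1}^{t}g(z_{\ell(s+1)},z_{\ell(s)})$ in \eqref{eq:recursion-h-prepare-0}. Each application of Lemma~\ref{lem:Yba} introduces a factor $g(w,z_{\ell})=g(z_{\ell(s-1)},z_{\ell(s)})$ with a specific argument order, and each use of the substitution identity $u(a,b,a^{m})=q^{-1}\{f(z_\ell,z_1)Y_0(z_\ell,z_1)-g(z_\ell,z_1)\}u(b,a^{m+1})$ inside the proof of Lemma~\ref{lem:back-b} flips a sign; I would need to verify that the accumulated $(-1)^{t-1}$ of Corollary~\ref{cor:back-b}, together with the sign from the final Lemma~\ref{lem:Yba} step, collapses exactly into the reversed-argument form claimed here. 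Concretely, I would check the identity $g(z,w)=-g(w,z)\,f(z,w)/f(w,z)$-type relations implied by \eqref{eq:def-fg} carefully at the join between the $J_{a}$ block and the lower blocks, and confirm that the $q$-power $q^{\sum_{b=1}^{p(1)-1}k_{b}}$ emerges correctly when the chain is \emph{not} anchored at $p(1)=1$. Once the sign and $q$-power bookkeeping is pinned down, the matching of terms between the inductive hypothesis, Corollary~\ref{cor:back-b}, and the final Lemma~\ref{lem:Yba} application is a mechanical verification that each $(t,p(t)>\cdots>p(1),\ell(t),\dots,\ell(1))$ multi-index on the right of \eqref{eq:recursion-h-prepare-0} arises exactly once.
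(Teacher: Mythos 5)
Your base case and the overall shape of the argument (induction on $a$, powered by Lemma \ref{lem:Yba} and Corollary \ref{cor:back-b}) agree with the paper, but the inductive step as you describe it does not work, for a concrete reason of operator ordering. In the reverse-ordered product $\prod_{1\le i\le k_{a}+\cdots+k_{1}}^{\curvearrowleft}(f(w,z_{i})Y_{i-1}(w,z_{i}))$ the factors indexed by $J_{a}$ are the rightmost ones, so they act on the vector \emph{first} (they are the ones that touch the leading $a+1$ and the adjacent block $a^{k_{a}}$); you cannot first ``move $a+1$ through the lower blocks'' and only afterwards apply $\prod_{i\in J_{a}}^{\curvearrowleft}$. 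Moreover Corollary \ref{cor:back-b} computes only the $\prod_{i\in J_{1}}^{\curvearrowleft}$ portion of the operator, acting on a vector of the shape $u(a^{k_{a}},\ldots,2^{k_{2}},a+1,1^{k_{1}})$ in which the large color already sits immediately to the left of the $1$-block; it cannot be invoked on $u(a+1,a^{k_{a}},\ldots,1^{k_{1}})$ with $a+1$ at the head. The paper's induction runs in the opposite direction: the induction hypothesis (with all colors shifted up by one) is applied to the portion $\prod_{i\in J_{a+1}\cup\cdots\cup J_{2}}^{\curvearrowleft}$ that acts first, dragging $a+2$ up to the boundary of the $1$-block, and only then is Corollary \ref{cor:back-b} used to push the resulting colors through $J_{1}$. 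If you instead try to salvage your order by peeling off $J_{a}$ first with Lemma \ref{lem:Yba}, you hit a different wall: the cross terms $g(w,z_{\ell})Z_{\ell}^{J_{a}}(\vec{z})\,u(a+1,a^{k_{a}},\ldots)$ leave the color $a+1$ buried inside the $a$-block, so the remaining operator over $J_{a-1}\cup\cdots\cup J_{1}$ no longer acts on a vector of the form required by the induction hypothesis or by Lemma \ref{lem:Yba}.

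The second gap is that the hard part of the inductive step is not the sign convention you flag but the merging of two distinct families of cross terms: the products (IH leading term)$\times$(Corollary cross terms) and (IH cross terms)$\times$(Corollary cross terms) --- $X_{II}$ and $X_{IV}$ in the paper --- must be added using the closed-form evaluation of the sum $K_{t}$ together with the rational identity
\begin{align*}
(-1)^{t-1}\prod_{s=1}^{t-1}g(z_{\ell(s)}, z_{\ell(s+1)})+
\frac{g(w, z_{\ell(t)})}{g(z_{\ell(t)}, z_{\ell(1)})}\prod_{s=1}^{t-1}g(z_{\ell(s+1)}, z_{\ell(s)})
=\frac{g(w, z_{\ell(t)})}{g(w, z_{\ell(1)})}\prod_{s=1}^{t-1}g(z_{\ell(s+1)}, z_{\ell(s)}),
\end{align*}
which is what converts the alternating signs and mixed argument orders into the uniform positive coefficients $\prod_{s=1}^{t}g(z_{\ell(s+1)},z_{\ell(s)})$ of \eqref{eq:recursion-h-prepare-0}. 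Your picture of ``extending each chain by one new top index'' accounts for only one of these two families; without identifying and proving the $K_{t}$ collapse, the double-cross terms do not reduce to the claimed form, so the ``mechanical verification'' you defer is exactly where the substance of the proof lies.
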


\begin{proof}
The proof is by induction on $a$. 
If $a=1$, it follows from Lemma \ref{lem:Yba}. 
Suppose that it holds for an integer $a$ such that $r-1>a\ge 1$.  
Using the induction hypothesis we see that 
the left hand side of \eqref{eq:recursion-h-prepare-0} with $a$ replaced by $a+1$ is equal to  
the image of the vector 
\begin{align*}
& 
q^{\sum_{p=2}^{a+1}k_{p}} 
\biggl[
u((a+1)^{k_{a+1}}, \ldots , 2^{k_{1}}, a+2, 1^{k_{1}}) \\ 
&+\sum_{t=1}^{a}\sum_{a+1\ge p(t)>\cdots >p(1)\ge 2}q^{\sum_{b=2}^{p(1)-1}k_{b}}
\sum_{\ell(t) \in J_{p(t)}} \cdots \sum_{\ell(1) \in J_{p(1)}} 
\prod_{s=1}^{t}g(z_{\ell(s+1)}, z_{\ell(s)})   
\nonumber \\ 
& \quad {}\times 
\prod_{1\le s \le t}^{\curvearrowleft}\!\!
Z_{\ell(s)}^{J_{p(s+1)-1} \cup \cdots \cup J_{p(s)}}(\vec{z}) \,  
u(a+2, (a+1)^{k_{a}}, \ldots , p(1)^{k_{p(1)}-1}, \ldots , 2^{k_{2}}, p(1), 1^{k_{1}}) 
\biggr], 
\nonumber 
\end{align*}
where $z_{\ell(t+1)}=w$ and $J_{p(t+1)-1}=J_{a+1}$, 
under the operator $\prod_{i \in J_{1}}^{\curvearrowleft}(f(w, z_{i})Y_{i-1}(w, z))$.  
Now use Corollary \ref{cor:back-b} to calculate the image, 
and we see that it is equal to the sum of the following four vectors: 
\begin{align*}
& 
X_{I}=q^{\sum_{p=1}^{a+1}k_{a}}u((a+1)^{k_{a+1}}, \ldots , 1^{k_{1}}, a+2),  \\
& 
X_{II}=\sum_{t=1}^{a+1}\sum_{a+1 \ge p(t)>\cdots >p(2)>p(1)=1}
\sum_{\ell(t)\in J_{p(t)}}\cdots \sum_{\ell(1)\in J_{p(1)}}\!\!\! 
(-1)^{t-1}g(w, z_{\ell(1)}) \\ 
&\qquad {}\times \prod_{s=1}^{t-1}g(z_{\ell(s)}, z_{\ell(s+1)}) 
\prod_{1\le s \le t}^{\curvearrowleft}
Z_{\ell(s)}^{J_{p(s+1)-1} \cup \cdots \cup J_{p(s)}}(\vec{z}) 
u(a+2, (a+1)^{k_{a+1}}, \ldots , 1^{k_{1}}), \\ 
& 
X_{III}=\sum_{t=1}^{a}\sum_{a+1 \ge p(t)>\cdots >p(1)\ge 2}\!\!\!
q^{\sum_{p=1}^{p(1)-1}k_{p}}\!\!\!
\sum_{\ell(t)\in J_{p(t)}}\cdots \sum_{\ell(1)\in J_{p(1)}}\!\!\!
g(w, z_{\ell(t)}) \prod_{s=1}^{t-1}g(z_{\ell(s+1)}, z_{\ell(s)}) \\ 
&\qquad {}\times 
\prod_{1\le s \le t}^{\curvearrowleft}
Z_{\ell(s)}^{J_{p(s+1)-1} \cup \cdots \cup J_{p(s)}}(\vec{z}) 
u(a+2, (a+1)^{k_{a+1}}, \ldots , p(1)^{k_{p(1)}-1}, \ldots , 1^{k_{1}}, p(1)), 
\end{align*}
where $J_{p(t+1)-1}=J_{a+1}$, and 
\begin{align*}
& 
X_{IV}=\sum_{t=2}^{a+1}\sum_{a+1 \ge p(t)>\cdots >p(2)>p(1)=1}
\sum_{\ell(t)\in J_{p(t)}} \cdots \sum_{\ell(1) \in J_{p(1)}} 
K_{t}(w; z_{\ell(1)}, \ldots , z_{\ell(t)}) \\ 
& \quad {}\times 
\prod_{1\le s \le t}^{\curvearrowleft}
Z_{\ell(s)}^{J_{p(s+1)-1}\cup \cdots \cup J_{p(s)}}(\vec{z})
u(a+2, (a+1)^{k_{a+1}}, \ldots , 1^{k_{1}}), 
\end{align*}
where 
\begin{align*}
& 
K_{t}(w; z_{\ell(1)}, \ldots , z_{\ell(t)}) \\ 
&=\sum_{s=0}^{t-2}
g(w, z_{\ell(1)})g(w, z_{\ell(t)})\prod_{i=1}^{s}g(z_{\ell(i)}, z_{\ell(i+1)})
\prod_{i=s+2}^{t-1}g(z_{\ell(i+1)}, z_{\ell(i)}). 
\end{align*}
{}From the definition of $g$, we find that 
\begin{align*}
K_{t}(w; z_{\ell(1)}, \ldots , z_{\ell(t)})=\frac{g(w, z_{\ell(1)})g(w, z_{\ell(t)})}{g(z_{\ell(t)}, z_{\ell(1)})}
\prod_{s=1}^{t-1}g(z_{\ell(s+1)}, z_{\ell(s)}) 
\end{align*}
for $t \ge 2$. 
Substitute it for $X_{IV}$ and add $X_{II}$ and $X_{IV}$ using
\begin{align*}
& 
(-1)^{t-1}\prod_{s=1}^{t-1}g(z_{\ell(s)}, z_{\ell(s+1)})+
\frac{g(w, z_{\ell(t)})}{g(z_{\ell(t)}, z_{\ell(1)})}\prod_{s=1}^{t-1}g(z_{\ell(s+1)}, z_{\ell(s)}) \\ 
&=\frac{g(w, z_{\ell(t)})}{g(w, z_{\ell(1)})}\prod_{s=1}^{t-1}g(z_{\ell(s+1)}, z_{\ell(s)}).  
\end{align*}
As a result we get the desired equality for $a+1$. 
\end{proof}

Now let us prove the recurrence relation \eqref{eq:recursion} for $h^{\vec{\mu}}_{\vec{z}}$. 

\begin{proof}[Proof of Theorem \ref{thm:recursion} for $h^{\vec{\mu}}_{\vec{z}}$]
Set $J_{p}=\{ i \in \mathbb{Z} \, | \, k_{r}+\cdots +k_{p+1}<i\le k_{r}+\cdots +k_{p}\}$ for 
$1 \le p \le r$.  
In the definition of the value $h^{r^{k_{r}}, \ldots , 1^{k_{1}}}_{\vec{z}}(x_{1}, \ldots , x_{k-1}, 1)$, 
decompose the sum over $\tau \in \mathfrak{S}_{k}$ into $k$ parts 
with $\tau^{-1}(k)=\ell \, (1 \le \ell \le k)$. 
Change the running index $\tau$ to $\tau'=\tau(\sigma_{\ell} \sigma_{\ell+1} \cdots \sigma_{k-1})$, 
where $\sigma_{i}=(i, i+1)$ is the transposition.   
Then $\tau'(k)=k$ and we can regard $\tau'$ as an element of $\mathfrak{S}_{k-1}$. 
Thus we see that 
\begin{align}
& \label{eq:recursion-h-1}
h^{r^{k_{r}}, \ldots , 1^{k_{1}}}_{\vec{z}}(x_{1}, \ldots , x_{k-1}, 1)=
\prod_{1\le i<j \le k}f(z_{i}, z_{j}) \\ 
& \quad {}\times 
\sum_{p=1}^{r}\sum_{\ell \in J_{p}}\sum_{\tau' \in \mathfrak{S}_{k-1}}
\frac{z_{\ell}}{1+z_{\ell}}\prod_{i=1}^{\ell-1}\left(\frac{z_{i}}{1+z_{i}}\right)^{x_{\tau'(i)}}
\prod_{i=\ell+1}^{k}\left(\frac{z_{i}}{1+z_{i}}\right)^{x_{\tau'(i-1)}} 
\nonumber \\ 
& \qquad {}\times 
\phi(\tau'; z_{1}, \ldots , z_{\ell-1}, z_{\ell+1}, \ldots , z_{k}) 
\phi(\sigma_{k-1}\cdots \sigma_{\ell}; \vec{z}) 
u(r^{k_{r}}, \ldots , 1^{k_{1}}). 
\nonumber 
\end{align} 
For $\ell \in J_{p}$ it holds that 
\begin{align*}
\phi(\sigma_{k-1}\cdots \sigma_{\ell}; \vec{z}) 
u(r^{k_{r}}, \ldots , 1^{k_{1}})=
\prod_{\begin{subarray}{c} i \in J_{p} \\ i>\ell \end{subarray}}\frac{f(z_{i}, z_{\ell})}{f(z_{\ell}, z_{i})}
\prod_{i \in J_{p+1} \cup \cdots \cup J_{1}}^{\curvearrowleft} \!\!\! Y_{i-1}(z_{\ell}, z_{i}) \, 
u(r^{k_{r}}, \ldots , 1^{k_{1}}).  
\end{align*}
{}From Proposition \ref{prop:recursion-h-prepare}, we have 
\begin{align*}
& 
\prod_{i \in J_{p+1} \cup \cdots \cup J_{1}}^{\curvearrowleft} \!\!\! Y_{i-1}(z_{\ell}, z_{i}) \, 
u(r^{k_{r}}, \ldots , 1^{k_{1}})=
q^{\sum_{b=1}^{p-1}k_{b}}u(r^{k_{r}}, \ldots , p^{k_{p}-1}, \ldots , 1^{k_{1}}, p) \\ 
&\qquad {}+\sum_{t=1}^{p-1}\sum_{p-1\ge p(t-1)>\cdots >p(0) \ge 1}q^{\sum_{b=1}^{p(0)-1}k_{b}}
\sum_{\ell(t-1) \in J_{p(t-1)}} \cdots \sum_{\ell(0) \in J_{p(0)}} 
\prod_{s=1}^{t}g(z_{\ell(s)}, z_{\ell(s-1)}) \\ 
&\qquad  \quad {}\times 
\prod_{0\le s \le t-1}^{\curvearrowleft}Z_{\ell(s)}^{J_{p(s+1)-1} \cup \cdots \cup J_{p(s)}}(\vec{z}) 
u(r^{k_{r}}, \ldots , p(0)^{k_{p(0)}-1}, \ldots , 1^{k_{1}}, p(0)),  
\end{align*}
where $J_{p(t)-1}=J_{p-1}$. 
Thus we see that the right hand side of \eqref{eq:recursion-h-1} is the sum of the following 
two terms: 
\begin{align*}
S_{I}=\sum_{a=1}^{r}q^{\sum_{b=1}^{a-1}k_{b}}\sum_{\ell \in J_{a}}\frac{z_{\ell}}{1+z_{\ell}} 
\prod_{\begin{subarray}{c} i \in J_{r} \cup \cdots \cup J_{a} \\ i\not=\ell \end{subarray}} \!\!\! 
f(z_{i}, z_{\ell}) \, 
h_{z_{1}, \ldots , z_{\ell-1}, z_{\ell+1}, \ldots , z_{k}}^{r^{k_{r}}, \ldots , a^{k_{a}-1}, \ldots , 1^{k_{1}}} 
(x_{1}, \ldots , x_{k-1}) \otimes u_{a}
\end{align*}
and 
\begin{align*}
S_{II}
&=\prod_{1\le i<j \le k}f(z_{i}, z_{j}) \\ 
&\times 
\sum_{a=1}^{r-1}q^{\sum_{b=1}^{a-1}k_{b}}
\sum_{t=1}^{r-a}\sum_{r \ge p(t)>\cdots >p(1)>p(0)=a}
\sum_{\ell(t) \in J_{p(t)}} \cdots \sum_{\ell(0) \in J_{p(0)}} 
\prod_{\begin{subarray}{c} i \in J_{p(t)} \\ i>\ell(t) \end{subarray}}
\frac{f(z_{i}, z_{\ell(t)})}{f(z_{\ell(t)}, z_{i})} \\ 
&\times 
\sum_{\tau' \in \mathfrak{S}_{k-1}}\frac{z_{\ell(t)}}{1+z_{\ell(t)}} 
\prod_{i=1}^{\ell(t)-1}\left(\frac{z_{i}}{1+z_{i}}\right)^{x_{\tau'(i)}}
\prod_{i=\ell(t)+1}^{k}\left(\frac{z_{i}}{1+z_{i}}\right)^{x_{\tau'(i-1)}}
\prod_{s=1}^{t}g(z_{\ell(s)}, z_{\ell(s-1)}) \\ 
& \quad {}\times 
\phi(\tau'; z_{1}, \ldots , z_{\ell(t)-1}, z_{\ell(t)+1}, \ldots , z_{k})
\prod_{0\le s \le t-1}^{\curvearrowleft}Z_{\ell(s)}^{J_{p(s+1)-1} \cup \cdots \cup J_{p(s)}}(\vec{z}) \\ 
& \quad {}\times 
u(r^{k_{r}}, \ldots , a^{k_{a}-1}, \ldots , 1^{k_{1}}, a).   
\end{align*}
We rewrite $S_{II}$ as follows. 
Lemma \ref{lem:phi-decompose} implies that 
\begin{align*}
& 
\prod_{1\le i<j \le k}\!\!\!\!\!f(z_{i}, z_{j})
\prod_{\begin{subarray}{c} i \in J_{p(t)} \\ i>\ell(t) \end{subarray}}\!\!
\frac{f(z_{i}, z_{\ell(t)})}{f(z_{\ell(t)}, z_{i})}\,
\phi(\tau'; z_{1}, \ldots , z_{\ell(t)-1}, z_{\ell(t)+1}, \ldots , z_{k})\!\!
\prod_{0\le s \le t-1}^{\curvearrowleft}\!\!\!
Z_{\ell(s)}^{J_{p(s+1)-1} \cup \cdots \cup J_{p(s)}}(\vec{z}) \\ 
&=\prod_{1\le i<j\le k-1}f(w_{i}, w_{j})
\prod_{s=0}^{t}(
\prod_{\begin{subarray}{c} i\in J_{p(s+1)-1}\cup \cdots \cup J_{p(s)} \\ i\not=\ell(s) \end{subarray}}
f(z_{i}, z_{\ell(s)})
) \, 
\phi(\tilde{\tau}; \vec{w}), 
\end{align*}
where $\tilde{\tau} \in \mathfrak{S}_{k-1}$ and $\vec{w}=(w_{1}, \ldots , w_{k-1})$ are defined by 
\begin{align*}
\tilde{\tau}=\tau' \prod_{0\le s <t}^{\curvearrowleft}
(\prod_{\begin{subarray}{c} i \in J_{p(s)} \\ i<\ell(s) \end{subarray}}^{\curvearrowright} \sigma_{i-1}) 
\end{align*}
and 
\begin{align}\label{eq:recursion-h-2}
w_{i}=\left\{ 
\begin{array}{ll}
z_{i} & (i\not\in J_{p(t)}\cup \cdots \cup J_{p(0)} \,\, \hbox{and} \,\, i<k_{r}+\cdots+k_{p(0)}) \\ 
z_{i} & (i \in J_{p(s)} \,\, \hbox{and} \,\, i<\ell(s), \, 0\le s \le t) \\ 
z_{i+1} & (i \in J_{p(s)} \,\, \hbox{and} \,\, i\ge \ell(s), \, 1\le s \le t) \\ 
z_{\ell(s-1)} & (i=k_{r}+\cdots +k_{p(s)}, \, 1\le s \le t) \\ 
z_{i+1} & (k_{r}+\cdots+k_{p(0)}\le  i \le k-1). 
\end{array}
\right. 
\end{align}
Then it holds that 
\begin{align*}
\prod_{i=1}^{\ell(t)-1}\left(\frac{z_{i}}{1+z_{i}}\right)^{x_{\tau'(i)}}
\prod_{i=\ell(t)+1}^{k}\left(\frac{z_{i}}{1+z_{i}}\right)^{x_{\tau'(i-1)}}=
\prod_{i=1}^{k-1}\left(\frac{w_{i}}{1+w_{i}}\right)^{x_{\tilde{\tau}(i)}}.  
\end{align*}
Now change the running index $\tau'$ to $\tilde{\tau}$ in $S_{II}$ and we find that 
\begin{align*}
\mathrm{II}&=\sum_{a=1}^{r-1}q^{\sum_{b=1}^{a-1}k_{b}}
\sum_{t=1}^{r-a}\sum_{r \ge p(t)>\cdots >p(1)>p(0)=a}
\sum_{\ell(t) \in J_{p(t)}} \cdots \sum_{\ell(0) \in J_{p(0)}} 
\frac{z_{\ell(t)}}{1+z_{\ell(t)}} \\ 
&\times \prod_{s=1}^{t}g(z_{\ell(s)}, z_{\ell(s-1)})
\prod_{s=0}^{t}(
\prod_{\begin{subarray}{c} i\in J_{p(s+1)-1}\cup \cdots \cup J_{p(s)} \\ i\not=\ell(s) \end{subarray}}
f(z_{i}, z_{\ell(s)})
) \\ 
&\times 
h^{r^{k_{r}}, \ldots , a^{k_{a}-1}, \ldots , 1^{k_{1}}}_{\vec{w}}(x_{1}, \ldots , x_{k-1}) \otimes u_{a},  
\end{align*}
where $\vec{w}$ is defined by \eqref{eq:recursion-h-2}. 
Proposition \ref{prop:symmetry} (1) implies that 
\begin{align*}
h^{r^{k_{r}}, \ldots , a^{k_{a}-1}, \ldots , 1^{k_{1}}}_{\vec{w}}(x_{1}, \ldots , x_{k-1})=
h^{r^{k_{r}}, \ldots , a^{k_{a}-1}, \ldots , 1^{k_{1}}}_{\vec{z}(\ell(t), \ldots , \ell(0))}
(x_{1}, \ldots , x_{k-1}).   
\end{align*}
Now the term $S_{I}$ compensates for the lack of the terms in $S_{II}$ with $t=0$, 
and we get the desired recurrence relation for $h^{\vec{\mu}}_{\vec{z}}$. 
\end{proof}

\subsection{Proof for $\psi_{\vec{z}}^{\vec{\mu}}$}

Next we show that the function $\psi_{\vec{z}}^{\vec{\mu}}$ also satisfies the recurrence relation 
\eqref{eq:recursion}.  
Fix $\vec{x}=(x_{1}, \ldots , x_{k}) \in L_{k}^{+}$ and an integer $M$ 
such that $x_{k}=1$ and $M \ge x_{1}$. 
Then 
\begin{align*}
& \quad 
\psi_{\vec{z}}^{r^{k_{r}}, \ldots , 1^{k_{1}}}(\vec{x})=(1-q^{2})^{-k} \\ 
&\times 
\sum_{\vec{\nu} \in I_{k_{1}, \ldots , k_{r}}} 
\prod_{i=1}^{k}\left(\frac{1}{1+z_{i}}\right)^{M} 
\langle 
\prod_{1\le i \le k}^{\curvearrowright}C_{\mu_{i}}^{[1, M]}(z_{i})
\prod_{1\le i \le k}\beta_{\nu_{i}, x_{i}}^{*}
\rangle_{[1, M]}
q^{t(\vec{\nu})-t(\vec{\mu})} u_{\nu_{1}} \otimes \cdots \otimes u_{\nu_{k}}.     
\end{align*}
Hereafter we often omit the upper index $[1, M]$ of $A^{[1, M]}(z)$ and $C_{a}^{[1, M]}(z) \, (1\le a \le r)$. 

\begin{lem}\label{lem:commrel-C-beta*}
It holds that 
$C_{b}(z)\beta_{a, 1}^{*}=\beta_{a, 1}^{*} C_{b}(z)$ if $a<b$, and 
$C_{b}(z)\beta_{a, 1}^{*}=q^{2}\beta_{a, 1}^{*} C_{b}(z)$ if $a>b$. 
\end{lem}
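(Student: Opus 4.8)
The plan is to isolate the dependence of $C_{b}(z)=\mathbb{T}^{[1,M]}(z)_{b0}$ on the site-$1$ algebra $\mathcal{B}^{(1)}$ and then push $\beta_{a,1}^{*}$ through only that part. Using the factorization $\mathbb{T}^{[1,M]}(z)=L^{(1)}(z)\mathbb{T}^{[2,M]}(z)$ and reading off the zeroth column (exactly as in the proof of Proposition \ref{prop:recursion-AC}, which gives \eqref{eq:recursion-C1}), I would write
\[
C_{b}(z)=\sum_{c=0}^{r}L^{(1)}(z)_{bc}\,\mathbb{T}^{[2,M]}(z)_{c0}
\]
for $M\ge 2$ (for $M=1$ one has $C_{b}(z)=L^{(1)}(z)_{b0}$ directly). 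Since each factor $\mathbb{T}^{[2,M]}(z)_{c0}$ lies in $\mathcal{B}^{[2,M]}$, it commutes with $\beta_{a,1}^{*}$, so the whole computation reduces to commuting $\beta_{a,1}^{*}$ past the site-$1$ generators hidden in $L^{(1)}(z)_{bc}$.

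By the definition \eqref{eq:def-L-operator}, the nonzero entries in row $b\ge 1$ are $L^{(1)}(z)_{b0}=z\beta_{b,1}q^{2\sum_{p=b+1}^{r}N_{p,1}}$, $L^{(1)}(z)_{bb}=zq^{2\sum_{p=b+1}^{r}N_{p,1}}$, and $L^{(1)}(z)_{bc}=z\beta_{b,1}\beta_{c,1}^{*}q^{2\sum_{p=b+1}^{r}N_{p,1}}$ for $1\le c<b$ (those with $c>b$ vanish). Thus the only site-$1$ generators ever appearing are $\beta_{b,1}$, the creation operators $\beta_{c,1}^{*}$ with $c<b$, and the exponential $q^{2\sum_{p=b+1}^{r}N_{p,1}}$. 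I would then move $\beta_{a,1}^{*}$ leftward through each such entry, using the single-site relations \eqref{eq:commrel-q-boson} together with the fact that generators carrying distinct colors commute.

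The only bookkeeping point is to verify that, under the hypothesis $a\neq b$, the factors $\beta_{b,1}$ and $\beta_{c,1}^{*}$ never produce a scalar: since $a\neq b$ the pair $\beta_{b,1},\beta_{a,1}^{*}$ commutes, and since $c<b$ each $\beta_{c,1}^{*}$ either has a color different from $a$ (when $a>b$, so $c<b<a$) or coincides with $\beta_{a,1}^{*}$ itself (when $a<b$ and $c=a$), and in either case commutes with $\beta_{a,1}^{*}$. Hence the sole source of a nontrivial factor is the passage through $q^{2\sum_{p=b+1}^{r}N_{p,1}}$, governed by $q^{2N_{a,1}}\beta_{a,1}^{*}=q^{2}\beta_{a,1}^{*}q^{2N_{a,1}}$, which is picked up precisely when $a$ occurs in the exponent, i.e. when $a>b$, and is absent when $a<b$ since then $a\notin\{b+1,\dots,r\}$. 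Because the same uniform scalar multiplies every summand $L^{(1)}(z)_{bc}\mathbb{T}^{[2,M]}(z)_{c0}$, it factors out of the whole sum, giving $C_{b}(z)\beta_{a,1}^{*}=\beta_{a,1}^{*}C_{b}(z)$ for $a<b$ and $C_{b}(z)\beta_{a,1}^{*}=q^{2}\beta_{a,1}^{*}C_{b}(z)$ for $a>b$. I do not expect a genuine obstacle; the only care needed is the color comparison above confirming that $\beta_{b,1}$ and the $\beta_{c,1}^{*}$ contribute no extra scalar, which is exactly what the constraints $a\neq b$ and $c<b$ guarantee.
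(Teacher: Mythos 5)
Your proposal is correct and follows essentially the same route as the paper, whose one-line proof invokes Proposition \ref{prop:recursion-AC}, i.e.\ exactly the factorization $\mathbb{T}^{[1,M]}(z)=L^{(1)}(z)\mathbb{T}^{[2,M]}(z)$ read off in the zeroth column that you use. Your careful check that only the factor $q^{2\sum_{p=b+1}^{r}N_{p,1}}$ can contribute a scalar (and does so precisely when $a>b$) is the substance the paper leaves implicit.
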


\begin{proof}
It follows from Proposition \ref{prop:recursion-AC}. 
\end{proof}

For $1\le a \le r$ we set 
\begin{align*}
\tilde{A}_{a}(z)=q^{2\sum_{p=a}^{r}N_{p, 1}}A^{[2, M]}(z).  
\end{align*}

\begin{lem}\label{lem:commrel-CCAtilde}
For $(z_{1}, \ldots , z_{m}) \in \mathbb{C}^{m}, w \in \mathbb{C}$ and $1\le a \le b \le r$, it holds that 
\begin{align*}
\prod_{i=1}^{m}C_{b}(z_{i}) \cdot \tilde{A}_{a}(w)&\equiv 
\prod_{i=1}^{m}\!f(z_{i}, w) \, \tilde{A}_{a}(w) \prod_{i=1}^{m}C_{b}(z_{i}) \\ 
&+\sum_{\ell=1}^{m}\frac{z_{\ell}}{w}g(z_{\ell}, w)
\prod_{\begin{subarray}{c} i=1 \\ i\not=\ell \end{subarray}}^{m}f(z_{i}, z_{\ell}) \, 
\tilde{A}_{a}(z_{\ell}) C_{b}(w)
\prod_{\begin{subarray}{c} i=1 \\ i\not=\ell \end{subarray}}^{m}C_{a}(z_{i})  
\end{align*} 
modulo the right ideal $\sum_{p=1}^{b-1}\beta_{p, 1}^{*}\mathcal{B}^{[1, M]}$. 
\end{lem}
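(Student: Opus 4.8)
The plan is to prove the congruence by induction on $m$, after first reducing everything from $\mathcal{B}^{[1,M]}$ to the shorter algebra $\mathcal{B}^{[2,M]}$ by means of the recurrence \eqref{eq:recursion-C1}. The key preliminary observation is that, modulo the right ideal $\sum_{p=1}^{b-1}\beta_{p,1}^{*}\mathcal{B}^{[1,M]}$, relation \eqref{eq:recursion-C1} collapses to $C_{b}^{[1,M]}(z)\equiv zq^{2\sum_{p=b+1}^{r}N_{p,1}}\bigl(\beta_{b,1}A^{[2,M]}(z)+C_{b}^{[2,M]}(z)\bigr)$, since every term carrying a factor $\beta_{p,1}^{*}$ with $p<b$ lies in the ideal (here one uses that $\beta_{b,1}$ commutes with $\beta_{p,1}^{*}$ for $p\neq b$). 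A point I would establish first, and invoke repeatedly, is that this ideal is stable under the multiplications performed below: left multiplication by $C_{b}(z)$ preserves it because $C_{b}(z)\beta_{p,1}^{*}=\beta_{p,1}^{*}C_{b}(z)$ for $p<b$ by Lemma \ref{lem:commrel-C-beta*}, and left multiplication by $\tilde{A}_{a}(w)$ preserves it because $A^{[2,M]}(w)$ commutes with each $\beta_{p,1}^{*}$ while $q^{2N_{p,1}}$ only rescales it. Consequently the reduced form of $C_{b}^{[1,M]}$ may be substituted wherever it occurs without leaving the congruence class.

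For the base case $m=1$ I would substitute these reduced forms on both sides and check the identity inside $\mathcal{B}^{[2,M]}$. The left-hand side becomes a multiple of $q^{2}\beta_{b,1}A^{[2,M]}(z)A^{[2,M]}(w)+C_{b}^{[2,M]}(z)A^{[2,M]}(w)$, where the factor $q^{2}$ arises from commuting $\beta_{b,1}$ past $q^{2N_{b,1}}$ (present because $a\le b$). On the right-hand side the two summands contribute the $\beta_{b,1}A^{[2,M]}(z)A^{[2,M]}(w)$ term with coefficient $f(z,w)+g(z,w)=q^{2}$, while the $C_{b}^{[2,M]}$-terms recombine, via the $[2,M]$-version of \eqref{eq:commrel-AC}, into $C_{b}^{[2,M]}(z)A^{[2,M]}(w)$; the prefactor $z/w$ is exactly what balances the linear spectral dependence carried by $C_{b}^{[1,M]}$ but not by $\tilde{A}_{a}$. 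Thus the base case rests only on the single scalar identity $f+g=q^{2}$ together with \eqref{eq:commrel-AC}.

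For the inductive step I would write $\prod_{i=1}^{m}C_{b}(z_{i})=\bigl(\prod_{i=1}^{m-1}C_{b}(z_{i})\bigr)C_{b}(z_{m})$, use the base case to carry $\tilde{A}_{a}(w)$ across $C_{b}(z_{m})$, and then apply the induction hypothesis to carry $\tilde{A}_{a}(w)$ and $\tilde{A}_{a}(z_{m})$ across the remaining length-$(m-1)$ product. The ``diagonal'' contribution reproduces $\prod_{i=1}^{m}f(z_{i},w)\,\tilde{A}_{a}(w)\prod_{i=1}^{m}C_{b}(z_{i})$ at once, and each remaining term proportional to $\tilde{A}_{a}(z_{\ell})$ must then be reorganized into the prescribed shape using the commutativity $C_{b}(z)C_{b}(w)=C_{b}(w)C_{b}(z)$ and the exchange relation \eqref{eq:commrel-CC}. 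This collapse hinges on a family of partial-fraction identities in $f$ and $g$ deduced from \eqref{eq:def-fg}; the prototype is $f(z_{2},w)g(z_{1},w)+g(z_{2},w)g(z_{1},z_{2})=g(z_{1},w)f(z_{2},z_{1})$, which is what allows two adjacent $g$-factors produced at different stages to merge into a single $\frac{z_{\ell}}{w}g(z_{\ell},w)\prod_{i\neq\ell}f(z_{i},z_{\ell})$.

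The main obstacle I anticipate is precisely this reorganization. A priori the inductive step produces a double sum indexed by $\ell$ and by the peeled-off variable, with coefficients that are sums of products of $f$'s and $g$'s, and one must show it telescopes to the single sum over $\ell$ with the stated coefficient; keeping track, in parallel, of the exact operator content of the surviving factors (which spectral arguments land inside $\tilde{A}_{a}$, and the labelling of the trailing $C$-operators that is dictated by \eqref{eq:commrel-CC}) is the delicate accounting. Throughout, the congruence modulo the ideal is preserved automatically by the stability observations of the first paragraph, so no further control of the discarded $\beta_{p,1}^{*}$-terms is needed.
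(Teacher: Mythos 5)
Your proposal is correct and follows essentially the same route as the paper: induction on $m$, with the base case resting on \eqref{eq:recursion-C1} reduced modulo the ideal, \eqref{eq:commrel-AC} on $[2,M]$, and $f+g=q^{2}$, and the inductive step collapsing via a partial-fraction identity in $f$ and $g$ (the paper peels $C_{b}(z_{1})$ off the left and uses the summed identity $g(z_{1},w)\prod_{i\ge 2}f(z_{i},w)+\sum_{\ell\ge 2}g(z_{\ell},w)g(z_{1},z_{\ell})\prod_{i\neq\ell}f(z_{i},z_{\ell})=g(z_{1},w)\prod_{i\ge 2}f(z_{i},z_{1})$, whereas you peel from the right and need only its two-variable prototype; these are equivalent). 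Your explicit verification that the right ideal is stable under left multiplication by $C_{b}(z)$ and $\tilde{A}_{a}(w)$ is exactly the point the paper covers by citing Lemma \ref{lem:commrel-C-beta*}, and your invocation of \eqref{eq:commrel-CC} is unnecessary since only the mutually commuting $C_{b}$'s appear in the trailing product.
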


\begin{proof}
The proof is by induction on $m$. 
For $m=1$ it follows from \eqref{eq:recursion-C1}, \eqref{eq:commrel-AC} on the interval $[2, M]$ 
and the equality $f(z, w)+g(z, w)=q^{2}$.  
Suppose that it holds for a positive integer $m$. 
Now decompose $\prod_{i=1}^{m+1}C_{b}(z_{i}) \cdot \tilde{A}_{a}(w)= 
C_{b}(z_{1}) \prod_{i=2}^{m+1}C_{b}(z_{i}) \cdot \tilde{A}_{a}(w)$
and calculate it using the induction hypothesis. 
Note that $\beta_{p, 1}^{*} \, (1\le p<b)$ commutes with $C_{b}(z)$ 
because of Lemma \ref{lem:commrel-C-beta*}. 
Then we obtain the desired formula for $m+1$ by using the equality 
\begin{align*}
g(z_{1}, w)\prod_{i=2}^{m+1}f(z_{i}, w)+
\sum_{\ell=2}^{m+1}g(z_{\ell}, w)g(z_{1}, z_{\ell})
\prod_{\begin{subarray}{c} i=2 \\ i\not=\ell \end{subarray}}^{m+1}f(z_{i}, z_{\ell})=
g(z_{1}, w)\prod_{i=2}^{m+1}f(z_{i}, z_{1}).   
\end{align*}
\end{proof}

\begin{prop}\label{prop:beta-star-move}
For $(z_{1}, \ldots , z_{m}) \in \mathbb{C}^{m}$ and $1\le a \le r$, it holds that 
\begin{align*}
\prod_{i=1}^{m}C_{a}(z_{i}) \cdot \beta_{a, 1}^{*}\equiv
\beta_{a, 1}^{*} \prod_{i=1}^{m}C_{a}(z_{i})+(1-q^2)\sum_{\ell=1}^{r}z_{\ell}
\prod_{\begin{subarray}{c} i=1 \\ i\not=\ell \end{subarray}}^{m}f(z_{i}, z_{\ell}) \cdot  
\tilde{A}_{a}(z_{\ell}) \prod_{\begin{subarray}{c} i=1 \\ i\not=\ell \end{subarray}}^{m}C_{a}(z_{i}) 
\end{align*} 
modulo the right ideal $\sum_{p=1}^{a-1}\beta_{p, 1}^{*}\mathcal{B}^{[1, M]}$. 
\end{prop}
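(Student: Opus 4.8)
The plan is to prove the identity by induction on $m$, using the recursion \eqref{eq:recursion-C1} for the base case and Lemmas \ref{lem:commrel-C-beta*} and \ref{lem:commrel-CCAtilde} for the inductive step. All congruences below are taken modulo the right ideal $I=\sum_{p=1}^{a-1}\beta_{p,1}^{*}\mathcal{B}^{[1,M]}$, and it is worth recording at the outset that $I$ is stable under \emph{left} multiplication by $C_{a}(w)$: since $C_{a}(w)\beta_{p,1}^{*}=\beta_{p,1}^{*}C_{a}(w)$ for $p<a$ by Lemma \ref{lem:commrel-C-beta*}, one has $C_{a}(w)I\subseteq I$. This is what lets me compose the congruences freely even though $I$ is only a right ideal.

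For the base case $m=1$ I would substitute the recursion \eqref{eq:recursion-C1}. Modulo $I$ the middle sum $\beta_{a,1}\sum_{p<a}\beta_{p,1}^{*}C_{p}^{[2,M]}(z)$ drops out (after commuting $\beta_{a,1}$ past $\beta_{p,1}^{*}$), so that $C_{a}(z)\equiv zq^{2\sum_{p>a}N_{p,1}}\big(\beta_{a,1}A^{[2,M]}(z)+C_{a}^{[2,M]}(z)\big)$. Since $A^{[2,M]}(z)$ and $C_{a}^{[2,M]}(z)$ lie in $\mathcal{B}^{[2,M]}$ and hence commute with $\beta_{a,1}^{*}$, the computation of $C_{a}(z)\beta_{a,1}^{*}-\beta_{a,1}^{*}C_{a}(z)$ reduces to the single commutator $(\beta_{a,1}\beta_{a,1}^{*}-\beta_{a,1}^{*}\beta_{a,1})A^{[2,M]}(z)$. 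By the defining relations \eqref{eq:commrel-q-boson} this commutator equals $(1-q^{2})q^{2N_{a,1}}$, and after collecting the surviving factor $q^{2\sum_{p>a}N_{p,1}}$ one recognizes $q^{2\sum_{p\ge a}N_{p,1}}A^{[2,M]}(z)=\tilde{A}_{a}(z)$. This gives $C_{a}(z)\beta_{a,1}^{*}\equiv\beta_{a,1}^{*}C_{a}(z)+(1-q^{2})z\,\tilde{A}_{a}(z)$, the asserted formula for $m=1$.

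For the inductive step I would write $\prod_{i=1}^{m+1}C_{a}(z_{i})\,\beta_{a,1}^{*}=C_{a}(z_{1})\big(\prod_{i=2}^{m+1}C_{a}(z_{i})\,\beta_{a,1}^{*}\big)$ and insert the induction hypothesis into the parenthesized factor, which produces two kinds of terms. The $\beta^{*}$-term $C_{a}(z_{1})\beta_{a,1}^{*}\prod_{i=2}^{m+1}C_{a}(z_{i})$ is handled by the base case, yielding $\beta_{a,1}^{*}\prod_{i=1}^{m+1}C_{a}(z_{i})$ together with one extra summand $(1-q^{2})z_{1}\tilde{A}_{a}(z_{1})\prod_{i=2}^{m+1}C_{a}(z_{i})$. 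Each $\tilde{A}$-term $C_{a}(z_{1})\tilde{A}_{a}(z_{\ell})\cdots$ is rewritten by Lemma \ref{lem:commrel-CCAtilde} in the case $m=1,\ b=a$, which splits it into a piece retaining $\tilde{A}_{a}(z_{\ell})$ and a piece carrying $\tilde{A}_{a}(z_{1})$. In the first piece, absorbing $f(z_{1},z_{\ell})$ so that $f(z_{1},z_{\ell})\prod_{i=2,\,i\ne\ell}^{m+1}f(z_{i},z_{\ell})=\prod_{i=1,\,i\ne\ell}^{m+1}f(z_{i},z_{\ell})$ and using the commutativity of the $C_{a}$'s turns it into exactly the $\ell$-th summand of the target right-hand side for $\ell\ge2$.

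Finally, all the $\tilde{A}_{a}(z_{1})$-contributions — the extra summand from the base-case step together with the $\tilde{A}_{a}(z_{1})$-pieces from Lemma \ref{lem:commrel-CCAtilde} — are collected onto the common operator $\tilde{A}_{a}(z_{1})\prod_{i=2}^{m+1}C_{a}(z_{i})$, with combined scalar coefficient $1+\sum_{\ell=2}^{m+1}g(z_{1},z_{\ell})\prod_{i=2,\,i\ne\ell}^{m+1}f(z_{i},z_{\ell})$. Matching this with the $\ell=1$ summand requires the rational-function identity
\[
1+\sum_{\ell=2}^{m+1}g(z_{1},z_{\ell})\prod_{\substack{i=2\\ i\ne\ell}}^{m+1}f(z_{i},z_{\ell})=\prod_{i=2}^{m+1}f(z_{i},z_{1}).
\]
I expect this identity to be the only genuinely computational point: viewing both sides as rational functions of $z_{1}$ with the remaining variables fixed, one checks from the formulas $g(z,w)=-(1-q^{2})z/(z-w)$ and $f(z,w)=(z-q^{2}w)/(z-w)$ in \eqref{eq:def-fg} that they have the same simple poles at $z_{1}=z_{\ell}$ with equal residues $-(1-q^{2})z_{\ell}\prod_{i\ne\ell}f(z_{i},z_{\ell})$, and that both equal $1$ at $z_{1}=0$; hence they coincide. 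The main obstacle is thus not an isolated hard step but the careful bookkeeping of which terms carry $\tilde{A}_{a}(z_{\ell})$ versus $\tilde{A}_{a}(z_{1})$, together with this telescoping identity; once these are in place the induction closes.
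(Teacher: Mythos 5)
Your proof is correct and follows essentially the same route as the paper's: both rest on the one-step relation $C_{a}(z)\beta_{a,1}^{*}\equiv\beta_{a,1}^{*}C_{a}(z)+(1-q^{2})z\tilde{A}_{a}(z)$ extracted from \eqref{eq:recursion-C1}, on Lemma \ref{lem:commrel-CCAtilde} to move $\tilde{A}_{a}$ to the left, and on a telescoping identity in $f$ and $g$ of exactly the kind you state. The only difference is organizational --- you induct on $m$ and therefore only ever need the $m=1$ instance of Lemma \ref{lem:commrel-CCAtilde}, whereas the paper iterates the one-step relation across all positions and then applies the lemma at full strength --- and your explicit observation that the right ideal is stable under left multiplication by $C_{a}(w)$ makes precise a point the paper leaves implicit.
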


\begin{proof}
The recurrence relation \eqref{eq:recursion-C1} implies that 
\begin{align*}
C_{a}(z)\beta_{a, 1}^{*}\equiv\beta_{a, 1}^{*}C_{a}(z)+(1-q^{2})z\tilde{A}_{a}(z). 
\end{align*}
Using the above relation repeatedly we find that 
\begin{align*}
\prod_{i=1}^{m}C_{a}(z_{i}) \cdot \beta_{a, 1}^{*}\equiv
\beta_{a, 1}^{*}\prod_{i=1}^{m}C_{a}(z_{i})+
(1-q^2)\sum_{\ell=1}^{m}z_{\ell}\prod_{i=1}^{\ell-1}C_{a}(z_{i}) \cdot \tilde{A}_{a}(z_{\ell}) 
\prod_{i=\ell+1}^{m}C_{a}(z_{i}).  
\end{align*}
Move $\tilde{A}_{a}(z_{\ell})$ in the right hand side to the left using Lemma \ref{lem:commrel-CCAtilde}. 
Then the equality
\begin{align*}
\prod_{i=1}^{\ell-1}f(z_{i}, z_{\ell})+\sum_{t=\ell+1}^{m}g(z_{\ell}, z_{t})
\prod_{\begin{subarray}{c} i=1 \\ i\not=\ell \end{subarray}}^{t-1}f(z_{i}, z_{\ell})=
\prod_{\begin{subarray}{c} i=1 \\ i\not=\ell \end{subarray}}^{m}f(z_{i}, z_{\ell}) 
\end{align*}
implies the desired formula. 
\end{proof}

Suppose that $1 \le a<r$ and $m_{r}, m_{r-1}, \ldots , m_{a+1}$ are non-negative integers. 
Set 
\begin{align*}
C^{(m_{r}, m_{r-1}, \ldots , m_{a+1})}(\vec{z})=
\prod_{a+1\le p \le r}^{\curvearrowleft}
\prod_{i=m_{r}+\cdots +m_{p+1}+1}^{m_{r}+\cdots +m_{p}}C_{p}(z_{i})  
\end{align*}
for $\vec{z} \in \mathbb{C}^{m_{r}+\cdots +m_{a+1}}$.  

\begin{lem}\label{lem:A-tilde-move}
Suppose that $1\le a<r$ and 
$k_{r}, k_{r-1}, \ldots , k_{a+1}$ are non-negative integers. 
For $a<p\le r$, set 
$J_{p}=\{ i \in \mathbb{Z} \,| \, k_{r}+\cdots +k_{p+1}<i \le k_{r}+\cdots +k_{p}\}$. 
Then it holds that 
\begin{align*}
& 
C^{(k_{r}, k_{r-1}, \ldots, k_{a+1})}(\vec{z})\tilde{A}_{a}(w) 
\equiv 
(\prod_{i \in J_{r}\cup \cdots \cup J_{a+1}}f(z_{i}, w) )
\tilde{A}_{a}(w) C^{(k_{r}, k_{r-1}, \ldots, k_{a+1})}(\vec{z}) \\ 
&+\sum_{t=1}^{r-a}
\sum_{r\ge p(t)>\cdots >p(1)>a} 
\sum_{\ell(t) \in J_{p(t)}}\cdots \sum_{\ell(1) \in J_{p(1)}}
\frac{z_{\ell(t)}}{w} \, \widetilde{A}_{a}(z_{\ell(t)}) 
\prod_{s=1}^{t}g(z_{\ell(s)}, z_{\ell(s-1)}) \\ 
&\quad {}\times 
\prod_{s=0}^{t}(
\prod_{\begin{subarray}{c} i \in J_{p(s+1)-1}\cup \cdots \cup J_{p(s)} \\ i\not=\ell(s) \end{subarray}}
f(z_{i}, z_{\ell(s)}) \, 
C^{(k_{r}, k_{r-1}, \ldots , k_{a+1})}(\vec{z}(\ell(t), \ldots , \ell(1); w)) 
\end{align*} 
modulo the right ideal $\sum_{p=1}^{r-1}\beta_{p, 1}^{*}\mathcal{B}^{[1, M]}$ for 
$\vec{z} \in \mathbb{C}^{k_{r}+\cdots +k_{a+1}}$, 
where $z_{\ell(0)}=w, J_{p(t+1)-1}=J_{r}, J_{p(0)}=J_{a+1}$ and 
$\vec{z}(\ell(t),  \ldots , \ell(1); w) \in \mathbb{C}^{k_{r}+\cdots +k_{a+1}}$ is defined by 
\begin{align*}
& 
\vec{z}(\ell(t),  \ldots , \ell(1); w)_{i}=\left\{
\begin{array}{ll}
z_{i} & (i\not=\ell(t), \ldots , \ell(1)) \\ 
z_{\ell(s-1)} & (i=\ell(s), 2\le s \le t) \\ 
w & (i=\ell(1)).  
\end{array}
\right.  
\end{align*}
\end{lem}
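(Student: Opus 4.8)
The plan is to prove the identity by induction on $r$ with $a$ fixed (equivalently, on the number $r-a$ of color blocks appearing in $C^{(k_{r}, \ldots, k_{a+1})}(\vec{z})$), peeling off the leftmost block, which carries the highest color $r$. The base case $r=a+1$ consists of a single block $\prod_{i \in J_{a+1}}C_{a+1}(z_{i})$, and the asserted formula reduces exactly to Lemma \ref{lem:commrel-CCAtilde} with $b=a+1$: only the chains with $t=0$ (the straight-through term) and $t=1$, $p(1)=a+1$, survive. Throughout, the operator being commuted remains $\tilde{A}_{a}$ with the \emph{same} subscript $a$, which is what makes peeling from the top (rather than the bottom) the right move: if one tried to peel the color-$(a+1)$ block first, one would be left having to commute $\tilde{A}_{a}$ --- not $\tilde{A}_{a+1}$ --- past the remaining blocks, and the inductive hypothesis would not apply.

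For the inductive step I would write $C^{(k_{r}, \ldots, k_{a+1})}(\vec{z}) = \bigl(\prod_{i \in J_{r}}C_{r}(z_{i})\bigr)\,C^{(k_{r-1}, \ldots, k_{a+1})}(\vec{z}'')$ and apply the inductive hypothesis to $C^{(k_{r-1}, \ldots, k_{a+1})}(\vec{z}'')\,\tilde{A}_{a}(w)$. This produces a straight-through term together with chain terms indexed by $r-1 \ge p(t) > \cdots > p(1) > a$, each of the shape $[\text{coeff}]\,\tilde{A}_{a}(z_{\ell(t)})\,C^{(k_{r-1}, \ldots, k_{a+1})}(\cdots)$. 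In every such term the factor $\prod_{i \in J_{r}}C_{r}(z_{i})$ still sits on the far left, so I would invoke Lemma \ref{lem:commrel-CCAtilde} once more, now with $b=r$ and $w$ replaced by the current argument of $\tilde{A}_{a}$, to move the color-$r$ block past $\tilde{A}_{a}(\cdot)$. Its straight-through part leaves the chain length unchanged but multiplies by $\prod_{i \in J_{r}}f(z_{i}, z_{\ell(t)})$, which merges with the existing top $f$-segment over $J_{r-1}\cup\cdots\cup J_{p(t)}$ to give the segment over $J_{r}\cup\cdots\cup J_{p(t)}$ demanded by the convention $J_{p(t+1)-1}=J_{r}$; its correction part prepends a new maximal chain element $p(t+1)=r$ with $\ell(t+1)\in J_{r}$, supplying the missing $g(z_{\ell(t+1)}, z_{\ell(t)})$, the $f$-segment over $J_{r}$, and the argument replacement $z_{\ell(t+1)}\mapsto z_{\ell(t)}$ in the color-$r$ block. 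Applied to the straight-through term of the hypothesis ($t=0$, argument $w$), the same two outcomes reproduce the new straight-through term and the length-one chains with $p(1)=r$.

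The congruence bookkeeping needs a remark: the inductive hypothesis is an equivalence modulo $\sum_{p=1}^{r-2}\beta_{p,1}^{*}\mathcal{B}^{[1,M]}$, I multiply it on the left by $\prod_{i \in J_{r}}C_{r}(z_{i})$, and the last application of Lemma \ref{lem:commrel-CCAtilde} holds modulo $\sum_{p=1}^{r-1}\beta_{p,1}^{*}\mathcal{B}^{[1,M]}$. Left multiplication preserves the relevant congruence because, by Lemma \ref{lem:commrel-C-beta*}, $C_{r}(z)\beta_{p,1}^{*}=\beta_{p,1}^{*}C_{r}(z)$ for every $p<r$, so $\prod_{i\in J_{r}}C_{r}(z_{i})$ maps each $\beta_{p,1}^{*}\mathcal{B}^{[1,M]}$ ($p<r$) into itself; all intermediate right ideals are thus contained in $\sum_{p=1}^{r-1}\beta_{p,1}^{*}\mathcal{B}^{[1,M]}$, which is the modulus in the final statement.

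The main obstacle is the purely combinatorial recombination in the second paragraph: one must check that, after collecting the two outcomes of the color-$r$ move over all chains produced by the hypothesis, the scalar coefficients assemble exactly into the stated form. Concretely this amounts to three telescoping/merging checks --- that the prefactors $\tfrac{z_{\ell(t)}}{w}$ from the hypothesis and $\tfrac{z_{\ell(t+1)}}{z_{\ell(t)}}$ from Lemma \ref{lem:commrel-CCAtilde} telescope to $\tfrac{z_{\ell(t+1)}}{w}$, that the $f$-products over the disjoint blocks union correctly into the single segment products $\prod_{i \in J_{p(s+1)-1}\cup\cdots\cup J_{p(s)},\, i\neq \ell(s)}f(z_{i}, z_{\ell(s)})$, and that the per-term argument shifts combine into the single shifted vector $\vec{z}(\ell(t), \ldots, \ell(1); w)$ of the statement (prepending the top element only introduces the replacement $z_{\ell(t+1)}\mapsto z_{\ell(t)}$ and leaves the lower shifts untouched). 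These are the same kinds of identities already used in Lemmas \ref{lem:Yba}--\ref{lem:commrel-CCAtilde}, so no new algebraic input beyond the $f,g$-type relations is expected; the difficulty is entirely in tracking indices cleanly.
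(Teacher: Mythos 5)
Your proof is correct, and it uses the same engine as the paper --- repeated application of Lemma \ref{lem:commrel-CCAtilde} inside an induction on the number of colour blocks --- but it runs the induction in the opposite direction. The paper's proof is the single line ``by reverse induction on $a$'', i.e.\ it passes from the statement for $C^{(k_{r},\ldots,k_{a+2})}\tilde{A}_{a+1}$ to that for $C^{(k_{r},\ldots,k_{a+1})}\tilde{A}_{a}$, peeling off the \emph{rightmost} (colour-$(a+1)$) block; you instead fix $a$, hence fix the operator $\tilde{A}_{a}$, and peel off the \emph{leftmost} (colour-$r$) block. Your parenthetical claim that bottom-peeling would leave one ``having to commute $\tilde{A}_{a}$ --- not $\tilde{A}_{a+1}$ --- past the remaining blocks'' so that ``the inductive hypothesis would not apply'' is too strong: the paper's route does go through, because $\tilde{A}_{a}(v)=q^{2N_{a,1}}\tilde{A}_{a+1}(v)$ and $q^{2N_{a,1}}$ commutes with every $C_{b}$, $b>a$, modulo the right ideal $\beta_{a,1}^{*}\mathcal{B}^{[1,M]}$ (the only site-$1$ generator in $C_{b}$ that fails to commute with $q^{2N_{a,1}}$ is $\beta_{a,1}^{*}$, which can be pushed to the far left). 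What your choice buys is precisely that you never need this extra commutation argument, and the chain grows by prepending its maximal element $p(t+1)=r$, which makes the telescoping of the prefactors $z_{\ell(t)}/w$, the merging of the $f$-segments under the convention $J_{p(t+1)-1}=J_{r}$, and the composition of the substitutions into $\vec{z}(\ell(t),\ldots,\ell(1);w)$ entirely mechanical; your bookkeeping of the right ideals via Lemma \ref{lem:commrel-C-beta*} is also exactly what is needed. (You are implicitly, and correctly, reading the last factor of the correction term in Lemma \ref{lem:commrel-CCAtilde} as $\prod_{i\neq\ell}C_{b}(z_{i})$ rather than the misprinted $\prod_{i\neq\ell}C_{a}(z_{i})$.)
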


\begin{proof}
By reverse induction on $a$ using Lemma \ref{lem:commrel-CCAtilde}.  
\end{proof}

\begin{prop}\label{prop:beta-star-move-vac}
Suppose that $1\le a \le r$ and that  
$k_{r}, k_{r-1}, \ldots , k_{a+1} \in \mathbb{Z}_{\ge 0}$ and $k_{a} \in \mathbb{Z}_{\ge 1}$. 
Set 
$J_{p}=\{ i \in \mathbb{Z} \,| \, k_{r}+\cdots +k_{p+1}<i \le k_{r}+\cdots +k_{p}\} \, (a\le p \le r)$. 
Then it holds that 
\begin{align*}
& 
{}_{[1, M]}\langle \mathrm{vac} | 
C^{(k_{r}, \ldots , k_{a})}(\vec{z}) \beta_{a, 1}^{*} \\ 
&=(1-q^{2})\sum_{t=0}^{r-a}\sum_{r\ge p(t)>\cdots >p(1)>p(0)=a} 
\sum_{\ell(t) \in J_{p(t)}} \cdots \sum_{\ell(0) \in J_{p(0)}} 
z_{\ell(t)}(1+z_{\ell(t)})^{M-1} \\ 
& \qquad \qquad {}\times 
\prod_{s=1}^{t}g(z_{\ell(s)}, z_{\ell(s-1)}) 
\prod_{s=0}^{t}(
\prod_{\begin{subarray}{c} i \in J_{p(s+1)-1} \cup \cdots \cup J_{p(s)} \\ i\not=\ell(s) \end{subarray}}
f(z_{i}, z_{\ell(s)})) \\ 
& \qquad \qquad {}\times 
{}_{[1, M]}\langle \mathrm{vac} | C^{(k_{r}, \ldots , k_{a+1}, k_{a}-1)}( 
\vec{z}(\ell(t), \ldots , \ell(0))
)  
\end{align*} 
for $\vec{z} \in \mathbb{C}^{k_{r}+\cdots +k_{a}}$, where $J_{p(t+1)-1}=J_{r}$ and 
$\vec{z}(\ell(t), \ldots , \ell(0)) \in \mathbb{C}^{k_{r}+\cdots +k_{a}-1}$ is defined by \eqref{eq:def-z-ell}. 
\end{prop}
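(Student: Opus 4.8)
The plan is to peel the creation operator $\beta_{a,1}^{*}$ leftwards through the ordered product $C^{(k_{r},\ldots,k_{a})}(\vec{z})$ until it reaches the dual vacuum ${}_{[1,M]}\langle\mathrm{vac}|$, exploiting that $\langle\mathbf{0}|\beta_{a}^{*}=0$, so that any term in which some $\beta_{p,1}^{*}$ ends up on the far left is annihilated. This is precisely why the congruences in Proposition~\ref{prop:beta-star-move} and Lemma~\ref{lem:A-tilde-move} are taken modulo the right ideals $\sum_{p}\beta_{p,1}^{*}\mathcal{B}^{[1,M]}$: every neglected term carries a left factor $\beta_{p,1}^{*}$ and hence drops out once we pair it with ${}_{[1,M]}\langle\mathrm{vac}|$ on the left. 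Throughout I would split $C^{(k_{r},\ldots,k_{a})}(\vec{z})=C^{(k_{r},\ldots,k_{a+1})}(\vec{z})\prod_{i\in J_{a}}C_{a}(z_{i})$, isolating the colour-$a$ block on the right.

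First I would apply Proposition~\ref{prop:beta-star-move} to the colour-$a$ block, writing $\prod_{i\in J_{a}}C_{a}(z_{i})\,\beta_{a,1}^{*}$ as $\beta_{a,1}^{*}\prod_{i\in J_{a}}C_{a}(z_{i})$ plus correction terms indexed by $\ell(0)\in J_{a}$, each carrying the operator $\tilde{A}_{a}(z_{\ell(0)})$, a coefficient $(1-q^{2})z_{\ell(0)}$ and the product $\prod_{i\in J_{a},\,i\ne\ell(0)}f(z_{i},z_{\ell(0)})$. The straight-through term drops: by Lemma~\ref{lem:commrel-C-beta*} the operator $\beta_{a,1}^{*}$ commutes with every $C_{p}(z)$ occurring in $C^{(k_{r},\ldots,k_{a+1})}(\vec{z})$ (all with $p>a$), so it slides to the far left and is killed by ${}_{[1,M]}\langle\mathrm{vac}|$. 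What survives is a sum over $\ell(0)\in J_{a}$ of ${}_{[1,M]}\langle\mathrm{vac}|\,C^{(k_{r},\ldots,k_{a+1})}(\vec{z})\,\tilde{A}_{a}(z_{\ell(0)})\prod_{i\in J_{a},\,i\ne\ell(0)}C_{a}(z_{i})$, weighted as above.

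Next I would move $\tilde{A}_{a}(z_{\ell(0)})$ to the left past $C^{(k_{r},\ldots,k_{a+1})}(\vec{z})$ using Lemma~\ref{lem:A-tilde-move} with $w=z_{\ell(0)}$; the convention $z_{\ell(0)}=w$ there dovetails exactly with the index $\ell(0)\in J_{a}$ just produced, and the lemma generates the nested sum over $t$, over $r\ge p(t)>\cdots>p(1)>a=p(0)$ and over $\ell(t),\ldots,\ell(1)$, together with the factors $\prod_{s=1}^{t}g(z_{\ell(s)},z_{\ell(s-1)})$ and the $f$-products. In every resulting term $\tilde{A}_{a}(z_{\ell(t)})$ now stands at the far left, so I would evaluate it on the dual vacuum. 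The one genuine computation is
\begin{align*}
{}_{[1,M]}\langle\mathrm{vac}|\,\tilde{A}_{a}(w)=(1+w)^{M-1}\,{}_{[1,M]}\langle\mathrm{vac}|,
\end{align*}
which I would prove by iterating the recurrence \eqref{eq:recursion-A} in an induction that peels sites off the left end of the interval: the off-diagonal terms $\beta_{p,M'}^{*}C_{p}(z)$ are annihilated by $\langle\mathbf{0}|$ at the leftmost site, while $L(z)_{00}=1+zq^{2\sum_{p}N_{p}}$ contributes a factor $1+w$ per site of $[2,M]$, and the leading factor $q^{2\sum_{p\ge a}N_{p,1}}$ in $\tilde{A}_{a}(w)$ acts as the identity on $\langle\mathbf{0}|$ at site $1$. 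This turns $z_{\ell(0)}\cdot(z_{\ell(t)}/w)\cdot{}_{[1,M]}\langle\mathrm{vac}|\tilde{A}_{a}(z_{\ell(t)})$ into the clean weight $z_{\ell(t)}(1+z_{\ell(t)})^{M-1}$ of the statement.

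It then remains to match the surviving $C$-operator and its spectral parameters against the right-hand side of the claim, and this bookkeeping is where I expect the real work to lie. I would merge the factor $\prod_{i\in J_{a},\,i\ne\ell(0)}f(z_{i},z_{\ell(0)})$ from the first step with the $s=0$ factor of Lemma~\ref{lem:A-tilde-move} (whose range is $J_{p(1)-1}\cup\cdots\cup J_{a+1}$) to obtain the full product over $J_{p(1)-1}\cup\cdots\cup J_{p(0)}$ with $p(0)=a$, as demanded. Likewise the leftover colour-$a$ factors $\prod_{i\in J_{a},\,i\ne\ell(0)}C_{a}(z_{i})$ must recombine with the shifted block $C^{(k_{r},\ldots,k_{a+1})}(\vec{z}(\ell(t),\ldots,\ell(1);w))$ coming from the lemma into a single $C^{(k_{r},\ldots,k_{a+1},k_{a}-1)}$ whose argument is exactly $\vec{z}(\ell(t),\ldots,\ell(0))$ as in \eqref{eq:def-z-ell}; checking that the cyclic reassignment $z_{\ell(s)}\mapsto z_{\ell(s-1)}$ together with deletion of the $\ell(0)$-slot reproduces \eqref{eq:def-z-ell} after reindexing is the only delicate point. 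The degenerate case $a=r$ (empty higher-colour block, only $t=0$ surviving) serves as a useful consistency check.
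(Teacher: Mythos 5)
Your proposal is correct and follows essentially the same route as the paper: split off the colour-$a$ block, apply Proposition \ref{prop:beta-star-move} (with the straight-through term killed via Lemma \ref{lem:commrel-C-beta*} and ${}_{[1,M]}\langle \mathrm{vac}|\beta_{p,1}^{*}=0$), push $\tilde{A}_{a}$ left with Lemma \ref{lem:A-tilde-move}, and evaluate ${}_{[1,M]}\langle \mathrm{vac}|\tilde{A}_{a}(w)=(1+w)^{M-1}{}_{[1,M]}\langle \mathrm{vac}|$ from \eqref{eq:recursion-A}. The final bookkeeping of $f$-factors and spectral parameters that you flag as the delicate point is likewise left implicit in the paper's own proof.
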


\begin{proof}
Note that ${}_{[1, M]}\langle \mathrm{vac} |\beta_{p, 1}^{*}=0$ for any $1\le p \le r$. 
{}From Lemma \ref{lem:commrel-C-beta*} and Proposition \ref{prop:beta-star-move} 
we see that the left hand side is equal to 
\begin{align*}
(1-q^2)\sum_{\ell \in J_{a}}z_{\ell}
\prod_{i \in J_{a}\setminus\{\ell\}}f(z_{i}, z_{\ell}) \cdot 
{}_{[1, M]}\langle \mathrm{vac} | 
C^{(k_{r}, \ldots , k_{a+1})}(\vec{z}) 
\tilde{A}_{a}(z_{\ell}) 
\prod_{i \in J_{a}\setminus\{\ell\}}C_{a}(z_{i}).  
\end{align*}
Move $\tilde{A}_{a}(z_{\ell})$ to the left by using Lemma \ref{lem:A-tilde-move}. 
{}From \eqref{eq:recursion-A} we have 
\begin{align*}
{}_{[1, M]}\langle \mathrm{vac} | \tilde{A}(w)=(1+w)^{M-1}{}_{[1, M]}\langle \mathrm{vac} |.  
\end{align*}
Thus we obtain the desired formula. 
\end{proof}

Now we prove the recurrence relation \eqref{eq:recursion} for $\psi^{\vec{\mu}}_{\vec{z}}$. 

\begin{proof}[Proof of Theorem \ref{thm:recursion} for $\psi^{\vec{\mu}}_{\vec{z}}$]
Set $\vec{\mu}=(r^{k_{r}}, \ldots , 1^{k_{1}})$. 
Decompose the sum over $\vec{\nu}$ in the definition of 
$\psi^{\vec{\mu}}_{\vec{z}}(x_{1}, \ldots , x_{k-1}, 1)$ into $r$ parts with $\nu_{k}=a \, (1\le a \le r)$. 
Then Lemma \ref{lem:commrel-C-beta*} implies that 
\begin{align*}
& 
\psi_{\vec{z}}^{\vec{\mu}}(x_{1}, \ldots , x_{k-1}, 1)=(1-q^{2})^{-k}\prod_{i=1}^{k}\frac{1}{(1+z_{i})^{M}} \\ 
&\times 
\sum_{a=1}^{r}q^{2\sum_{p=1}^{a-1}k_{p}}
\sum_{\begin{subarray}{c} \vec{\nu} \in I_{k_{r}, \ldots , k_{1}} \\ \nu_{k}=a \end{subarray}} 
\langle C^{(k_{r}, \ldots , k_{a})}(\vec{z})\beta_{a, 1}^{*} 
C^{(k_{a-1}, \ldots , k_{1})}(\overrightarrow{z[a]}) 
\prod_{i=1}^{k-1}\beta_{\nu_{i}, x_{i}}^{*}\rangle_{[1, M]} \\ 
& \quad {}\times 
q^{t(\vec{\mu})-t(\vec{\nu})}u_{\nu_{1}} \otimes \cdots \otimes u_{\nu_{k-1}} \otimes u_{a},  
\end{align*}
where $\overrightarrow{z[a]} \in \mathbb{C}^{k_{a-1}+\cdots +k_{1}}$ is defined by 
$\overrightarrow{z[a]}_{i}=z_{i+k_{r}+\cdots +k_{a}}$. 
Now move $\beta_{a, 1}^{*}$ to the left using Proposition \ref{prop:beta-star-move-vac}.  
Note that 
\begin{align*}
t(\vec{\mu})-t(\vec{\nu})=t(r^{k_{r}}, \ldots , a^{k_{1}-1}, \ldots , 1^{k_{1}})-t(\nu_{1}, \ldots , \nu_{k-1})-
\sum_{b=1}^{a-1}k_{b} 
\end{align*} 
for $\vec{\nu} \in I_{k_{r}, \ldots , k_{1}}$ such that $\nu_{k}=a$.  
Thus we get the relation \eqref{eq:recursion} for $\psi_{\vec{z}}^{\vec{\mu}}$. 
\end{proof}


\section*{Acknowledgments}

The research of the author is supported by 
JSPS KAKENHI Grant Number 26400106. 
The author thanks the referee for valuable comments.

\end{document}